\definecolor{mygray}{gray}{0.1}
\definecolor{DarkMidnightBlue}{rgb}{0.0, 0.2, 0.5}
\def\lstAZ{A, B, C, D, E, F, G, H, I, J, K, L, M, N, O, P, Q, R, S, T, U, V, W, X, Y, Z}
\def\lstaz{a, b, c, d, e, f, g, h, i, j, k, l, m, n, o, p, q, r, s, t, u, v, w, x, y, z}
\def\lstAZBB{B, C, D, E, F, G, H, I, J, K, L, M, N, O, P, Q, R, T, U, V, W, X, Y, Z}
\newcommand{\MkScr}[1]{\expandafter\def\csname s#1\endcsname{\mathscr{#1}}}
\newcommand{\MkUp}[1]{\expandafter\def\csname u#1\endcsname{\mathrm{#1}}}
\newcommand{\MkFrak}[1]{\expandafter\def\csname f#1\endcsname{\mathfrak{#1}}}
\newcommand{\MkCal}[1]{\expandafter\def\csname c#1\endcsname{\mathcal{#1}}}
\newcommand{\MkBB}[1]{\expandafter\def\csname #1#1\endcsname{\mathbb{#1}}}
\lstAZ\do{%
	\expandafter\MkScr \i  %
	\expandafter\MkFrak \i  %
	\expandafter\MkUp \i %
	\expandafter\MkCal \i  %
		  }    
\lstaz\do{%
	\expandafter\MkUp \i   }    
\lstAZBB\do{%
	\expandafter\MkBB \i     }
\newcommand{\rr}{\mathrm{r}}
\newcommand{\bb}{\mathrm{b}}
\newcommand{\Vol}{\mathrm{Vol}}
\renewcommand{\P}{\mathrm{P}}
\newcommand{\ind}{\mathbf{1}}
\newcommand{\dd}{ \mathrm{d}}
\DeclareMathOperator*{\argmin}{argmin}
\renewcommand{\epsilon}{\varepsilon}
\renewcommand{\tilde}{\widetilde}
\numberwithin{equation}{section}
\newtheorem{theorem}{Theorem}[section]
\newtheorem{lemma}[theorem]{Lemma}
\newtheorem{proposition}[theorem]{Proposition}
\newtheorem{corollary}[theorem]{Corollary}
\newtheorem{definition}[theorem]{Definition}
\newtheorem{remark}[theorem]{Remark}
\newtheorem{assumption}[theorem]{Assumption}
\newtheorem{example}[theorem]{Example}
\newcommand{\R}{\mathbb R}
\definecolor{lime}{HTML}{A6CE39}
\DeclareRobustCommand{\orcidicon}{%
	\begin{tikzpicture}
	\draw[lime, fill=lime] (0,0) 
	circle [radius=0.16] 
	node[white] {{\fontfamily{qag}\selectfont \tiny ID}};
	\draw[white, fill=white] (-0.0625,0.095) 
	circle [radius=0.007];
	\end{tikzpicture}
	\hspace{-2mm}
}
\xdef\csname orcid\x\endcsname{\noexpand\href{https://orcid.org/\csname orcidauthor\x\endcsname}{\noexpand\orcidicon}}
\title{Methods and applications of PDMP samplers\\with boundary conditions}
\author[1]{Joris Bierkens}
\author[2]{Sebastiano Grazzi}
\author[2]{Gareth Roberts}
\author[3]{Moritz Schauer}
\affil[1]{Delft Institute of Applied Mathematics (DIAM), TU Delft, The Netherlands}
\affil[2]{Department of Statistics, University of Warwick, UK}
\affil[3]{Department of Mathematical Sciences Chalmers University of Technology and University of Gothenburg, Sweden}
\date{\today}
\pgfplotsset{compat=1.17}
\begin{document}
\maketitle
\begin{abstract}
    \noindent We extend Monte Carlo samplers based on piecewise deterministic Markov processes (PDMP samplers) by formally defining different boundary conditions such as \emph{sticky floors}, \emph{soft} and \emph{hard walls} and \emph{teleportation portals}. This allows PDMP samplers to target measures with piecewise-smooth densities relative to mixtures of Dirac and continuous components and measures supported on disconnected regions or regions which are difficult to reach with continuous paths. This is achieved by specifying the transition kernel which governs the behaviour of standard PDMPs when reaching a boundary. We determine a sufficient condition for the kernel at the boundary in terms of the \emph{skew-detailed balance condition} and give concrete examples. The probabilities to cross a boundary can be tuned by introducing a piecewise constant speed-up function which modifies the velocity of the process upon crossing the boundary without  extra computational cost. We apply this new class of processes to two illustrative applications in epidemiology and statistical mechanics. 
\end{abstract}
Keywords: \emph{Piecewise deterministic Markov processes, Monte Carlo, boundary conditions, discontinuous densities, epidemic models, speed-up, teleportation}

\section{Overview}\label{sec: introduction}
\subsection{Introduction}
Markov Chain Monte Carlo  (MCMC) methods are vital tools used to explore distributions in Bayesian inference and other applications. They involve simulating Markov chains which converge in distribution to the desired target measure. The efficiency of MCMC is determined by the convergence properties of the resulting Markov chain and the computational cost involved in simulating it. For these reasons, recent attention has been drawn to Monte Carlo methods based on continuous-time piecewise deterministic Markov processes (PDMP samplers),  see \textcite{vanetti2017piecewisedeterministic} for an overview. PDMP samplers are non-reversible continuous-time Markov processes endowed with \emph{momentum} and characterized by deterministic dynamics interrupted by  a finite collection of random events at which the process jumps in location and/or momentum. This class of processes have been shown to have good mixing properties (fast convergence to the target measure, see for example \cite{diaconis2000analysis}), have low asymptotic variance (see for example \cite{CHEN20131956}), and can be simulated exactly in continuous time (up to floating point precision). Another attractive feature of PDMP samplers is that they allow the substitution of the gradient of the target log-density by an unbiased estimate thereof without introducing bias. This technique is refereed as \emph{exact subsampling} and leads to efficient simulations in difficult scenarios e.g. it has been exploited in regression problems with large sample size (\cite{bierkens2019}, \cite{bierkens2020boomerang}, \cite{bierkens2021sticky}) or for high-dimensional problems with intractable densities (\cite{bierkens2020piecewise}). 

Current work has almost exclusively concentrated on smooth differential target densities with respect to a fixed dimensional Lebesgue measure and in infinite domain without the presence of natural boundaries. In that context, a rather complete theory can be established, see \textcite{vanetti2017piecewisedeterministic}. However many interesting applications lie outside that area, for example when using Bayesian Lasso methods for variable selection (\cite{park2008bayesian}) and for hard-sphere stochastic geometry models. For the latter application, PDMP samplers have gained popularity and have been extensively studied in the field of statistical physics under the name of \emph{event-chain Monte Carlo} methods, see for example \textcite{PhysRev},  \textcite{michel2014generalized}, \textcite{10.3389/fphy.2021.663457}.
There is therefore a compelling need to provide a general framework for PDMPs outside the smooth density setting. 
This article will address this need, providing both a theoretical framework and a general and practical methodology for PDMPs in the non-smooth case.

The contributions of this work are:
\begin{itemize}
    \item We provide a general framework for the construction of PDMPs targeting a density with discontinuities along a boundary. We show how the behaviour of the target density at the boundary of its support gives rise to a simple condition at the boundary in  terms of skew-detailed balance which ensure the invariance of the target density. 
    \item We provide rigorous theory to underpin our framework which includes results on invariance and ergodicity for our generalised PDMP processes.
    \item The introduced framework enables us to precisely define complex behaviors of the process at the boundary, giving rise to a variety of distinct features informally referred here as \emph{soft} and \emph{hard} walls, \emph{teleportation portals}, and \emph{sticky} floors.
 \item We provide an analogous framework for PDMPs to target densities supported on disconnected sets, and those
which have a density relative to mixtures of continuous and atomic components. 
    \item We provide a practical and efficient methodology for implementing these generalised PDMP samplers and provide simple illustrations, including  sampling the invariant measure in hard-sphere stochastic geometry models. 
    \item Our methodology allows us to apply PDMP samplers with boundary conditions to a completely novel  example: for sampling the latent space of infected times with unknown infected population size in the SIR epidemic model with notifications.

\end{itemize}
Our approach generalizes \textcite{Bierkens_2018} which defines the Bouncy Particle Sampler (a popular PDMP sampler) for target densities on restricted domains. In this work, we allow PDMPs to change speed (magnitude of their velocity) according to a \emph{speed-up} function.
The use of speed-up functions was initially considered in  \textcite{vasdekis2021speed} for sampling efficiently heavy tailed distributions with Zig-Zag samplers. \textcite{vasdekis2021speed} only considered the class of continuous speed-up functions for which the deterministic dynamics could be computed analytically. In this work, we extend this class by considering functions which are discontinuous at the boundary. We combine our framework with the sticky events presented in \cite{bierkens2021sticky} for targeting measures which are also mixture of continuous and atomic component. The framework considered allows to define boundaries which acts as teleportation portals, allowing the process to jump in space along the given boundaries. The teleportation portals defined in this work are reminiscent of the approach of \textcite{moriarty2020metropolis} for sampling measures in a non-convex and disconnected space with a Metropolis-Hastings algorithm.
Another approach based on the Hamiltonian Monte Carlo (HMC) sampler for sampling discontinuous densities is given by \textcite{Nishimura_2020}. In contrast with HMC, the PDMPs presented here can be simulated in continuous time, without relying on discretization methods.

Our work complements the theoretical study of \textcite{chevallier2021pdmp}, which derives rigorously the invariant measure for PDMPs with boundary conditions.
Contrary to their work, our invariance theory takes advantage of a natural skew-detailed balance condition on the boundary, and this enables us to consider more general classes of PDMPs, in particular incorporating jumps in location (introducing teleportation portals), and allowing us to change speed by means of a speed-up function which can be discontinuous at the boundary. Similarly to \textcite{chevallier2021pdmp}, we propose a Metropolised step of PDMPs at the boundary and give concrete examples such as the Bouncy particle samplers and the Zig-Zag sampler. In our work, we also discuss ergodicity, we extend the Lebesgue reference measure to mixtures of Dirac and continuous components, and include two challenging applications of independent interest.

\subsection{An illustrative example}\label{sec: toy}
PDMP samplers can be used for targeting a wide class of multi-dimensional measures. In this section, we give an artificial example which informally illustrates the rich behaviour of the class of processes considered in this article.

As an example of a PDMP sampler we consider the Zig-Zag sampler with boundary conditions, featuring a rich behaviour given by random events of different nature. Figure~\ref{fig: teaser} displays the first two coordinates of a simulated trajectory. Below, we informally distinguish and comment on each of those random events and link them to their specific role in targeting an (artificially chosen) measure on $\RR^d$, with piecewise-smooth density proportional to $\exp(-\Psi(x))$ with
 \begin{equation}\label{eq: density showcase}
 \Psi(x) := x'\Gamma x - \sum_{i=1}^d \ind_{(x_i > 1/2)} c - \sum_{i=1}^{\lfloor d/2\rfloor}\log(x_{2i-1}) 
 \end{equation}
 relative to a reference measure
 \begin{equation}\label{eq: ref showcase}
 \prod_{i=1}^d \left(\ind_{(x_i \in [0,1])} \dd x_i + \delta_{0}(\dd x_i - \frac{1}{4})\right)
 \end{equation}
for a parameter $c>0$ and a matrix $\Gamma = 1.3 I + C 0.5$, where each element $C_{i,j}, \,i,j =1,2,\dots,d$ is 0 with probability 0.9 and an independent realization from $\cN(0,1)$ otherwise. For this simulation, we fixed the dimensionality $d = 80$.

\begin{itemize}
    \item \emph{(Random events and repelling walls)} Analogously to the standard Zig-Zag samplers, the process has constant velocity on the finite velocity  space $\{-1,+1\}^d$ and changes direction at random times by switching every time the sign of only one velocity component. This produces changes in direction and allows the process to target the smooth components of $\Psi(x)$ (the first and the third term of~\eqref{eq: density showcase}). In this example, the random events are such that  the position of the odd coordinates $\{(x_{i2 -1},v_{i2-1})\}_{i=1,2,\dots}$ can be arbitrarily close to 0, yet without ever touching $0$, giving rise to repelling walls. This is because the density vanishes on those hyper-planes. 
    \item \emph{(Sticky floors)} All coordinates $\{(x_i, v_i)\}_{i=1,2,\dots}$, upon hitting $\frac{1}{4}$, ``stick'' in that point for an exponentially distributed time. This corresponds to momentarily setting  the $i$th velocity component to $0$ and allows the process to spend positive time in hyper-planes $S_A = \bigotimes_{i=1}^d E_i$ where 
    \[
    E_i = \begin{cases}
        \{\frac{1}{4}\} & i \in A,\\
        \RR & \text{ otherwise.}
    \end{cases}
    \]
    for all $A \subset \{1,2,\dots,d\}$ (with some coefficients exactly equal to $1/4$). Sticky floors allow the process to target mixtures of continuous and atomic components and, in this example, allow to change the reference measure from a $d$-dimensional Lebesgue measure to~\eqref{eq: ref showcase}.
    \item \emph{(Soft walls)} The second term in~\eqref{eq: density showcase}) is such that  target density is discontinuous at $\frac{1}{2}$ in each coordinate. PDMPs targets these discontinuities by letting each coordinate process $(x_i,v_i),\, i = 1,2,\dots,d$ switch the sign of its velocity component  with some probability, upon hitting $\frac{1}{2}$ from below, while always crossing $\frac{1}{2}$ when hitting this point from above.
    \item \emph{(Hard walls)} The process switches always velocity at the boundaries $\{(x_{i2},v_{i2}) = (0,-1)\}_{i=1,2,\dots}$ and $\{((x_{i}, v_{i}) = (1, +1)\}_{i=1,2,\dots}$. This allows the process to explore only the regions in $\RR^d$ supported by the measure. 
\end{itemize}
Each bullet point in this list will be formalized and described in details in the subsequent sections. 

This is a constructed example which is of interest for multiple reasons:
\emph{i)} an efficient and local implementation of the Zig-Zag sampler can be adopted which greatly profits of the local dependence structure of $\mu$ implied by the sparse form of $\Gamma$ in \eqref{eq: density showcase} (see \cite[Section 4]{bierkens2020piecewise}); \emph{ii)} the continuous and atomic components of the reference measure (equation~\eqref{eq: ref showcase}) makes the sampling problem not trivial. A mixture of atomic and continuous components arises naturally for example in Bayesian variable selection with spike-and-slab priors. By including sticky events, PDMPs can efficiently sample from such mixture measures,  see \textcite{bierkens2021sticky} for more details;
\emph{iii)} As  $\lim_{y\downarrow 0} \Psi(x[2i \colon y]) = \infty,$\footnote{Here $y = x[i 
\colon c]$, for $c \in \RR$ means $y_j = c$ if $j = i$, $y_j = x_j$ otherwise.}  for $i=1,2,\dots$, the gradient of the log-likelihood explodes thus complicating the application of gradient-based Markov chain Monte Carlo methods; \emph{iv)} the discontinuities at $1/2$ and boundaries at $0$ and $1$ deteriorate the performance of ordinary MCMC (see \cite{neal2011mcmc}) and complicates the application of gradient-based methods, as the gradient is not defined at discontinuity. 
\begin{figure}[ht!]
    \centering
    \includegraphics[width=1\linewidth]{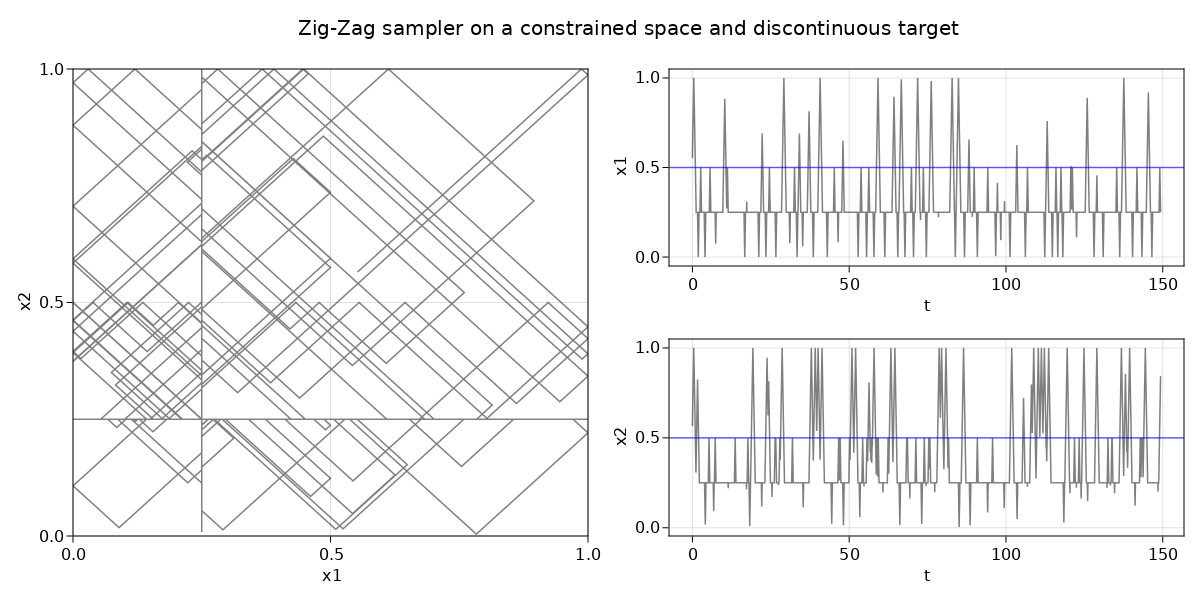}
    \caption{ $(x_1$-$x_2)$ phase space plot (left) and trace plots (right) of the first 2 coordinates of a Zig-Zag trajectory sampling a general density $f$ supported in $[0,1]^d$ with  discontinuity at $1/2$ (blue lines) in each coordinate with density vanishing at $x_2 = 0$. The reference measure has a Dirac mass at 1/4 in each coordinate.}
    \label{fig: teaser}
\end{figure}

\subsection{Outline}
Section~\ref{sec: pdmps samplers with boundaries}  presents the theoretical framework of PDMPs with boundary conditions. 
The invariant measure of PDMP samplers is established by reviewing classical conditions of the process made on the interior (Section~\ref{subsec: conditions on the interior}) and establishing new conditions at the boundary (Section~\ref{subsec: conditions at the boundary}). In Section~\ref{sec: ergodicity pdmp with boundaries} we discuss the limiting behaviour of the process and in Section~\ref{sec: sticky extensions} we extend the framework presented for considering dominating measures which are mixtures of atomic and continuous components.
We describe both the Zig-Zag samplers and the Bouncy particle samplers with boundary conditions in Section~\ref{sec: Zig-Zag sampler for discontinuous densities}-\ref{sec:  Bouncy Particle sampler with teleportation} and apply the former algorithm for the main application in epidemiology (Section~\ref{sec: epidemiology}) and the latter algorithm for the application in statistical mechanics (Section~\ref{sec: teleportation}). In Section~\ref{sec: discussion} we discuss promising research directions that arise from this work. 

\section{PDMP samplers with boundaries}\label{sec: pdmps samplers with boundaries}
In this section, we present an overview of PDMP samplers with boundary conditions, we review the known sufficient conditions which allow the process to target the smooth components of the density and establish new general conditions at the boundary based on \emph{skew-detailed balance} to target the discontinuity of the density at the boundary and to define teleportation portals. Remark~\ref{rmk: equvalence with weighted PDMP} highlights the connection between the speed-up function and importance weights.

We consider the problem of sampling from a measure $\mu$ defined on $\Omega = \bigsqcup_{i\in K} \Omega_i$ where $\Omega_i\subset \RR^d, \, i \in K$ are disjoint open sets and $K$ is a countable set. The measure $\mu$ is assumed to be of the form
\begin{equation}
    \label{eq: general form target measure}
    \mu(\dd x) := C\exp(-\Psi(x))\prod_{i=1}^d \mu_i(\dd x_i)
\end{equation} 
where 
\begin{equation} \label{eq: dominated measure}
\mu_i(\dd x_i) := \begin{cases} \dd x_i & \text{ if } i\in A,\\
\dd x_i + \frac{1}{\kappa_i(x)}\delta_{c_i}(\dd x_i) & \text{ otherwise,}
    \end{cases}
\end{equation}
for a set $A \subset \{1,2,\dots,d\}$, some positive constants $C>0,\, c_i>0$,  functions $\kappa_i \colon \Omega \to \RR^+ \setminus\{0\}$, for $i \notin A$ and a function $\Psi\colon \Omega \to \RR$.

We make the following assumptions on the space $\Omega$ and the target measure $\mu$: 
\begin{enumerate}
    \item For every $i \in K$, we assume that each boundary $\partial \Omega_i := \overline \Omega_i \setminus \Omega_i$ is a subset of  the union of $(d-1)$-dimensional Riemannian manifolds. Formally, $\partial \Omega_i \subset \cup_{j \in K_i} \Gamma_{j}$, where each $\Gamma_{j}$ is a Riemannian manifold equipped with the metric tensor $g_{j}$ and $|K_i|<\infty$ induced by the imbedding of $\Gamma_j$ in $\R^d$. We define the induced Riemann measure $\partial \lambda$ in each boundary $\Gamma_{j}$ as
    \begin{equation}
    \label{eq: induced riemann measure}
    \partial \lambda(A) := \int_A \sqrt{|\det(g_{j})|}\dd x, \qquad \forall A \in \cB(\Gamma_{j})
    \end{equation}
    where $x$ denote local $(d-1)$-dimensional coordinates for  $\Gamma_{j}$ and  $\cB(A)$ denotes the Borel $\sigma$-algebra on $A$.
    \item Define the set of edges and corners of the boundary as
    \[C := \bigcup_{i \in K} \bigcup_{\substack{j.k\in K_i\\j\ne k}}( \partial \Omega_i \cap \Gamma_{j} \cap\Gamma_{k})\]
    and we assume that $\partial\lambda(C) = 0$. 
    \item For every $i\in  K$ and $x\in \partial \Omega_i$ we assume that the limit
    \[
    \Psi(x,i) := \lim_{\substack{y \rightarrow x \\ y \in \Omega_i}} \Psi(y) 
    \]
   exists. In this way, we can extend $\Psi$ on the closure $\overline \Omega_i := \partial \Omega_i \cup \Omega_i$. 
    \item We assume that  $x \mapsto \exp(-\Psi(x))$ is bounded and differentiable in each restriction $\overline \Omega_i, \, i \in K$. Notice that, with this setting, we allow the target density to vanish to 0 when approaching a boundary.

\end{enumerate}

With this setting, for  $x \in \{ \partial \Omega_j \cap \partial \Omega_i, \, i,j \in K\}$ if $\Psi$ is discontinuous at that point, then $\exp(-\Psi(x, i)) \ne \exp(-\Psi(x,j))$ and if, moreover, $x \in \partial \Omega \setminus C$ (it is neither a edge nor a corner point of the boundary), then we can define the outward normal vector at $x$ by $n(x,i)$ and we have that $n(x, i) = -n(x, j)$.

Next, we build PDMP samplers which can target $\mu$, considering initially non-smooth measures with Lebesgue dominating measure (with $A = \{1,2,\dots,d\}$ in equation~\eqref{eq: dominated measure})  In Section~\ref{sec: sticky extensions}, we extend the setting to the general case.

\subsection{Building blocks of PDMPs with boundaries}\label{sec: building block of PDMPs}
The setting presented here follows closely \textcite[Chapter 2]{Davis1993}. Define the set \[\partial \Omega_i^\dagger := \{x \in \partial \Omega_i \colon \exp(-\Psi(x,i))>0\}\]
corresponding to the boundary points with non-vanishing density. 
Denote the space of possible velocities of the process by $\cV \subset \RR^d$ and, for all $i \in K$, the augmented spaces $E^\circ_i := \Omega_i \times \cV$ and $\partial E^\circ_i := \partial \Omega^\dagger_i \times \cV$, with elements given by the tuple $z  := (x,v)$.

Here, the deterministic flow of the process, its state space and the boundary of the process are jointly defined: let $\phi: E\times \RR \to E$ be the deterministic flow of the process where $E := \cup_{i \in K} E_i$ is the state space of the process with  
 \[
E_i := \left(E_i^\circ \sqcup \partial E^-_i \right)
\]
and
\begin{equation}
      \label{eq: non-active boundary}
      \partial E^-_i := \{z \in \partial E^\circ_i \colon z = \phi(-t, z'), \text{ for some } z' \in E^\circ_i, t>0\}.
\end{equation}
Let $\partial E^+ = \cup_{i \in K} \partial E_i^+$ be the boundary of the process with
 \begin{equation}
     \label{eq: active boundary}
           \partial E_i^+ := \{z \in \partial E^\circ_i \colon z = \phi(t, z'), \text{ for some } z' \in E^\circ_i, t>0\}
 \end{equation}

 Note that $\partial E^+$ is the set of position and velocity for which the process reaches the boundary, while the elements of $\partial E^- := \cup_{i \in K} \partial E^-_i$ are those for which the process leaves the boundary. 
 Furthermore, the set $\bigcup_{i\in K}(\partial \Omega_i \setminus \partial^\dagger \Omega_i) \times \cV$  is neither part of the state space, nor of the boundary. This is not problematic, as the PDMP samplers we consider never reach this set (see Proposition~\ref{rmk: prob to reflect before Psi vanishes}).

 PDMPs are characterized by a finite collection of random events and deterministic dynamics in between those events as follows: \begin{itemize}
\item  
 The deterministic flow of the process $\phi\colon E\times \RR \to E$ takes the differential form
\begin{equation}
    \label{eq: deterministic dynamics 1}
    \frac{\dd \phi(z_0, t)}{\dd t} = (v s(\phi_x(z_0, t)), 0), \quad \phi(z_0, 0) = z_0
\end{equation}
with $\phi_{x}$ being the position component of $\phi$ and for a speed-up function $s \colon \Omega \to \RR^+ \setminus \{0\}$. Similarly to what has been done for $\Psi$, we extend $s$ to the boundary by assuming that, for every $x \in \partial \Omega^\dagger_i, \, i \in K,$ $s(x,j) = \lim_{y \to x, y \in \Omega_i}$ does not depend on $y$ and we assume that $s$ is differentiable in each set $\overline \Omega_i, \, i \in K$.

The dynamics in equation~\eqref{eq: deterministic dynamics 1} generate straight lines in the position coordinate with speed proportional to the function $s \colon \Omega \to \RR^+$.

Generalization to dynamics other than straight lines are possible, for example one might consider PDMPs with Hamiltonian dynamics invariant to Gaussian measures (\cite{bierkens2020boomerang}).

\item  A collection of random event times $\tau_1,\tau_2,\dots$ determines the times where the process changes velocity component. These are computed recursively and are determined by a rate function $\lambda \colon E \to \RR^+$. The
first event time $\tau_1$  of the process starting at $z_0 \in E$ coincides with the first event time of an \emph{inhomogeneous Poisson process} with rate $t \mapsto \lambda(\phi(z_0, t))$ and therefore satisfies
\begin{equation}
    \label{eq: prob of first event time}
    \PP(\tau_1 > t) = \exp\left(-\int_0^t \lambda(\phi(z_0, u)) \dd u\right).
\end{equation}
Hereafter, we write $\tau_1 \sim \text{IPP}(t \mapsto \lambda(\phi(z_0, t)))$. 
 \item Two kernels $\cQ_{\partial E}, \cQ_{E}$ determine the behaviour of the process respectively when approaching the boundary and at random event times.   The two kernels are combined in a Markov kernel $\cQ\colon E \sqcup \partial E^+\times \cB(E) \to [0,1]$ defined as
    \[
    \cQ(z,\cdot) := 
        \begin{cases} 
            \cQ_{E}(z,\cdot) & z \in E \\
            \cQ_{\partial E}(z,\cdot) & z \in \partial E^+.
        \end{cases}
    \]
\end{itemize}

 Given the components $(\phi, \lambda, \cQ)$, a sample path of a PDMP can be simulated by iterating Algorithm~\ref{alg: algorithm abstract}, which compute the first random events of the sample path and returning a finite collection of points (hereafter denoted the skeleton of the process) from which we can deterministically interpolate the whole path. 
\begin{algorithm}
  \caption{Recursive construction of a sample path \label{alg: algorithm abstract}}
   For a given time $t$ and position $z \in E$:
  \begin{itemize}
    \item Compute $\tau^\star := \inf_{u>0} \{\phi(z, u) \in \partial E\}$.
    \item Draw $\tau \sim \text{IPP}(s \mapsto \lambda(\phi(z, s)))$.
    \item If $\tau^\star  < \tau$: save the tuple $(t + \tau^\star, z' = \phi(z, \tau^\star))$, draw $z'' \sim \cQ_{\partial E}(z', \cdot)$. Save and return $(t + \tau^\star, z'')$.
    \item Otherwise (if $\tau < \tau^\star$): draw $z' \sim \cQ_{E}(\phi(z, \tau), \cdot)$. Save and return $(t + \tau, z')$. 
 \end{itemize}
\end{algorithm}

 Similar to \textcite{vasdekis2021speed}, the speed-up function $s(x)$ is allowed to generate exploding dynamics, that is dynamics for which the process escape to infinity in finite time. This will not be problematic, as long as we make sure that a random event switches velocity before the process escapes to infinity or reaches boundaries with vanishing density. This condition is reflected by the following assumption:
\begin{assumption}\label{ass: speed growth condition}\emph{(Speed growth condition)} Let the speed-up function $s$ satisfy \begin{itemize}
    \item $ \lim_{\|x\| \to \infty} \exp(-\Psi(x))s(x) = 0,$
    \item $\lim_{\substack{x \to y\\x \in \Omega}}\exp(-\Psi(x))s(x) = 0,
    \quad y \in \partial \Omega \setminus \partial \Omega^\dagger. $
\end{itemize}
\end{assumption}
Next, we exclude pathological behaviours of the process when hitting the boundary of the state space i.e. the case where the process hits the boundary an infinite number of times in a finite time horizon. These examples are difficult to find in real-word statistical applications and have been studied, independently from the development of PDMP samplers, for example for deterministic dynamical systems with constant velocity and deterministic elastic collisions at the boundary (which are naturally connected with the Bouncy particle sampler, see equation~\eqref{eq: bps relfection at the boundary} below) on a convex subset of $\RR^2$ (a billiard table), see \textcite{halpern1977strange} and references therein for details. 

Fix the initial point $z \in E$, let $\tau_1 := \inf\{t \colon \phi(z, t) \in \partial E^+\}$, and recursively $\tau_i := \inf\{t \colon \phi(z^{(i-1)}, t) \in \partial E^+\}$, with $z^{(i)} \sim \cQ_{\partial E}(\phi(z^{(i-1)}, \tau_i), \cdot), \, z^{(0)} = z$, be the sequence of hitting times to the boundary, with the convention that if $\tau_i = \infty$ then $\tau_j = \infty$ for all $j\ge i$. 
\begin{assumption} For all initial points,  
$
\PP(\lim_{n \to \infty} \tau_n < \infty) = 0.
$
\end{assumption}

    Denote  by $C^1_c(E)$ the space of compactly supported functions on $E$ which are continuously differentiable in their first argument. 
    A necessary condition for PDMPs to  target the measure $\mu \otimes \rho$, where $\rho$ is the marginal invariant measure of the velocity component while $\mu$ is the target measure we are interested to sample from, is that, for functions in 
\begin{align}
    \cA := \{f \in C^1_c(E)\colon&\, t\mapsto f(\phi(z,t)) \text{ is absolutely continuous } \forall z \in E; \nonumber \\
    &   f(z)= \int_{\partial E^-} f(z') \cQ_{\partial E}(z, \dd z'), \quad \forall z \in \partial E^+\} \label{eq: boundary condition domain of the generator}
\end{align}
the following equality holds 
\begin{equation}\label{eq: int generator f relative to mu and rho is zero}
\int_E \cL f \, \dd (\mu\otimes\rho) = 0
\end{equation}
where $\cL$ is the extended generator of the process. For PDMPs, $\cL$ is known and its expression is given in Appendix~\ref{app: generator of PDMPs}.

Remark~\ref{rmk: equvalence with weighted PDMP} below sheds light on the connection between the speed-up function and importance weights, with $s^{-1}$ being the importance function (see \cite[Section 3.3]{robert1999monte}) and therefore may used as a guideline for the design of the speed-up function in relation to the asymptotic variance of the Monte Carlo estimator and for an efficient implementation of PDMPs, with the consideration that, once the speed-up function has been chosen, the weights of the path can be computed retrospectively in a post-process stage.

\begin{remark}\label{rmk: equvalence with weighted PDMP}
\emph{(Equivalence between PDMPs with speed-up function and ordinary PDMPs with weighted path)}
 Consider a ordinary PDMP $Z'_t := (X'_t, V'_t)$ with final clock $T'$ and without speed-up function, targeting a measure with unnormalized density
\[
\pi^*(x) := \exp\left(-\Psi(x)\right)s(x),\quad x \in \Omega
\]
and a PDMP $Z_t := (X_t, V_t)$ with final clock $T := \int_0^{T^\star} s(X_t)^{-1} \dd t $ and speed-up function $s$ targeting a measure with unnormalized density $\exp(-\Psi(x))$. 
Then, the equivalence
\[
\frac{1}{T'}\int_0^{T'} \frac{1}{s(X'_t)} f(Z'_t) \dd t
=
\frac{1}{T}\int_0^{T} f(Z_t) \dd t
\]
holds almost surely for all $f\colon E \to \RR$ with $\int f \dd (\mu \otimes\rho) < \infty$.
\end{remark}

Next, we impose sufficient conditions for \eqref{eq: int generator f relative to mu and rho is zero} to hold. In particular, we will see that conditions on different components of PDMPs serve to target different components of the measure in equation~\eqref{eq: general form target measure}. In Section~\ref{subsec: conditions on the interior}, we review standard conditions on
$(\lambda, s, \cQ_{E})$ imposed for ordinary PDMPs (see for example \cite{vasdekis2021speed}) which allow the process to target the differentiable part of the density within each set $\Omega_i, \, i\in K$, while in Section~\ref{subsec: conditions at the boundary}, new 
conditions on $(s, \cQ_{\partial E})$ are imposed which guarantee that the process targets the measure $\mu$ when approaching each boundary $\partial \Omega_i,\, i \in K$ and allow to define teleportation portals and to target measures which are discontinuous over the boundary. In Section~\ref{sec: ergodicity pdmp with boundaries} we analyse the limiting behaviour of the process.

\subsection{Review of sufficient conditions on the interior}\label{subsec: conditions on the interior}   

For our PDMP samplers the kernel $\cQ_{E}$ acts by only changing the velocity component, while leaving the position unchanged. Hence, with abuse of notation, we let $\cQ_{E}\colon E\times \cB(\cV) \to [0,1]$ be a kernel acting only on the velocity component. Throughout, we distinguish between two different classes of events: \emph{reflections} and \emph{refreshments}, which are defined by rates and kernels $(\lambda_{\mathrm{b}}, \cQ_{E, \mathrm b}) $ and $(\lambda_{\mathrm{r}}, \cQ_{E, \mathrm r} )$, respectively. The former ensures that the process targets the right measure while the latter ensures ergodicity of the process. Then, for $z \in E$,  $\lambda(z) = \lambda_{\mathrm{r}}(z) + \lambda_{\mathrm{b}}(z)$ and 
\[
  \cQ_{E}(z,\cdot) := \frac{\lambda_{\mathrm{r}}(z)}{\lambda_{\mathrm{r}}(z) + \lambda_{\mathrm{b}}(z)}  \cQ_{E, \mathrm r}(z, \cdot) + \frac{\lambda_{\mathrm{b}}(z)}{\lambda_{\mathrm{r}}(z) + \lambda_{\mathrm{b}}(z)}  \cQ_{E, \mathrm b}(z, \cdot).
\]
Then, for a fixed point $z \in E$, the first random event of a PDMP with rate $\lambda$ and transition kernel $\cQ$ can be simulated by \emph{superposition}: the first event time is given by $\tau  = \min(\tau_\rr,\tau_\bb)$ where $\tau_i \sim \text{IPP}(s \mapsto \lambda_i(\phi(z, s)))$ and the new velocity is drawn as $v' \sim \cQ_\rr(\phi(z, \tau), \cdot)$, if $\tau_\rr < \tau_\bb$ and $v' \sim \cQ_\bb(\phi(z, \tau), \cdot)$ otherwise.

Next, we makes assumptions on both the refreshment and the reflection events. Recall that the desired target measure of the process takes the form $(\mu\otimes\rho)(\dd x, \dd v) := C\exp(-\Psi(x))\dd x\rho(\dd v)$, for some constant of normalization $C$ and invariant measure $\rho$ of the velocity component.
\begin{assumption}\label{ass: conditions on the refreshments} \emph{(Conditions on refreshments)} Let  $0< \lambda_{\mathrm{r}}(x,v)  < \infty$ be a function which is constant on its second argument. Furthermore, let $\cQ_{\mathrm {E, r}}$ be invariant to $\rho$, i.e. \[\int_{v \in \cV} \rho(\dd v)\cQ_{E, \mathrm{r}}((x,v), \dd v') = \rho(\dd v'), \quad \forall x \in \{y \in \RR^d \colon (y,v) \in E\}\]
and such that $\cQ_\rr(z, \cdot)$ is supported in $\cV$, for all $z$.
\end{assumption} 
It is customary for PDMP samplers to set $\cQ_{E, \mathrm{r}}((x,v), \cdot) = \rho(\cdot)$ and a fixed rate $\lambda_{\mathrm{r}}(x,v) = c \ge 0$.

\begin{assumption}\label{ass: conditions on the interior}\emph{(Conditions on reflections)}
    For all $(x,v) \in E$ and for a PDMP with speed-up function $s(x)$, let  $\lambda_{\mathrm{b}}\colon E \to \RR^+$ satisfy
    \[
    \lambda_{\mathrm{b}}(x,v) - \lambda_{\mathrm{b}}(x,-v)= \langle v, A(x)  \rangle   
    \]
    with 
    \[
    A(x) = s(x) \nabla \Psi(x) - \nabla s(x).
    \]
    Let $\cQ_{E, \mathrm{b}}$ satisfy
\begin{equation}
   \label{eq: condition of kernel on the interior}
 \int_{v \in \cV}\rho(\dd v) \lambda(x,v)  \cQ_{E, \mathrm{b}}((x, v), \dd v')    = \lambda(x,-v') \rho(\dd v'), \quad \forall x \in \{y \in \RR^d \colon (y,v) \in E\}.
\end{equation}

With the following Proposition, we informally justify the choice made in Section~\ref{sec: building block of PDMPs} of not considering the set $\partial \Omega\setminus \partial \Omega^\dagger \times \cV$ (set of position and velocity where the position is on the boundary with vanishing density) as either part of the state space or boundary of the process.
\begin{proposition}\label{rmk: prob to reflect before Psi vanishes}\emph{(Probability to reflect before hitting a boundary with vanishing density)}
Let $\lambda_{\bb}$ and $s$ be as in Assumptions~\ref{ass: speed growth condition}-\ref{ass: conditions on the interior}. For every $z \in E$ such that $y = \phi_x(z, \tau^\star) \in \partial \Omega\setminus \partial \Omega^\dagger, \, \tau^\star>0$, Let $\tau_\bb \sim \text{IPP}(t\mapsto \lambda_\bb(\phi(z, t)))$ be the random reflection time. Then,
$
\PP(\tau_\bb < \tau^\star) = 1.
$
\end{proposition}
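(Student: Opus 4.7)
The plan is to bound $\PP(\tau_\bb \geq \tau^\star) = \exp\bigl(-\int_0^{\tau^\star}\lambda_\bb(\phi(z,u))\,\dd u\bigr)$ by showing the integral in the exponent diverges. The key idea is that the skew relation in Assumption~\ref{ass: conditions on the interior} forces $\lambda_\bb$ to grow at least as fast as the drift of $\Psi-\log s$ along the flow, and the speed growth condition (Assumption~\ref{ass: speed growth condition}) guarantees $\Psi-\log s \to +\infty$ at points of $\partial\Omega \setminus \partial\Omega^\dagger$.

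First I would derive the pointwise lower bound $\lambda_\bb(x,v) \geq \langle v, A(x)\rangle$, which follows immediately from $\lambda_\bb \geq 0$ together with the identity $\lambda_\bb(x,v)=\lambda_\bb(x,-v)+\langle v, A(x)\rangle$. Next I would compute, along the deterministic flow with $\dot{\phi}_x = v\, s(\phi_x)$,
\[
\frac{\dd}{\dd t}\Psi(\phi_x(z,t)) = s(\phi_x)\langle v,\nabla \Psi(\phi_x)\rangle,\qquad \frac{\dd}{\dd t}\log s(\phi_x(z,t)) = \langle v,\nabla s(\phi_x)\rangle,
\]
so that
\[
\frac{\dd}{\dd t}\bigl[\Psi(\phi_x(z,t))-\log s(\phi_x(z,t))\bigr] = \langle v, A(\phi_x(z,t))\rangle.
\]
Combined with the lower bound above this yields, for every $t<\tau^\star$,
\[
\int_0^{t} \lambda_\bb(\phi(z,u))\,\dd u \;\geq\; \bigl[\Psi(\phi_x(z,t))-\log s(\phi_x(z,t))\bigr] - \bigl[\Psi(x_0)-\log s(x_0)\bigr],
\]
where $x_0$ is the position component of $z$.

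To conclude, I would let $t \uparrow \tau^\star$. Since $\phi_x(z,t)\to y \in \partial\Omega\setminus\partial\Omega^\dagger$, Assumption~\ref{ass: speed growth condition} gives $\exp(-\Psi(\phi_x(z,t)))\,s(\phi_x(z,t)) \to 0$, i.e. $\Psi(\phi_x(z,t))-\log s(\phi_x(z,t))\to +\infty$. Hence $\int_0^{\tau^\star}\lambda_\bb(\phi(z,u))\,\dd u = +\infty$ by monotone convergence, so $\PP(\tau_\bb \geq \tau^\star)=0$, as required. The only subtle point is ensuring the identity $\frac{\dd}{\dd t}[\Psi-\log s] = \langle v,A\rangle$ holds along the flow up to (but not including) $\tau^\star$; this is fine because the trajectory stays in $\Omega_i$ on $[0,\tau^\star)$, where by assumption both $\Psi$ and $s$ are differentiable, and the telescoping bound does not require integrability at the endpoint — divergence of the right-hand side suffices.
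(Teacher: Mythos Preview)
Your proof is correct and follows essentially the same approach as the paper: derive the pointwise lower bound $\lambda_\bb(x,v)\ge\langle v,A(x)\rangle$ from non-negativity and the skew relation, integrate $\langle v,A(\phi_x)\rangle$ along the flow to obtain the increment of $\Psi-\log s$, and invoke the speed growth condition to send this to $+\infty$. The paper presents the same computation slightly more tersely, writing the integrated exponential directly as the ratio $\exp(-\Psi(y))s(y)/\exp(-\Psi(x))s(x)$ rather than spelling out the chain-rule identity $\tfrac{\dd}{\dd t}[\Psi-\log s]=\langle v,A\rangle$, but the content is identical.
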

\begin{proof}
By Assumption~\ref{ass: conditions on the interior} we have that 
\[
\lambda_\bb(x,v) \ge \max(\langle v, \nabla \Psi(x)s(x) - \nabla s(x)
\rangle, 0).
\]
Then, by 
 Assumption~\ref{ass: speed growth condition} we have that
\[
\PP(\tau_\bb < \tau^\star) \ge 1 -  \lim_{u \uparrow \tau^\star} \exp\left(-\int_0^{u} \langle v, 
\nabla \Psi(\phi_x(z, s))s(\phi_x(z, s)) - \nabla s(\phi_x(z, s))\rangle  \dd s\right) = 1 -  \frac{\exp(-\Psi(y))s(y)}{\exp(-\Psi(x))s(x)} = 1.
\]
\end{proof}

\end{assumption}
The next proposition shows that if we make the assumptions above and compute the left-hand side of equation~\eqref{eq: int generator f relative to mu and rho is zero}, we are left with an integral over $\partial E := \partial E^+ \sqcup \partial E^-$.
\begin{proposition}\label{prop: integral of generator with standard assumptions}
Consider a PDMP sampler satisfying Assumption~\ref{ass: conditions on the refreshments}-\ref{ass: conditions on the interior}. Then
\begin{equation}\label{eq: integral of generator with standard assumptions}
\int_E \cL f \dd (\mu\otimes \rho) =
        \int_{(x,v) \in\partial E} f(x,v)\langle n(x), v\rangle  s(x) \exp(-\Psi(x)) \partial \lambda (\dd x) \rho(\dd v), \quad f \in \cA,
\end{equation}
where $\cL$ is the extended generator of the process, $\cA$ is given in~\eqref{eq: boundary condition domain of the generator} and $\partial \lambda$ is the induced Riemann measure on the boundary introduced in equation~\eqref{eq: induced riemann measure}.
\end{proposition}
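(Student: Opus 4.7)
The plan is to split $\int_E \cL f\,\dd(\mu\otimes\rho)$ into the drift and jump contributions of the generator, apply the divergence theorem component-by-component on $\Omega = \bigsqcup_{i\in K}\Omega_i$ for the drift, and use the reflection and refreshment conditions on the jump, so that the two ``interior'' terms produced cancel and only the target boundary integral survives.

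For the drift part, the generator contains $\langle v\, s(x), \nabla_x f(x,v)\rangle$. Fixing $v \in \cV$ and applying the divergence theorem on each $\Omega_i$ to the vector field $x \mapsto v\, s(x)\, f(x,v)\, e^{-\Psi(x)}$, the product-rule identity
\begin{equation*}
\mathrm{div}_x(v\, s(x) e^{-\Psi(x)}) = \langle v, \nabla s(x) - s(x)\nabla \Psi(x)\rangle e^{-\Psi(x)} = -\langle v, A(x)\rangle e^{-\Psi(x)},
\end{equation*}
with $A(x) = s(x)\nabla\Psi(x)-\nabla s(x)$ from Assumption~\ref{ass: conditions on the interior}, produces a boundary term $\int_{\partial\Omega_i} f\langle v, n\rangle s\, e^{-\Psi(\cdot,i)}\partial\lambda(\dd x)$ plus an interior term $\int_{\Omega_i} f\langle v, A\rangle e^{-\Psi}\dd x$. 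Summing over $i$ (using Assumptions 1--2 so $\partial\lambda$ is well defined up to the null set of codimension-$2$ corners $C$) and integrating against $\rho(\dd v)$, the drift contribution equals the target boundary integral over $\partial E = \bigsqcup_i \partial E_i$ plus an interior remainder $\int_E f(x,v)\langle v, A(x)\rangle\dd\mu(x)\,\rho(\dd v)$.

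For the jump part, I decompose $\lambda\cQ_E = \lambda_\rr \cQ_{E,\rr} + \lambda_\bb \cQ_{E,\bb}$. The refreshment contribution vanishes: since $\lambda_\rr$ is constant in $v$ and $\cQ_{E,\rr}$ preserves $\rho$ (Assumption~\ref{ass: conditions on the refreshments}), $\int\rho(\dd v)\,\lambda_\rr(x)\bigl[\int_\cV f(x,v')\cQ_{E,\rr}((x,v),\dd v') - f(x,v)\bigr] = 0$. For the reflection contribution, condition~\eqref{eq: condition of kernel on the interior} collapses $\int_\cV\rho(\dd v)\,\lambda_\bb(x,v)\int_\cV f(x,v')\cQ_{E,\bb}((x,v),\dd v')$ into $\int_\cV f(x,v')\,\lambda_\bb(x,-v')\,\rho(\dd v')$; combined with the skew-symmetry $\lambda_\bb(x,v)-\lambda_\bb(x,-v)=\langle v, A(x)\rangle$ from Assumption~\ref{ass: conditions on the interior}, the reflection contribution becomes $-\int_E f(x,v)\langle v, A(x)\rangle \dd\mu(x)\,\rho(\dd v)$, which cancels the interior drift remainder.

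Adding drift and jump thus leaves only the boundary integral of~\eqref{eq: integral of generator with standard assumptions}. The main obstacle I anticipate is the careful bookkeeping at the boundary: one must verify that the divergence theorem applied on each $\Omega_i$ produces the Riemann surface integral with the outward normal $n(x,i)$ pointing out of $\Omega_i$, and that the disjoint-union interpretation of $\partial E = \bigsqcup_i \partial E_i$ records both sides of any discontinuity face with their respective $\exp(-\Psi(x,i))$ limits (Assumption 3 guarantees these limits exist). The Fubini-type exchange between the $x$-gradient and the $v$-integration is harmless thanks to $f\in\cA\subset C^1_c(E)$.
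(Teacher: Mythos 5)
Your proposal is correct and follows essentially the same route as the paper: apply the divergence theorem to the drift term on each $\Omega_i$ to produce the boundary integral plus an interior remainder $\int_E f\langle v, A\rangle\,\dd(\mu\otimes\rho)$, kill the refreshment contribution via Assumption~\ref{ass: conditions on the refreshments}, and cancel the remainder against the reflection contribution using the kernel identity~\eqref{eq: condition of kernel on the interior} together with $\lambda_{\bb}(x,v)-\lambda_{\bb}(x,-v)=\langle v,A(x)\rangle$. Your write-up actually makes explicit the cancellation computation that the paper leaves implicit; no gaps.
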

\begin{proof}
See Appendix~\ref{app: proof proposition condiotions on the interior}.
\end{proof}
Notice that $(x,v) \mapsto \langle n(x), v\rangle  s(x) \exp(-\Psi(x))$ is not defined in the set of edges of corners of the boundary $C \times \cV$ as the normal vector is not defined for corner points. However, this is not problematic since, by the assumptions made in Section~\ref{sec: pdmps samplers with boundaries}, we have $(\partial \lambda\otimes \rho)(C \times \cV) = 0$. 

For a position $z \in E$, simulating the next random event time as in equation~\eqref{eq: prob of first event time} can be challenging, depending on the measure $\mu$ and speed-up function $s$. A standard technique is to find a function  $ \bar \lambda(z, u) > \lambda(\phi(z, u))$ for all $u >0$ such that $\tau \sim \text{IPP}(u \mapsto \bar\lambda(z,u))$ can be computed. Then  $\tau$ is a event time with probability $\lambda(\phi(z, \tau))/\bar \lambda(z, \tau)$. This scheme is referred  as to thinning. If the upper bound is not tight to the Poisson rate $\lambda$, the procedure induce extra computational costs that can deteriorate the performance of the sampler. Several numerical schemes have been recently proposed trying to address this issue, see for example \cite{pagani2020nuzz}, \textcite{corbella2022automatic}, \textcite{bertazzi2022adaptive} and \textcite{sutton2021concave}.

Next, we will show that the right hand-side of equation~\eqref{eq: integral of generator with standard assumptions} can be made equal to 0 by detailing the behaviour of PDMPs at the boundary $\partial E^+$.

\subsection{Sufficient conditions at the boundary}\label{subsec: conditions at the boundary}
Here, we state a fairly general assumption for the kernel $\cQ_{\partial E}$ (Assumption~\ref{ass: sdb at the boundary}) which guarantees that the process targets the measure $\mu$ when it approaches the boundary. This assumption on the boundary, together with Assumptions~\ref{ass: speed growth condition}-\ref{ass: conditions on the refreshments}-\ref{ass: conditions on the interior} allows to determine the invariant measure of the PDMPs (Theorem~\ref{thm: main}). Equation~\eqref{eq: Q at the boundary} and Algorithm~\ref{alg: algorithm boundary} gives a concrete example of a transition kernel at the boundary which will be used in the applications of Section~\ref{sec: applications}.

Recall that a map $\cS\colon \cX \to \cX$ is an involution if 
$\cS \circ \cS = I$, where $I$ stands for the identity map. We now give an important definition, followed by the main assumption made at the boundary:
\begin{def}\emph{(Skew detailed balance condition)} For an involution $\cS : \cX \to \cX$, a kernel $\cQ\colon (\cX, \cB(\cX)) \to [0,1]$ satisfies the skew detailed balance condition relative to a measure $\mu$ on $\cX$, if 
\[
\cQ(z, \dd z')\mu(\dd z) =  \cQ(\cS^{-1}(z'),\cS^{-1}(\dd z)) \mu(\cS^{-1}(\dd z')).
\]
\end{def}

\begin{assumption}\label{ass: sdb at the boundary} Let 
$\cQ_{\partial E}\colon \partial E^+\times \cB(\partial E^-) \to [0,1]$ satisfy the skew detailed balance condition relative to the signed measure $ \nu(\dd x, \dd v) = \langle n(x), v \rangle s(x)\partial \lambda(\dd x) \rho(\dd v)$ defined on $\partial E$ with involution $\cS(x,v) = (x,-v)$.
\end{assumption}
Notice that, contrary to the standard applications of skew detailed balance, here $\nu$ is a signed measure and $\nu(\dd z) = -  \nu(\cS^{-1}(\dd z))$. The use of the skew detailed balance condition relative to a signed measure allows to handle the symmetry arising between exit boundaries $\partial E^-$ and entrance boundaries $\partial E^+$, which is made explicit in the proof of Theorem~\ref{thm: main} below.

\begin{theorem}\label{thm: main}
    Consider a PDMP sampler satisfying Assumption~\ref{ass: conditions on the refreshments}-\ref{ass: conditions on the interior}-\ref{ass: sdb at the boundary}. Then $\mu \otimes \rho$ is a stationary measure of the process.
\end{theorem}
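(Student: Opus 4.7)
The strategy is to verify $\int_E \cL f \, d(\mu \otimes \rho) = 0$ for every $f$ in the domain $\cA$. Proposition \ref{prop: integral of generator with standard assumptions} has already used Assumptions \ref{ass: conditions on the refreshments} and \ref{ass: conditions on the interior} to reduce this identity to a pure boundary statement: the whole problem collapses to proving
\[
\int_{\partial E} f(x,v) \, \nu(dx, dv) = 0, \qquad \nu(dx, dv) := \langle n(x), v\rangle\, s(x)\, e^{-\Psi(x)}\, \partial\lambda(dx)\, \rho(dv).
\]
So the only ingredients still to be exploited are Assumption \ref{ass: sdb at the boundary} and the boundary condition baked into the definition of $\cA$.

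I would start by splitting $\partial E = \partial E^+ \sqcup \partial E^-$. The involution $\cS(x,v) = (x,-v)$ interchanges these two pieces and flips the sign of $\langle n(x), v\rangle$, so $\nu$ is a positive measure on $\partial E^+$ and equals minus its pushforward under $\cS$ on $\partial E^-$. Thus the change of variables $w = \cS z$ on $\partial E^-$ gives
\[
\int_{\partial E^-} f(z)\,\nu(dz) = -\int_{\partial E^+} f(\cS w)\,\nu(dw),
\]
so the goal reduces to
\[
\int_{\partial E^+} f(z)\,\nu(dz) \;=\; \int_{\partial E^+} f(\cS z)\,\nu(dz).
\]

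To close this, I would apply the boundary condition from $\cA$ to the left-hand side, replacing $f(z)$ by $\int_{\partial E^-} f(z')\,\cQ_{\partial E}(z,dz')$, and then push the skew detailed balance identity
\[
\cQ_{\partial E}(z, dz')\,\nu(dz) \;=\; \cQ_{\partial E}(\cS z', \cS dz)\,\nu(\cS dz')
\]
through the double integral. Performing the involutive change of variables $(u,u') = (\cS z, \cS z')$ sends $(z, z') \in \partial E^+ \times \partial E^-$ to $(u, u') \in \partial E^- \times \partial E^+$, and the inner integral $\int_{\partial E^-}\cQ_{\partial E}(u', du) = 1$ then collapses to leave exactly $\int_{\partial E^+} f(\cS u')\,\nu(du')$, which is the desired right-hand side.

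The delicate step is the signed-measure bookkeeping: because $\nu$ changes sign under $\cS$ while the kernel maps $\partial E^+$ to $\partial E^-$, one must be careful that the two applications of $\cS$ (one from the change of variables on $\partial E^-$, one implicit in skew detailed balance) combine with the correct orientation. The assumptions made in Section \ref{sec: pdmps samplers with boundaries} on the edges/corners $C$ ensure that $n(x)$ is defined $\nu$-a.e., so the identities are not affected by the lower-dimensional singular set. Finally, the passage from $\int \cL f\, d(\mu\otimes\rho) = 0$ for $f \in \cA$ to genuine stationarity of $\mu \otimes \rho$ uses well-posedness of the process, which is guaranteed by Assumption \ref{ass: speed growth condition}, the no-explosion assumption, and Proposition \ref{rmk: prob to reflect before Psi vanishes} that rules out the process ever reaching boundary points of vanishing density.
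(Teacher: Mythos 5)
Your proof is correct and follows essentially the same route as the paper: reduce to the boundary integral via Proposition~\ref{prop: integral of generator with standard assumptions}, insert the boundary condition defining $\cA$ on $\partial E^+$, apply Fubini, and use the skew detailed balance condition together with the sign flip $\nu(\cS^{-1}(\dd z)) = -\nu(\dd z)$ to cancel the $\partial E^-$ contribution. The only difference is cosmetic ordering (you symmetrize the $\partial E^-$ term onto $\partial E^+$ before invoking skew detailed balance, while the paper integrates out $z$ directly to produce $\nu(\cS(\dd p))$), and your inclusion of the factor $e^{-\Psi(x)}$ in $\nu$ is the reading actually needed to match the boundary term of Proposition~\ref{prop: integral of generator with standard assumptions}.
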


\begin{proof}
If $\mu \otimes \rho$ is a stationary measure of the PDMP, then we must have that $\int \cL f \dd (\mu \otimes \rho) = 0, \, f \in \cA$.
By Proposition~\ref{prop: integral of generator with standard assumptions} and the boundary condition in~\eqref{eq: boundary condition domain of the generator}, we have that
\begin{align}
    \int_E \cL f \dd (\mu\otimes \rho) &=
    \int_{z \in \partial E^+} \int_{p \in \partial E^-} \cQ_{\partial E}(z, \dd p ) f(p) \nu(\dd z) \nonumber\\
    &\qquad + \int_{\partial E^-} f(z) \nu(\dd  z) \nonumber\\
    &= \int_{p \in \partial E-} f(p) \int_{z \in\partial E^+} \cQ_{\partial E}(z, \dd p) \nu(\dd z) \label{eq: proof sdb 1}\\
    &\qquad + \int_{\partial E^-} f(z)\nu(\dd z)\nonumber\\
    &=  \int_{\partial E^-}f(p)   \nu(\cS(\dd p)) + \int_{\partial E^-} f(z)\nu(\dd z)= 0 \label{eq: proof sdb 2}
\end{align}

where  in~\eqref{eq: proof sdb 1} we applied Fubini for interchanging integrals and in~\eqref{eq: proof sdb 2} we used  Assumption~\ref{ass: sdb at the boundary}.
\end{proof}

 For simplicity, in this article, we focus on specific transition kernels at the boundary which satisfy Assumption~\ref{ass: sdb at the boundary} and will be used for the two main applications in Section~\ref{sec: applications}.
 The transition kernels considered take the form
 \begin{equation}
    \label{eq: Q at the boundary}
    \cQ_{\partial E}((x,v), \dd (y, w)) := 
 \begin{cases}
    \delta_x(\dd y)\delta_{-v}(\dd w), & x \in C, \\
     \begin{split}
     &\alpha(x, y)\cT(x, \dd y)\cR_1((x,v), y, \dd w) \\ &\qquad + \left(1 - \int \alpha(x, y)\cT(x, \dd y)\right) \delta_x(\dd y) \cR_2((x, v),\dd w),
      \end{split} 
     & \text{otherwise}.
 \end{cases}
\end{equation}
In this setting, upon hitting a boundary $(x,v)\in \partial E^+$, if this is an edge or corner point of the boundary, then the process reverts its velocity component, otherwise it jumps to $(y, w)$ with $y \sim \cT(x, \cdot)$ and $w \sim \cR_1((x,v),y, \cdot)$ with probability $\alpha(x,y)$ and reflects at the boundary otherwise, by setting a new velocity according to $q \sim \cR_2((x,v),\cdot)$, see Algorithm~\ref{alg: algorithm boundary} for its implementation. We now specify in details each term in equation~\eqref{eq: Q at the boundary}.

The kernel $\cT\colon \partial \Omega \times\cB(\partial \Omega) \to [0,1]$ acts on the boundary of the position component of the process and such that for every $x \in \partial \Omega$,  $\cT(x, \cdot)$ is not supported on the set of corners $C$ and it is absolutely continuous with respect to $\cT(y, \cdot)$, almost surely for every $y \sim \cT(x, \cdot)$. We set $\alpha(x,y) := \min(1, R(x,y))$ where $R(\cdot, \cdot)$ is the Radon-Nikodym derivative on the product space $(\partial \Omega\times \partial \Omega, \cB(\partial \Omega\times\partial \Omega))$ defined as
 \begin{equation}\label{eq: randon nikodim derivative acc-rej}
 R(x,y) :=  \frac{\nu(\dd y)  \cT(y, \dd x)}{\nu(\dd x)\cT(x, \dd y)}
 \end{equation}
where $\nu( \dd x) = \exp(-\Psi(x))s(x) \dd x$.
 Finally, for every $(x,v) \in \partial E^+$ and $y \in \cT(x, \cdot)$, we define two kernels acting on the velocity components:  $\cR_1 \colon \{(x,v)\in \partial E^+\} \times \{y \in \partial \Omega\} \times \cB(\partial \cV^-(y)) \to [0,1]$ and $\cR_2 \colon \{(x,v)\in \partial E^+\} \times \cB(\partial \cV^-(x)) \to [0,1]$, with
 $
\partial \cV^-(x) = \{v \in \cV \colon (x,v) \in \partial E^-\},
$
which, for every $(x,v) \in \partial E^+   \setminus C \times \cV$ and every $y \in \partial \Omega^\dagger \setminus C$ satisfy 
\[
\int_{v \in \partial \cV^{-}(x)} \cR_1((x, v), y, \dd w) \langle n(x), v \rangle \rho(\dd v) = -\langle n(y), w \rangle \rho(\dd w),
\]
and 
\[
\int_{v \in \partial \cV^{-}(x)} \cR_2((x, v), \dd w) \langle n(x), v \rangle \rho(\dd v) = -\langle n(x), w\rangle \rho(\dd w). 
\]

\begin{algorithm}
  \caption{Behaviour of PDMPs at the Boundary \label{alg: algorithm boundary}}
   For $(x,v) \in \partial E^+$:
  \begin{itemize}
    \item If $x \in C$:
    \begin{itemize}
        \item Set the new state $(x', v') = (x,-v)$. 
    \end{itemize}
    \item Otherwise:
    \begin{itemize}
     \item Propose a point $y\in \partial \Omega$ as $y \sim \cT(x, \cdot)$. 
     \item Simulate $u \sim \text{Unif}([0,1])$.
     \item If 
     $
      \alpha(x, y) > u,
     $
     set the new state to  $(x', v') = (y, w)$ with $w \sim \cR_1((x,v), y, \cdot)$. \item Otherwise set the state $(x', v')= (x, q)$ with $q \sim \cR_2((x,v), \cdot)$. 
     \end{itemize}
    \item return the new state $(x', v')$.
 \end{itemize}
\end{algorithm}

The following remarks may serve as a guidance for the choice of the boundaries and speed-up function $s$:
\begin{remark}\label{rmk: sppedup}
The speed-up function $s$ can be chosen to be constant in each restriction $\Omega_i,\, i \in K$ and can be tuned in order to reduce the probability to reflect at the boundary (see equation~\eqref{eq: randon nikodim derivative acc-rej}) and, as long as Assumption~\ref{ass: speed growth condition} is satisfied, it can be calibrated in each restriction $\Omega_i,\, i \in K$ to completely off-set that probability so that the process always crosses the boundary. In this case, there is no need to specify a reflection rule of the velocity at the boundary. This is key whenever there is no good choice to reflect the velocity at the boundary rather than flipping completely the velocity vector, hence generating undesirable \emph{back-tracking} effects of the underlying Markov process.
\end{remark}

\begin{remark}\label{rmk: sppedup 2} The boundaries $\partial \Omega_i,\, i\in K$ and the speed-up function $s$ can be set 
such that $s$ is constant in each restriction and it approximates 
a continuous speed-up function for which the deterministic dynamics cannot be integrated analytically, this way, extending the choice of continuous functions considered in \textcite{vasdekis2021speed}. 
\end{remark}

\subsection{Ergodicity of PDMPs with boundaries}\label{sec: ergodicity pdmp with boundaries}
Here, we analyse the limiting measure of the PDMP samplers. Ergodicity for PDMPs with boundary conditions has only been established for the specific case of hard-sphere models in \textcite{monemvassitis2022pdmp}.

 In the following, we follow the strategy undertaken in \textcite{2015arXiv151002451B} for characterizing the limiting measure of the PDMP samplers $(Z_t)_{t>0}$ with piecewise linear dynamics (Zig-Zag sampler and Bouncy particle sampler). 
The strategy consist of showing that the process is \emph{open set irreducible} (Proposition~\ref{prop: accessibility}) and that a minorization condition holds (Proposition~\ref{prop: minorization condition}). As shown in \textcite{2015arXiv151002451B}, these results together with  Theorem~\ref{thm: main} guarantees that the process is ergodic (Theorem~\ref{thm: ergodicity}) with unique invariant measure equal to $\mu\otimes\rho$. We make additional assumptions on the transition kernels $\cQ_{\partial E}$ at the boundary to make sure that the process is able to jump between any two sets $E_i, \,E_j, \, i,j \in K$. The main arguments mainly rely on assuming that the process can refresh its momentum component at any time (Assumption~\ref{ass: conditions on the refreshments}). Generalization of this argument may be possible, see for example \textcite{bierkens2019ergodicity} which showed that, under some more stringent conditions on $\Psi$, the Zig-Zag sampler in a  space without boundaries is ergodic without refreshments of the velocity component. We present here the most important results that we derived, together with a road map of the proofs, while deferring to the appendix many intermediate results and all the proofs.

We first define an admissible path which connects two points in $E_i$ for which the position and velocity follows the deterministic flow $\phi$ and the velocity changes on a  finite collection of times:
\begin{definition}\label{def: admissible paths}\emph{(Admissible path)} A function $(z(t) \in E)_{0\le t\le T}$ is said to be an admissible path in $E_i$ if there is a finite collection of tuple $(\tau^{(i)}>0, z^{(i)} = (x^{(i)}, v^{(i)}) \in E_i)_{i=0,1,\dots,N}$ with $N<\infty$, such that $z(0) = z^{(0)}, \, z(T) = z^{(N)}$ and $x^{(i+1)} = \phi_x(z^{(i)}, \tau^{(i)})$.
\end{definition}

The first proposition is a straightforward consequence that every two points in a connected open subset of $\RR^d$ are \emph{path-connected} and in particular they can be connected with a piecewise linear path (Lemma~\ref{prop: connected to path} in Appendix~\ref{app: open set irreducible}). Recall that, for $i \in K$, $E_i = E_i^\circ \sqcup \partial E^-_i$ where $E^\circ_i$ is open and $\partial E_i^-$ contains those points $y = \phi(z, -t)$ for some $z \in E^\circ_i$ and some $t>0$ obtained by applying the deterministic flow backward in time, with starting point in $E_i$ (as such, the set $\partial E^-_i$ is clearly path connected with the interior $E^\circ_i$).

\begin{proposition}\label{prop: exists a path connecting points}
For every $i \in K$, for every two values $z \in E_i, \, z' \in E^\circ_i$, there exists an admissible path $(z(t))_{0\le t\le T}$ 
with  $z(0) = z, \, z(T) = z'$. 
\end{proposition}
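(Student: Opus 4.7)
The plan is to reduce to the case $z \in E_i^\circ$ via a single forward-flow step, then to lift a piecewise linear spatial path --- supplied by Lemma~\ref{prop: connected to path} in Appendix~\ref{app: open set irreducible} --- to an admissible path by matching each straight segment with a velocity from $\cV$ and by pre-pending a short leg to match the prescribed initial velocity. For the reduction, if $z=(x,v)\in \partial E_i^-$, definition~\eqref{eq: non-active boundary} furnishes $z_1 \in E_i^\circ$ and $t_0>0$ with $\phi(t_0,z)=z_1$; taking $z^{(0)}=z$ and $\tau^{(0)}=t_0$ gives an admissible first leg ending in $E_i^\circ$, and it then suffices to build an admissible path from $z_1$ to $z'$ and concatenate. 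Hence I assume $z=(x,v)\in E_i^\circ$ from here on.

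Applying Lemma~\ref{prop: connected to path} to the open connected set $\Omega_i$ produces a piecewise linear curve $x=\xi_0,\xi_1,\ldots,\xi_M=x'$ whose segments all lie in $\Omega_i$. Since~\eqref{eq: deterministic dynamics 1} reads $\dot\phi_x = s(\phi_x)\,v$ with $s>0$, spatial traces of the flow are straight rays in the current velocity direction, traversed at strictly positive speed. Hence each segment $[\xi_{k-1},\xi_k]$ is realized by the flow started at $\xi_{k-1}$ with any velocity $w_k\in\cV$ collinear with $\xi_k-\xi_{k-1}$, over a unique positive time $\tau_k$ satisfying $\phi_x((\xi_{k-1},w_k),\tau_k)=\xi_k$; existence and uniqueness of $\tau_k$ follow from continuity of the flow and the strictly positive infimum of $s$ on the compact segment. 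For velocity sets covering all directions (e.g.\ the Bouncy Particle Sampler with $\cV\supseteq S^{d-1}$) the choice of $w_k$ is immediate; for a finite $\cV$ such as $\{-1,+1\}^d$ (Zig-Zag), I first refine each segment into at most $d$ axis-aligned sub-segments by a small generic perturbation of the breakpoints. Finally, if $w_1 \ne v$, I prepend a single flow step from $(x,v)$ of small length $\tau^{(0)}>0$ in direction $v$ reaching a nearby $\tilde x\in\Omega_i$ (possible since $x$ is interior), then run the construction starting from $\tilde x$ instead of $x$.

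The main obstacle is the finite-velocity refinement step: the axis-aligned sub-segments must remain inside $\Omega_i$. This is handled by a standard compactness and transversality argument --- the original piecewise linear path is compact and at positive distance from $\partial\Omega_i$, so sufficiently small generic perturbations of the breakpoints keep every refined sub-segment in $\Omega_i$. Beyond that, the proof is essentially kinematic and needs no assumption on the PDMP's Poisson rates or transition kernels. A minor point is that Lemma~\ref{prop: connected to path} is stated for connected open sets, so the proposition should be applied one connected component of $\Omega_i$ at a time, which is precisely the setting required for the subsequent irreducibility result of Proposition~\ref{prop: accessibility}.
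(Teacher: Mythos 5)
Your proof is correct and follows essentially the same route as the paper, which simply asserts the result as a straightforward consequence of Lemma~\ref{prop: connected to path} together with the observation that $\partial E_i^-$ is joined to $E_i^\circ$ by the forward flow; you merely spell out the details (the reduction from $\partial E_i^-$, the lifting of segments to flow legs, the initial-velocity matching) that the paper leaves implicit. One small correction: the Zig-Zag velocities in $\{-1,+1\}^d$ are diagonal rather than axis-aligned, so the refinement step should decompose each segment into at most $d$ sub-segments parallel to elements of $\cV$ (possible since the conic hull of $\{-1,+1\}^d$ is all of $\RR^d$); the rest of your argument is unaffected.
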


Next, we assume that the transition kernel at the boundary allows the process to jump between every two regions $E_i,\,E_j, \, i,j \in K$.

\begin{definition}\label{def: direct neighbour}\emph{(Direct neighbouring sets)} A set $E_i$ is said to be a  direct neighbour of a set  $E_j$ with $i,j \in K$  if there exists a open subset $A\subset \partial E^{+}_i$, a set $B \in \cB(\partial E_j^-)$, and $\epsilon > 0$ such that  
\[
\cQ_{\partial E}(z, B) \ge \epsilon, \quad \forall z \in A. 
\] 
 
In this case, we write $E_i \curvearrowright E_j$.
\end{definition}

In other words, $E_i \curvearrowright E_j$ 
if there is a region $A \subset \partial E_i^+$  that allows the process to jump on $\partial E^-_j$ with a positive probability uniformly for every $z \in A$.
\begin{example}\emph{(Deterministic jumps)} Consider a  continuous function $\kappa\colon \partial E_i^+ \to \partial E_j^-$ and a transition kernel $\cQ_{\partial E}(z, \cdot) = \delta_{\kappa(z)}(\cdot), \, \forall z \in \partial E_i^+$. Then, $E_i \curvearrowright E_j$ with $A \subset \partial E^+_i$ be any open set and $B =\{ \kappa(z) \in \partial E^-_j \colon z \in A\}$ and $\epsilon = 1$ in Definition~\ref{def: direct neighbour}.
\end{example}
\begin{definition}
\emph{(Neighbouring sets)}  A set $E_i$ is said to be  a neighbour of a set  $E_j$ with $i,j \in K$ if there exists a sequence $i = {k_1}, {k_2},\dots,{k_N} = j$ with elements in $K$, such that $E_{i_k}\curvearrowright E_{i_{k+1}}$, for $k=1,2,\dots,N-1$. In this case, we write $E_i\dashrightarrow E_j$.
\end{definition}

\begin{assumption}\label{ass: connectivity}
$E_i \dashrightarrow E_j$, for all $i,j \in K$. 
\end{assumption}

 Let $P_t \colon E \times \cB(E) \to [0,1], \, t\ge 0$ be the Markov semigroup  associated with the PDMP sampler (here we consider both Zig-Zag and Bouncy Particle sampler).
 
\begin{proposition}\label{prop: accessibility} 
Suppose Assumptions~\ref{ass: conditions on the refreshments}-\ref{ass: conditions on the interior}-\ref{ass: sdb at the boundary}-\ref{ass: connectivity} hold. Then, for every $z \in E$, for every open and non-empty set $A \subset E$, and for every $\Delta>0$, there exists an integer $n>0$ such that
\[
P_{n\Delta } (z, A) >0. 
\]
\end{proposition}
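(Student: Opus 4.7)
The plan is to exhibit, for the given $z \in E$ and open $A \subset E$, an admissible skeleton trajectory from $z$ into $A$, show that the PDMP follows a small tube around this skeleton with positive probability, and finally calibrate the total duration so that it falls on the grid $n\Delta$. Pick $w^\star$ in the interior of $A$ (which lies in some $E_j^\circ$, since the relative complement $E \setminus E^\circ = \partial E^-$ has no interior in $E$), and let $E_i$ be the region containing $z$. Assumption~\ref{ass: connectivity} supplies a chain $E_{k_0} = E_i \curvearrowright E_{k_1} \curvearrowright \cdots \curvearrowright E_{k_m} = E_j$; by Definition~\ref{def: direct neighbour} each link provides an open set $U_\ell \subset \partial E_{k_\ell}^+$, a measurable set $V_\ell \subset \partial E_{k_{\ell+1}}^-$, and $\epsilon_\ell > 0$ such that $\cQ_{\partial E}(\cdot, V_\ell) \ge \epsilon_\ell$ on $U_\ell$. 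Proposition~\ref{prop: exists a path connecting points} then furnishes admissible piecewise-linear paths connecting the arrival point in $V_{\ell-1}$ to a pre-chosen point in $U_\ell$ inside each intermediate region, and a final admissible path inside $E_{k_m}$ terminating at $w^\star$.

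Next I would realize this deterministic skeleton stochastically. Each prescribed velocity switch is produced by a refreshment event: by Assumption~\ref{ass: conditions on the refreshments}, $\lambda_\rr$ is a strictly positive constant in $v$, and the refreshment kernel $\cQ_{E,\rr}$ charges every open subset of $\cV$ for the samplers under consideration (uniform on $\{-1,+1\}^d$ for Zig-Zag, Gaussian on $\R^d$ for the Bouncy Particle sampler). In a short window around each scheduled switching time, I would ask for exactly one refreshment with new velocity inside a small cone around the target direction, and in the complementary subintervals for no event at all; these finitely many independent requirements each have positive probability because $\lambda_\bb$ and $\lambda_\rr$ are uniformly bounded on the compact image of the tube, using continuity of $\Psi$ and $s$ on each closure $\overline{\Omega}_{k_\ell}$. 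Each boundary transition contributes an extra factor $\ge \epsilon_\ell$, after which the deterministic flow instantly carries the process from $V_\ell \subset \partial E_{k_{\ell+1}}^-$ into the interior $E_{k_{\ell+1}}^\circ$ where the next admissible segment begins. Multiplying these finitely many positive factors yields $P_{T^\star}(z, A) > 0$ for the total duration $T^\star > 0$ of the skeleton.

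The main obstacle is aligning $T^\star$ with the discrete grid $\{n\Delta : n \in \N\}$. I would resolve this by appending a ``wandering'' phase inside $A$: since $w^\star$ is an interior point of $A$, there is an open ball $B \subset A \cap E_{k_m}^\circ$ around $w^\star$ whose closure is separated from $\partial \Omega_{k_m}$, and inside $B$ the tube argument of the previous paragraph---now using arbitrarily many refreshments and no boundary events---shows that the process can be kept inside $B$ from time $T^\star$ up to time $T^\star + \tau$ with positive probability, for any $\tau > 0$. Hence $P_T(z, A) > 0$ for every $T > T^\star$, and in particular for $T = n\Delta$ for all $n$ large enough, which completes the proof.
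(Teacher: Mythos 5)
Your proposal is correct and follows essentially the same route as the paper's proof in Appendix~\ref{app: open set irreducible}: an admissible piecewise-linear skeleton within each region, a tubular-neighbourhood argument realizing the skeleton via refreshment events with the reflection clock kept silent, the $\epsilon_\ell$ lower bounds from Definition~\ref{def: direct neighbour} for the boundary jumps, and a confinement step to land the total duration on the grid $\{n\Delta\}$. The only cosmetic difference is that you append the time-padding ``wandering'' phase at the end inside $A$, whereas the paper's Lemma on confinement inserts the delay in an open ball anywhere along the trajectory; both are equivalent for the purpose of hitting a multiple of $\Delta$.
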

\begin{proof}
    See Appendix~\ref{app: open set irreducible}.
\end{proof}
The proof in Appendix~\ref{app: open set irreducible} is given by several intermediate steps that we informally list below:
\begin{itemize}
    \item By Assumption~\ref{ass: connectivity}, there is a sequence of sets $E_{k_1} \curvearrowright E_{k_2}\dots\curvearrowright E_{k_n}$, with $z \in E_{k_1}$, $A \subset E_{k_n}$ and with sets $A_{i}\subset \partial E_{k_i}^+ \setminus (C\times \cV)$ and $B_{i+1} \in \cB(\partial E_{k_{i+1}}^-)$ for $i = 1,2,\dots,n-1$ as in Definition~\ref{def: direct neighbour}.
    \item Denote the sets $B_{0} = \{z\}, \, A_{n} = A$ and take any points $z^{i}_B \in B_{i}, z^{i+1}_A \in A_{i+1}$ for $i = 0,1,\dots,n-1$. By Proposition~\ref{prop: exists a path connecting points}, 
    there is an admissible path  $z(t)_{0\le t\le T}$ such that $z(0) = z^{i}_B$ and $\lim_{t \uparrow T }z(t) =  z^{i+1}_A$. 
    \item By using Assumption~\ref{ass: conditions on the refreshments}-\ref{ass: conditions on the interior}-\ref{ass: sdb at the boundary}-\ref{ass: connectivity}, we construct a process which reaches the set $A$ in finite time with positive probability. This is obtained by controlling the set $\cE$ of random events of PDMPs (reflection events, refreshment events and jumps at the boundary) and by showing that this set has positive probability. 
    \item Finally, we confine the process  for any time $\Delta>0$ in the position component inside a ball $B_r(x)$, for any $x \in \Omega$ along its trajectory.
\end{itemize}
The second ingredient used to prove ergodicity is the following minorization condition, which is verified for the Bouncy particle sampler: 
\begin{proposition}
    \label{prop: minorization condition}
    \emph{(Minorization condition)} Suppose Assumptions~\ref{ass: conditions on the refreshments}-\ref{ass: conditions on the interior} hold. There exists $r, \epsilon >0$ such that, for all $\epsilon<t< r$,  there exists a non-empty open set $C \subset E$ and $\delta>0$ such that for all $z \in C$, 
    \[
    P_t(z, A) \ge \delta \Vol(A\cap C), \qquad \forall A \in \cB(E).
    \]
\end{proposition}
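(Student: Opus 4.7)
The plan is to construct an explicit open set $C \subset E$ in the interior of some component $E_i$ and show that two refreshment events within $(0,t)$ produce a transition density bounded below uniformly on $C$. Dimension counting motivates this choice: a single refreshment at time $s$ with fresh velocity $v_1$ contributes only $d+1$ dimensions of randomness to the final state $(x_t, v_t) \in \R^{2d}$, so the conditional law is supported on a submanifold of dimension at most $d+1$ and is singular with respect to $\Vol$. Two refreshments at times $s_1 < s_2$ with fresh velocities $v_1, v_2$ instead introduce a $(2d+2)$-dimensional parameter space, which is enough to produce a full-dimensional density on $E$ after a change of variables.

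Concretely, I would fix $x_0 \in \Omega_i$ for some $i \in K$ strictly in the interior, and choose $v_0 \in \cV$ together with $\eta, R > 0$ with $B_R(x_0) \subset \Omega_i$, $\eta \ll R$, and $|v_0|$ small relative to $\eta$ and the horizon $r$ below (so that the nominal trajectory with velocities equal to $v_0$ stays in $C := B_\eta(x_0) \times B_\eta(v_0)$). Pick $r > 0$ such that for every $t \in (0, r]$ any trajectory starting in $C$ with velocities bounded by some $V > 0$ remains inside $B_R(x_0)$. On the compact set $\overline{B_R(x_0)} \times \overline{B_V(0)}$ the speed-up $s$, the reflection rate $\lambda_\bb$ and the (constant) refreshment rate $\lambda_\rr$ from Assumption~\ref{ass: conditions on the refreshments} are bounded above by some $\bar\lambda$, and $\rho$ has density bounded below by some $\rho_{\min} > 0$. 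Choose $\epsilon > 0$ small enough that for every $t \in (\epsilon, r)$ and every $z, (x_t, v_t) \in C$, the velocity $v_1 := [x_t - x - s_1 v - (t-s_2) v_t]/(s_2 - s_1)$ required to match two refreshments at the nominal times $s_1 = t/4$, $s_2 = 3t/4$ lies in $\overline{B_V(0)}$; this is a straightforward reachability check.

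Now consider the event $\cG$ that in $(0,t)$ the process undergoes exactly two refreshment events at $(s_1, s_2) \in [t/4, t/3] \times [2t/3, 3t/4]$ with fresh velocities $(v_1, v_2)$, no reflection events and no boundary hits. The joint density of $(s_1, s_2, v_1, v_2)$ restricted to $\cG$ is bounded below by $\lambda_\rr^2 e^{-\bar\lambda r}\rho_{\min}^2$. On $\cG$, $v_t = v_2$ and $x_t$ depends smoothly on $(s_1, s_2, v_1, v_2)$ through the flow $\phi$. Performing the change of variables $(v_1, v_2) \mapsto (x_t, v_t)$ at fixed $(s_1, s_2)$, the Jacobian factorises as $\det(\partial x_t/\partial v_1) \cdot \det(\partial v_t/\partial v_2)$, with $\partial v_t / \partial v_2 = I_d$ and, by ODE flow sensitivity, $\det(\partial x_t / \partial v_1)$ bounded above by $M t^d$ for some $M < \infty$ uniformly on the sub-rectangle. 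Integrating the reciprocal Jacobian against the joint density over the sub-rectangle (area $(t/12)^2$) yields
\[
p_t(z, (x_t, v_t)) \ge \rho_{\min}^2 \lambda_\rr^2 e^{-\bar\lambda r} \cdot (t/12)^2 \cdot (M t^d)^{-1} \ge \delta
\]
for some $\delta > 0$ uniformly in $z, (x_t, v_t) \in C$ and $t \in (\epsilon, r)$. Consequently $P_t(z, A) \ge \delta\, \Vol(A \cap C)$, which is the minorization.

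The main obstacle will be establishing the uniform upper bound $\det(\partial x_t / \partial v_1) \le M t^d$ when the speed-up function $s$ is non-constant: the flow is then no longer affine in $v_1$, and the sensitivity of the endpoint $x_t$ with respect to the intermediate refreshed velocity has to be controlled by a Grönwall-type estimate using the smoothness and positivity of $s$ on the compact set $\overline{B_R(x_0)}$. Once that uniform bound is in hand, the remaining ingredients---reachability of $C$ from $C$, the lower bound on $\rho$, and the upper bound on rates---are standard on the compact region $\overline{B_R(x_0)} \times \overline{B_V(0)}$.
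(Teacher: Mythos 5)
Your proposal is correct and takes essentially the same route as the paper: the paper's proof simply localizes to a ball $C = B_{r_0}(0)\times B_1(0)$ on which the boundary cannot be reached before time $r$ and then invokes the two-refreshment small-set construction from the proof of Lemma~3 of Bouchard-C\^ot\'e et al.\ (2015), which is precisely the argument (two refreshment times and velocities, change of variables with block-triangular Jacobian $(s_2-s_1)^d$, lower bound via $\rho_{\min}$ and the rate bounds) that you reconstruct explicitly. The one point you flag as an obstacle---controlling the Jacobian and solvability of the two-point problem when the speed-up $s$ is non-constant---is not addressed by the paper either (its citation-based proof implicitly treats the unit-speed case, and the applications take $s\equiv 1$), so your explicit acknowledgment of that step is if anything more careful than the original.
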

A similar result has been proven in \textcite[proof of Lemma 3]{2015arXiv151002451B}, in the Euclidean space without boundary conditions and generalises in our setting, see Appendix~\ref{app: minorization} for more details.

We are now ready to present the most important result of this section:
\begin{theorem}\label{thm: ergodicity}\emph{(Ergodicity of the PDMP samplers with boundary conditions)} Suppose Assumptions~\ref{ass: conditions on the refreshments}-\ref{ass: conditions on the interior}-\ref{ass: sdb at the boundary}-\ref{ass: connectivity} hold. Then, for $\mu\otimes\rho$-almost every initial point $Z(0)= z$, $\mu\otimes\rho$ is the unique invariant measure of the process and, for every $f\colon E \to \RR$, with  $\int f \dd(\mu\otimes\rho) \le \infty$, we have
\[
\lim_{T \to \infty} \frac{1}{T}\int_0^T f(Z(t)) \dd t =  \int f \dd(\mu\otimes\rho), \qquad\text{almost surely.}
\]
\end{theorem}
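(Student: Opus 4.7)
The plan is to combine the invariance result of Theorem~\ref{thm: main} with the open-set irreducibility of Proposition~\ref{prop: accessibility} and the minorization of Proposition~\ref{prop: minorization condition} via the standard Meyn--Tweedie route to Harris ergodicity, exactly as in the boundary-free analysis of \textcite{2015arXiv151002451B}. The three ingredients already in hand are precisely what this machinery requires: a $\sigma$-finite invariant measure, topological irreducibility, and a small (hence petite) set.

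First I would fix a time step $\Delta$ in the range of $t$ for which Proposition~\ref{prop: minorization condition} yields a uniform minorization on the open set $C$, and work with the discrete-time skeleton $(Z_{n\Delta})_{n\ge 0}$ with transition kernel $P_\Delta$. Proposition~\ref{prop: accessibility} gives exactly that from every starting point every non-empty open subset of $E$ is reached with positive probability in a finite number of skeleton steps, which is the definition of $\varphi$-irreducibility with $\varphi$ taken to be $\Vol(\,\cdot\,\cap C)$. Proposition~\ref{prop: minorization condition} then makes $C$ a small set for $P_\Delta$ via $P_\Delta(z,\,\cdot\,)\ge \delta\,\Vol(\,\cdot\,\cap C)$ for all $z\in C$. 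After normalising $\mu\otimes\rho$ to a probability measure (using that the constant $C$ in~\eqref{eq: general form target measure} is finite), the classical dichotomy for irreducible Markov chains with a finite invariant measure and a petite set (Meyn--Tweedie, Thm.~10.0.1 and Thm.~13.3.3) yields that the skeleton chain is positive Harris recurrent and that $\mu\otimes\rho$ is its unique invariant probability measure.

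Second, I would upgrade this skeleton ergodicity to the continuous-time statement in the theorem. Aperiodicity of the skeleton follows from Assumption~\ref{ass: conditions on the refreshments}, because refreshment events occur at positive rate on every compact subset of $E$ and thus prevent any cyclic decomposition. The positive Harris recurrence then gives, for $\mu\otimes\rho$-almost every initial condition and every $f\in L^1(\mu\otimes\rho)$, almost sure convergence of the discrete Cesàro averages $N^{-1}\sum_{n=0}^{N-1} f(Z_{n\Delta})$ to $\int f\dd(\mu\otimes\rho)$. To pass to the continuous-time average I would decompose $\int_0^T f(Z(t))\dd t$ into blocks of length $\Delta$ and apply the same Harris argument to the block-averaged functional $\tilde f(z):=\mathrm{E}_z\bigl[\int_0^{\Delta} f(Z(t))\dd t\bigr]$, which is in $L^1(\mu\otimes\rho)$ by Fubini and by invariance of $\mu\otimes\rho$ under each $P_t$ (Theorem~\ref{thm: main}), and then let $\Delta\downarrow 0$ (or equivalently use a standard sandwiching between two skeleton averages).

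The main obstacle I expect is the careful bookkeeping near the boundary when upgrading the discrete-time conclusion to continuous time: one needs to guarantee that the process does not spend anomalous time on a set of measure zero and that boundary hits do not cluster. Assumption~2.2 rules out infinitely many boundary crossings in finite time, Assumption~\ref{ass: speed growth condition} combined with Proposition~\ref{rmk: prob to reflect before Psi vanishes} controls trajectories approaching a vanishing-density boundary, and together these should make the block-decomposition argument routine. A secondary nuisance is checking that the petite set $C$ produced by Proposition~\ref{prop: minorization condition} has positive $\mu\otimes\rho$-measure (so that irreducibility with respect to $\Vol(\,\cdot\,\cap C)$ is equivalent to $\mu\otimes\rho$-irreducibility); this is immediate because $\mu\otimes\rho$ has a strictly positive density with respect to Lebesgue measure on $C\subset E^\circ$.
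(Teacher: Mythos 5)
Your proposal is correct and follows essentially the same route as the paper, which proves this theorem in one line by combining Theorem~\ref{thm: main}, Proposition~\ref{prop: accessibility} and Proposition~\ref{prop: minorization condition} and deferring the Meyn--Tweedie/Harris-recurrence machinery to the proof of Theorem~1 in \textcite{2015arXiv151002451B}. Your write-up simply makes explicit the skeleton-chain irreducibility, small-set, and continuous-time block-averaging steps that the cited reference carries out.
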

Theorem~\ref{thm: ergodicity} follows by Proposition~\ref{prop: accessibility}-\ref{prop: minorization condition} and Theorem~\ref{thm: main} (see \cite[proof of Theorem 1]{2015arXiv151002451B}).

\subsection{Extensions with sticky event times}\label{sec: sticky extensions}
Here we extend the framework for targeting measures which have a density relative to mixtures of Lebesgue and atomic components as in equation \eqref{eq: dominated measure}. This extension is based on the theory developed in \textcite{bierkens2021sticky}, for sampling measures which arise in sparse Bayesian problems.

We first introduce some notation. For each $i\in A$ in equation~\eqref{eq: dominated measure}, we define $c^+_i = \lim_{t \downarrow 0, c + vt}$ and 
$c^-_i = \lim_{t \downarrow 0, c - vt}$ and consider the space $\overline\RR = \bigotimes_{i=1}^d\overline\RR^{(i)}$, where $\overline\RR^{(i)} =\RR$ if $i \in A$ and  $\overline\RR^{(i)} =(-\infty, c_i^-] \sqcup [c^+, \infty)$ otherwise. Then, we define
\[
\partial S_i^- = \{ (x,v) \in \overline \RR \times \cV \colon x_i =  c^+, v_i> 0 \sqcup x_i = c^-, v_i < 0\}
\]
which is part of the state space of the process, and
\[
\partial S_i^+ = \{ (x,v) \in \overline 
\RR^d \times \cV \colon x_i =  c^+, v_i< 0 \sqcup x_i = c^-, v_i > 0\}
\]
which is part of the boundary of the process. Denote the transfer mappings $T_i \colon \partial S_i^+ \to \partial S_i^-$ 
by 
\begin{equation}\label{eq: transfer mapping}
T_i(x,v) = \begin{cases}
    x[i \colon c_i^+],v &\text{if } x_i = c^-_i, \,v_i >0, \\
    x[i \colon c_i^-],v &\text{if } x_i = c^+_i, \, v_0< 0. 
\end{cases}
\end{equation}
We let each coordinate process $(x_i,v_i)$ with $i \in A$, sticks at $(c_i,v_i)$, upon hitting the the state, for a time equal to $|v_i|\kappa_i(x)$. After the random time, the coordinate process jumps  on the other side of the half-interval, and  leaves the boundary with its original velocity component, see \textcite[Section 2.1]{bierkens2021sticky} for details. By introducing sticky coordinate components, we allow the process to visit the subspaces given by the restrictions $x_i = c_i$ for $i \in B$, $B$ being any subset of $A$. This was exploited in \textcite{bierkens2021sticky} for sampling target measures arising in Bayesian sparse problems and will be used here in the application considered in section~\ref{sec: epidemiology}.

The mathematical characterization of the sticky behaviour of each coordinate process $(x_i,v_i),\, i \in A$ is made by restricting the class of functions in \eqref{eq: boundary condition domain of the generator} as
\begin{align*}
\tilde \cA &= \{\cA \colon \kappa(x)|v_i| = f(T_i(x,v)) - f(x,v), (x,v) \in \partial S_i^+, i \in A\}.
\end{align*}

For a proof of ergodicity and invariant measure of the process, see \textcite[proof of Theorem 2.2-2.3]{bierkens2021sticky}.  
\begin{remark}\label{rmk: sticking at the boundary}
    If the sticky point $c_i$ lays on the boundary of a region supported by the measure $\mu$, we let the coordinate process leave the boundary on the same side of the real line from which it entered, by negating the velocity component (rather than using the transfer mapping in equation~\eqref{eq: transfer mapping}).
    In this case, in order to preserve the measure $\mu$ in \eqref{eq: general form target measure}-\eqref{eq: dominated measure}, one can verify that the coordinate process must stick at $c_i$ for a random time equal to $\frac{1}{2}|v_i|\kappa_i(x)$ where the factor $\frac{1}{2}$ compensates from the fact that the point $c_i$ can be reached by the coordinate process only from one side of the real line.
\end{remark}

Next, we give two concrete examples of PDMP samplers with boundaries that will be used in the applications of Section~\ref{sec: applications}.
\subsection{Example: \texorpdfstring{$d$}{d}-dimensional Zig-Zag sampler for discontinuous densities}\label{sec: Zig-Zag sampler for discontinuous densities}
The standard $d$-dimensional Zig-Zag sampler without boundaries is a PDMP sampler defined in the augmented space of position and velocity  $\RR^d\times \{-1,+1\}^d$. For the process in the state $z \in E$, the first random reflection time  is given by  $\tau = \min(\tau_1, \tau_2, \dots, \tau_d)$ where for $i= 1,2,\dots,d$,  $\tau_i\sim \text{IPP}(t \mapsto \lambda_{i, \text{b}}(\phi(z, t)))$ and
\begin{equation}
    \label{eq: rates zz}
    \lambda_{i, \text{b}}(x,v) =  \max(v_i(s(x)\partial_{x_i} \Psi(x)  - \partial_{x_i}s(x)), 0).
\end{equation}
At random time $\tau$, the velocity component changes as $v \mapsto v[k; -v_k]$ where $k = \argmin(\tau_1, \tau_2, \dots, \tau_d)$.  For more details on the standard Zig-Zag sampler see \textcite{Bierkens_2018} and \textcite{vasdekis2021speed}.

Here we extend the Zig-Zag process for discontinuous densities. Let split the space in disjoint sets: $\RR^d = \sqcup_{i \in K} \Omega_i$
and let the target measure be discontinuous over the boundary $\partial \Omega_i$. Then, the target measure is of the form of equation~\eqref{eq: general form target measure}, with $A = \{1,2,\dots,d\}$ and the invariant measure of the process is $\mu(\dd x)\times\text{Unif}(\{-1,+1\}^d)$. 
With this setting, for almost every $(x,i) \in \partial \Omega_i, i \in K$ there is a unique $j \in K$ such that the point $(x,j) \in \partial \Omega_j$.
We link points by the function $\kappa \colon \partial \Omega \to \partial \Omega$ with $\kappa(x,i) = (x,j)$. We set $\cT((x,i), \cdot ) = \delta_{\kappa(x,i)}(\cdot)$. As in this case $n(x,i) = -n(\kappa(x,i))$ we set
\[
\cR_1((x,i,v), \kappa(x,i), \dd w) = \delta_v(\dd w)
\]
and $\cR_2((x,i,v), \cdot) =  \delta_{v'}(\cdot)$
with 
\[
v_\ell' =  \begin{cases}
    -v_\ell & n_\ell(x) \ne 0,\\
    v_\ell & n_\ell(x) = 0.
\end{cases}
\]
The process, upon hitting a boundary $(x,i) \in \partial \Omega$, crosses the boundary with probability $\alpha((x,i), \kappa(x,i))$ without changing the velocity component and reflects  otherwise by switching the sign of only the components which are not orthogonal to $n(x,i)$.

\subsection{Example: \texorpdfstring{$d$}{d}-dimensional Bouncy Particle Sampler with teleportation on constrained spaces}\label{sec:  Bouncy Particle sampler with teleportation}
The $d$-dimensional Bouncy Particle Sampler (BPS) is defined in the  space of position and velocity $\RR^{d}\times \RR^d$. The velocity component has a marginal invariant measure equal to $\rho(\cdot) = \cN_d(0, I)$. For an initial state $z \in E$, the first random reflection time is distributed as  $\tau \sim \text{IPP}(t \mapsto \lambda_{\text{b}}(\phi(z, t)))$ with 
\[
\lambda_{\text{b}}(x,v) =   \max(\langle v, \nabla \Psi(x)\rangle s(x)   - \langle v,  \nabla s(x) \rangle , 0).
\]
At reflection time, the process changes velocity according to the  kernel $\cQ_{E, \text{b}}((x,v), \cdot) = \delta_{R_\Psi(x,v)}$ for
\[
R_\Psi(x,v) =  v - 2\frac{ \langle  v, \Psi(x) \rangle}{\|\Psi(x)\|}\Psi(x).
\]
At random exponential times with rate $\lambda_{\text{r}}(x,v) = c>0$, the process refreshes its velocity by drawing a new velocity $v' \sim \rho$.  See \textcite{2015arXiv151002451B} for an overview of the standard BPS. 
In this example, we extend the BPS for targets of the form 
\begin{equation}
    \label{eq: target bps teleport}
    \mu(\dd x) = \exp(-\Psi(x)) \ind_{A} \dd x
\end{equation}
for a set $A$ with a $(d-1)$-dimensional piecewise-smooth boundary $\partial A$ and a function $\Psi \in \cC^1(\RR^d)$. This corresponds to a smooth target density on a constrained space given by the set $A$.

While the standard BPS for constrained spaces as presented in \textcite{Bierkens_2018} would reflect the velocity every time the process reaches the boundary $\partial A$, in our setting, the kernel $\cT \colon  \partial \Omega\times \cB(\partial \Omega) \to [0,1]$, for some $ \partial \Omega \supseteq\partial A$ allows the process to jump in space when hitting $\partial \Omega$, effectively creating teleportation portals. To that end, we set 
$\cR_2(z,\cdot) = \delta_{R_n(z)}(\cdot)$ with 
\begin{equation}
    \label{eq: bps relfection at the boundary}
    R_n(x,v)  = v - 2\frac{ \langle  v, n(x) \rangle}{\|n(x)\|}n(x)
\end{equation}
while $\cR_1((x,v),y,\cdot) = \delta_{v'}(\cdot)$ with $v'(x,y,v)$  
such that 
$|v'| = |v|$ and $\langle v, n(x)\rangle = \langle v', -n(y)  \rangle$, for every $(x,v) \in \partial E^+, y \in \partial \Omega$. This corresponds to a rotation of the velocity vector which preserves the angles between the entrance and exit velocity vectors and normal vectors at the boundaries. 
The velocity of the sampler when bouncing or teleporting at the boundary can be randomized in the fashion of a generalized BPS (\cite{wu:hal-01968784}) by introducing a Crank-Nicolson update step in the orthogonal space relative to the normal vector of the boundary.

This framework allows the process to make jumps in $\partial \Omega$ and to visit disconnected regions or distant regions which are difficult to reach with continuous paths.
The problem of sampling from a conditional measure as in equation~\eqref{eq: target bps teleport} arises for example in the simulation of extreme events. In this case, standard Markov Chain Monte Carlo methods can fail to explore the full measure because of the inability of the chain to traverse subsets of measure close or equal to 0.

\section{Applications}\label{sec: applications}
In this section we motivate our work by applying the PDMPs with boundaries as described in Section~\ref{sec: pdmps samplers with boundaries} for sampling the latent space of infection times in the SIR model with notifications and for sampling the invariant measure of hard-sphere problems. 

\subsection{SIR model with notifications}\label{sec: epidemiology}
The model presented here is inspired by  the setting established in \textcite{jewell2009bayesian} for modelling  the spread of infectious disease in a population. Here we combine the PDMP for piecewise smooth densities with the framework presented in \textcite{bierkens2021sticky} for adding/removing efficiently in continuous time occult infected individuals (infected individuals which have not been notified up to the observation time) by means of introducing sticky events which are events after which the process sticks to lower dimensional hyper-planes for some random time.

Consider an infection process $\{Y(t) \in \{S, I,N,R\}^d \colon 0<t<T\}$ on a population of size $d$. Each coordinate $Y_i(t)$ takes values
\[
y_i(t) = \begin{cases}
    S & \text{if $i$ is \emph{susceptible} at time $t$},\\
    I & \text{if $i$ is \emph{infected} at time $t$},\\
    N & \text{if $i$ is \emph{notified} at time $t$},\\
    R & \text{if $i$ is \emph{removed} at time $t$}.\\
\end{cases}
\]
Each coordinate-process $(Y_i(t))_{t>0}$ is allowed to change state in the following direction:  $S\to I \to N \to R$.
 
For every individuals on $i = 1,2,\dots,d$, define 
\[
\tau_i =  \inf\{t>0 \colon y_i(t) = I\}, \quad \tau^\star_i =  \inf\{t>0 \colon y_i(t) = N\}, \quad \tau_i^\circ =  \inf\{t>0 \colon y_i(t) = R\}
\]
respectively for the first infection time, notification time, removing time of individual $i$ with the convention that $\tau_i^\star (\tau_i^\circ) = \infty$ if $\tau_i^\star (\tau_i^\circ)  \ge T$.  We observe $\tau^\star$ (the notification times) and $\tau^\circ$ (the removing times) and we are interested on recovering the minimum between the infection times and the observation time $T$: $ x := (x_i = \tau_i \wedge T \colon i =1,2,\dots,d)$. We this convention, $x_i = T$ when an individual $i$ has not been infected before observation time $T$, hence is susceptible at observation time.

For every $i=1,2,\dots, d$, individual $i$  changes its state from $S$ to $I$ according to an inhomogeneous Poisson process with rate $t \mapsto \beta_i(y(t))$ for a function $\beta_i \colon \{S,I,R,N\}^d \to \RR^+$ usually referred as the \emph{infectious pressure} on $i$. As  $\{y(t) \colon 0\le t \le T\}$ can be recovered by knowing  $(x,\tau^\star, \tau^\circ)$, we write $\beta_i(x) := \beta_i(y(x_i))$ (with this notation, we omit the dependence of $\beta_i(x)$ on $(\tau^\star, \tau^\circ)$ which are known and fixed throughout).
For each pair $(i,j)$ with $i\ne j$, define the \emph{infection rate}
\[
\beta_{i,j}(x) = \begin{cases}
 C_{i,j}, & x_i < x_j \le \tau_i^\star \\
 \gamma C_{i,j} &  \tau_i^\star < x_j \le \tau_i^\circ \\
 0 & \text{otherwise},
\end{cases}
\]
where $\gamma \in (0,1)$ is the factor of reduction of the infectivity after notification time and $C_{i,j} := d(i,j) \vartheta_{i}\xi_j$ is a measure of infectivity of $i$ towards $j$. Here  $d\colon \{1,2,\dots,d\}^2 \to \RR^+$ is an inverse distance metric between two individuals and  $\vartheta_i, \xi_i >0$ are seen  respectively as the infectivity and susceptibility baselines of individual $i$ (see \cite{jewell2009bayesian} for more details). The infectious pressure on individual $i$ is given by 
\[
\beta_j(x) = \sum_{i\ne j} \beta_{i,j}(x).
\]
Denote the set of individuals which have been notified before time $T$ by $\cN_{T^-} := \{i \colon \tau^\star_i < T\}$ and by $\cN_{T^-}^c := \{1,2,\dots,d\}\setminus \cN_T$ its complementary. We assume that the delay between infection and notification $\tau^\star_i - \tau_i \mid \tau_i$ of individual $i$ is a random variable with density $f$ and distribution $F$.

Then, the random vector $x = (x_1,x_2,\dots,x_d)$ is distributed according to 
\begin{equation} \label{eq: target sinr}
  \mu(\dd x) \propto \bigotimes_{i=1}^d \rho_i(x) \mu_i(\dd x_i).
\end{equation}
where $\rho_i(x) \mu_i(\dd x_i)$ can be heuristically interpreted as the distribution of the $i$th infection time. For $i\in \cN_{T^-}^c$,  
\begin{align} 
        \rho_i(x) &= (1 - F(T-x_i))\beta_i(x)\exp\left(-B_i(x)\right), \nonumber \\
      \mu_i(\dd x_i) &= \ind_{(0\le x_i \le T)} \dd x_i  + \kappa_i(x) \delta_T(\dd x_i)
      \label{eq: target measure not notified}
\end{align}
with $\kappa_i(x) = \frac{1}{\beta_i(x)}$ and $B_i(x) = \int_0^T \beta_i(x[i; s]) \dd s$. See Appendix~\ref{app: derivation of the measure in epidemiology} for the details of the derivation of the measure above. Here the point mass at $x_i = T$ absorbs the event that individual $i$ has not been infected before time $T$ while the remaining part of the density represents the event for the individual $i$ to be infected but not notified before time $T$ (in such case the infected individual $i$ is often referred as \emph{occult}). 
For $i\in \cN_{T^-}$,  
\begin{align} 
 \rho_i(x) &= \beta_i(x)\exp\left(-B_i(x)\right)f(\tau^\star_i - x_i), \nonumber \\
 \mu_i(\dd x_i) &= \ind_{(0\le x_i \le \tau^\star_i)} \dd x_i, \label{eq: target measure notified}
\end{align}
see Appendix~\ref{app: derivation of the measure in epidemiology} for details.

We apply the Zig-Zag sampler for discontinuous densities as presented in Section~\ref{sec: Zig-Zag sampler for discontinuous densities}, with no speed-up function $(s(x) = 1)$ and with sticky events to target the measure $\mu$. Below, we summarize the behaviour of the process:  
\begin{itemize}
    \item The process reflects velocity randomly in space according to the gradient of the continuous component of the density of $\mu$. The reader may find in  Appendix~\ref{app: Computing reflections times and probabilities to cross discontinuities} the explicit computations of the reflection times. 
    \item For every $i = 1,2,\dots,d$, the process hits the boundary when the coordinate  $x_{i}$  hits a element of the vectors $x_{-i}, \tau^\star, \tau^\circ$. This is because $x_i \mapsto \beta_{i}(x)$ is discontinuous on those points. For the boundary corresponding to two coordinates of $x$ colliding, the process bounces off the discontinuity with some probability by changing the sign of the velocity of both coordinates, while if the discontinuity corresponds to a coordinate of $x$ colliding with a notification/removal time, the process bounces off just by changing the sign of the velocity of that coordinate.
    Upon hitting a given boundary, the process traverses the discontinuity without changing its velocity with some probability.
    In particular, each coordinate-process $(x_i,v_i)$ for $i=1,2,\dots,d$ never crosses the boundary $(x_i, v_i) = (\tau_i^\star, +1)$ and $(x_i, v_i) = (0, -1)$. 
    \item Each particle $(x_i, v_i)$ with $i \in \cN^c_T$ sticks at $T$ for an exponential time with rate equal to $\beta_i(x)/2$ (the factor $\frac{1}{2}$ originates since the Dirac measure is located at the boundary of the interval, see remark~\ref{rmk: sticking at the boundary}), upon hitting $T$. The sticking time of the particle $x_i$ corresponds to the individual indexed by $i$ being susceptible and not infected.
\end{itemize}

 \subsubsection{Numerical experiment}
 We fix $d = 50$ and set $\xi_1, \xi_2,\dots,\xi_d,\vartheta_1,\dots,\vartheta_d$ to be the realization of i.i.d. random variables distributed according to $\text{Unif}([0,1])0.9 + 0.7$ and set $d(i,j) = 0.4 (\ind_{(|i - j| \le 5)})$. We assume that $\tau^\star_{i} - \tau_{i} \mid \tau_i \sim \text{Exp}(0.3)$. 
 
 To generate a synthetic dataset, we simulated forward the model up to time $T=5.0$, setting $\tau_{25} = 0$, see Figure~\ref{fig:foward_sim} for visualizing the dynamics of each individual. Before time $T$, 47 individuals have been infected, 28 individual have been notified, 18 individuals have been removed.

 \begin{figure}[!ht]
     \centering
     \includegraphics[width = 1.0\textwidth]{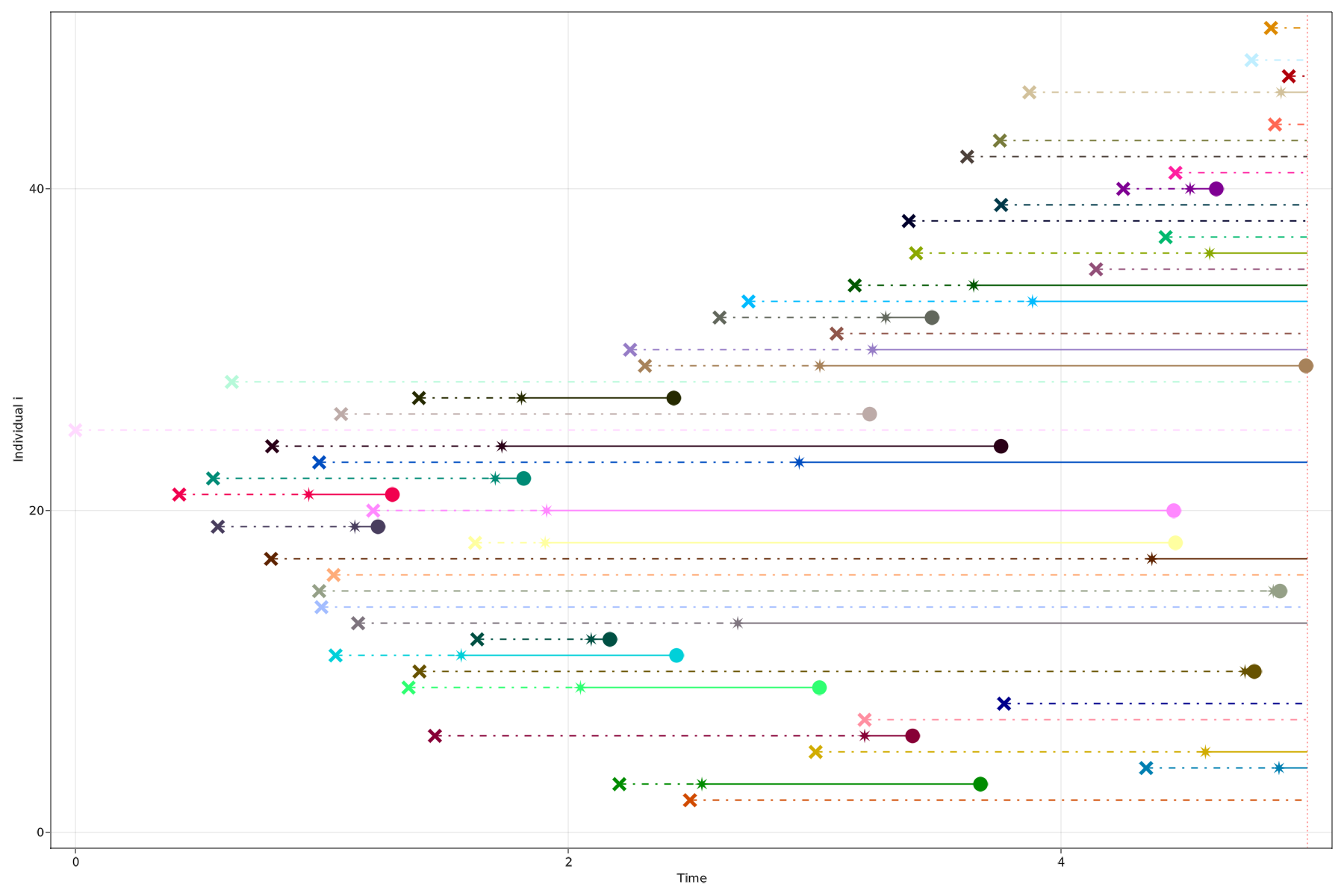}
     \caption{Simulated data from the  SINR model of the population $y$-axis (conditional on the infection time of individual 25 to be $0$). The symbols $\times, \star, \circ$ indicates respectively the unobserved infection times, the observed notification times and the observed removal times for each individual. The dotted lines indicate the unobserved time between infection and notification times.
     \label{fig:foward_sim}}
 \end{figure}

 We fix $x_{25} = \tau_{25} = 0$ and simulate the $(d-1)$-dimensional sticky Zig-Zag sampler with final clock $T^\star = 500$. Figure~\ref{fig:some_traces} shows the marginal posterior densities and the final segment of the Zig-Zag trajectory relative to coordinates 11, 13, 38, 49. Those individuals  have a different status at time $T$: susceptible,  occult  (infected but not notified), notified and removed. 
 
  \begin{figure}[h!]
     \centering
     \includegraphics[width = 0.9\textwidth]{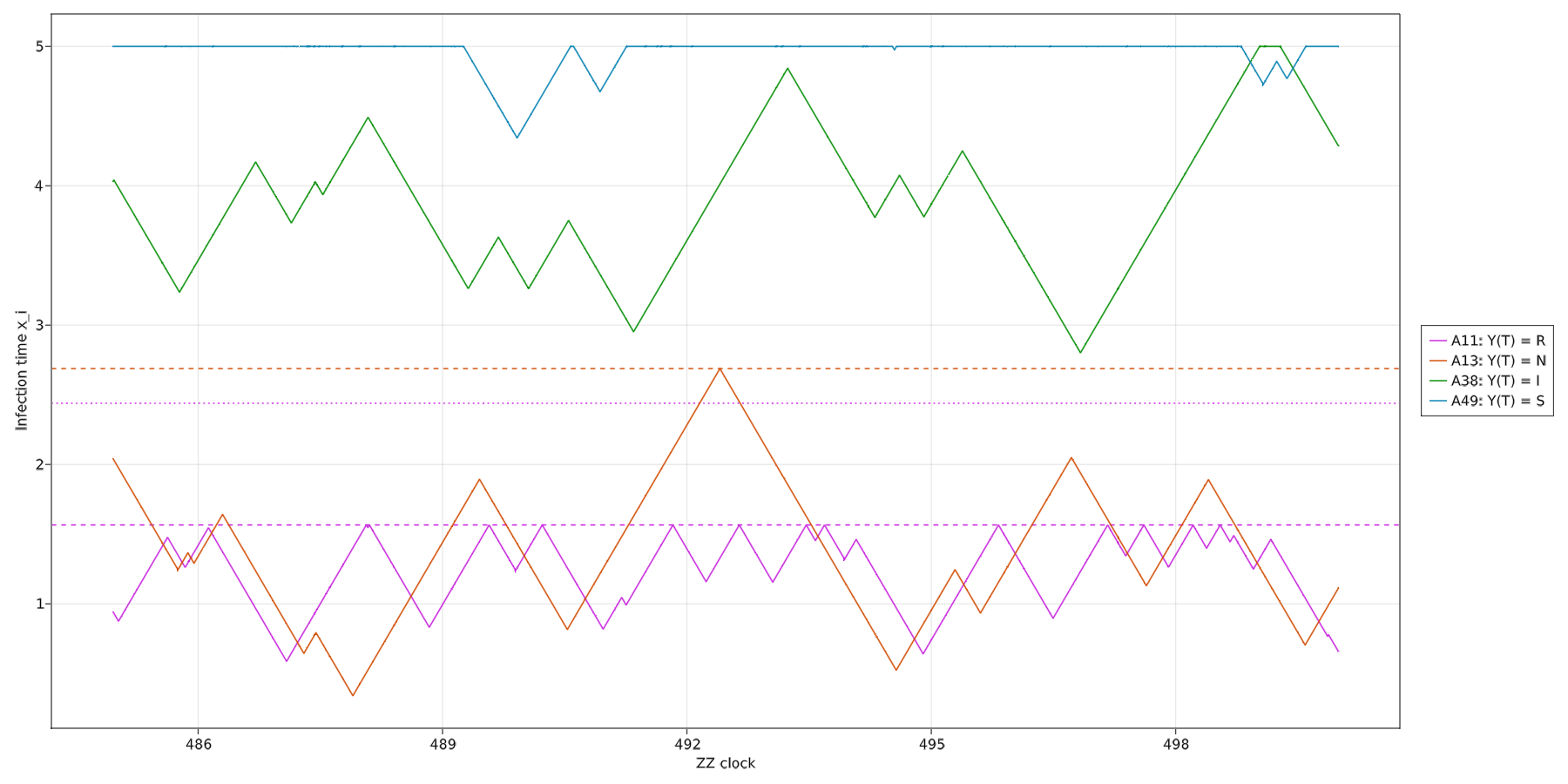}
     \includegraphics[width = 0.9\textwidth]{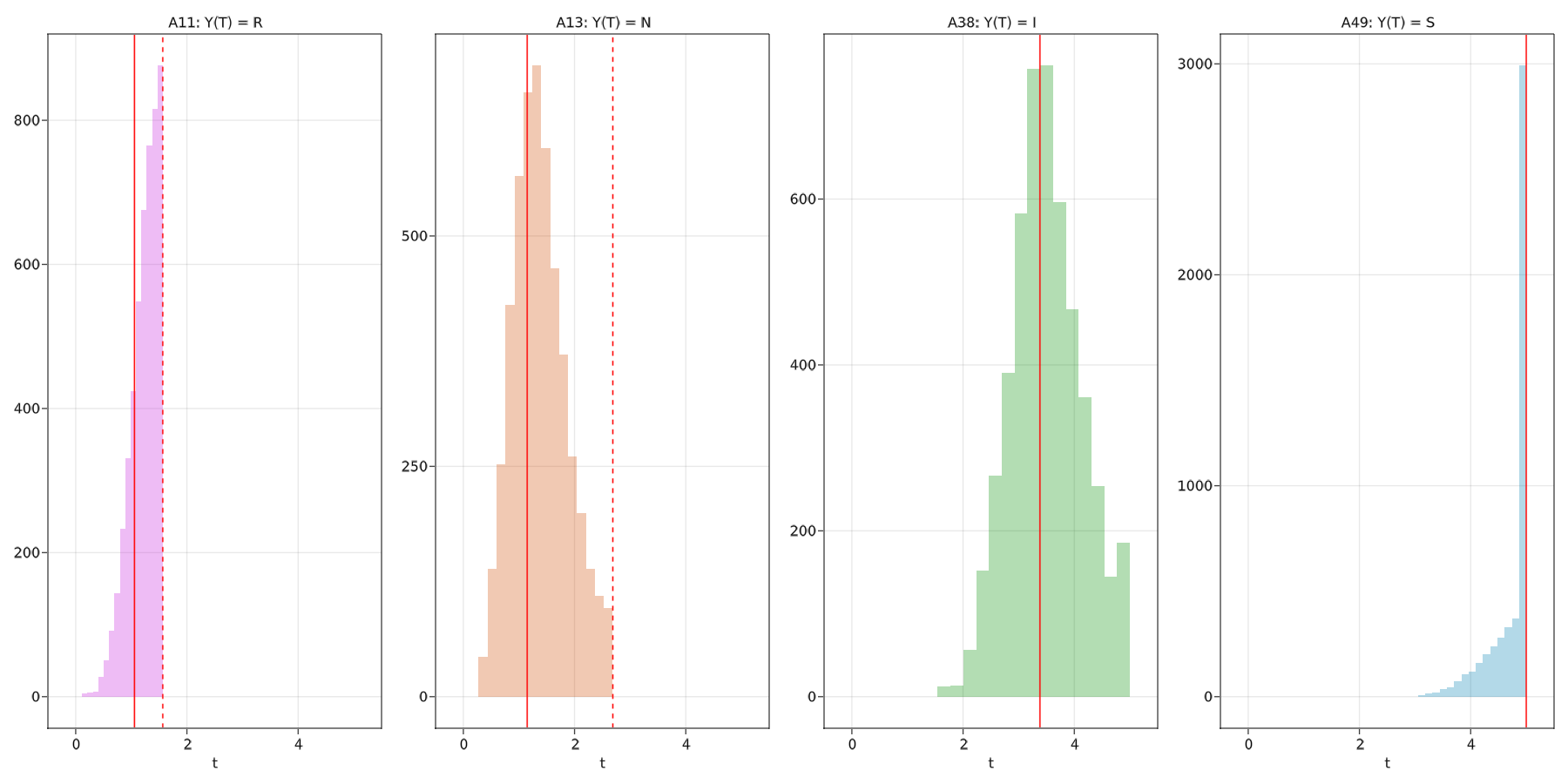}
     \caption{Top panel: Final 25 time units of the coordinate-process relative to the infection times of individuals 11  (pink), 13 (orange), 38 (green), 49 (blue) which, at observation time $T$ had a different status (susceptible, infected but not notified, notified  and removed). Dashed lines corresponds to the notification times, dotted lines corresponds to the removing time. Bottom panel: marginal densities of the infection times of those individuals estimated with the full trajectory of the Zig-Zag.
     }
     \label{fig:some_traces}
 \end{figure}

\subsection{Hard-sphere models} \label{sec: teleportation}
     As an application for teleportation which is relevant in statistical mechanics, we consider the sampling problem in hard-sphere models. This class of models motivated the first Markov chain Monte Carlo method in the pioneering work of \textcite{metropolis1953equation}. More recently, PDMP samplers have been employed for this class of problems  for example in \textcite{Michel_2019} and \textcite{monemvassitis2022pdmp}. For an overview of hard-sphere models, see \textcite[Chapter 2]{krauth2006statistical}.  For a survey of MCMC methods used for sampling from hard-sphere models, see \textcite{faulkner2022sampling}. For related models, see \textcite{moller2010perfect}. In this section, we use the framework introduced in Section~\ref{sec: pdmps samplers with boundaries} to define teleportation portals which swap pairs of hard-spheres when they collide. This idea has been already investigated (not in the context of PDMPs) for constrained uniform measures and referred as \emph{Swap Monte Carlo move}, see for example \textcite{berthier2016equilibrium}. 
     
     We consider $N$ particles, each one taking values in $\RR^d$. Denote the configuration of all particles by $x = \{x^{(i)} \in \RR^d \colon 1\le i \le N\}$ where we identify the location of the $i$th particles by $x^{(i)} = x_{[(i-1)d + 1, id]}$. Consider a measure $\mu^\star (\dd x) = \exp(-\Psi(x)) \dd x$, where $\Psi(x) = \sum_{i=1}^N  \Psi_0(x^{(i)})$, for a smooth function $\Psi_0$ supported on $\RR^{d}$. We assume that each particle $i=1,2,\dots,d$ is a hard-sphere centered in $x^{(i)}$ and with radius $r_i>0$  and consider the conditional invariant measure
    \[
    \mu(\dd x) \propto \mu^\star(\dd x) \ind_{x\in A}
    \]
    with  $A = \bigcap_{i =1}^N \bigcap_{\substack{j = 1,\\j \ne i}}^N A_{i,j}$ and 
    \[
    A_{i,j} = \{ x \in \RR^{dN} \colon \|x^{(i)} - x^{(j)}\| \ge (r_i +r_j)\},
    \]
    that is the measure $\mu^\star$ conditioned on the space where all the hard-spheres do not overlap.
    The restriction for the process to in $A$ creates boundaries which slow down the exploration of the state space for standard PDMP samplers as  they must reflect the velocity vector when hitting the set $A^c$, thus never crossing the boundary, see \textcite{Bierkens_2018} for a detailed description of standard PDMPs on restricted domains. 
    
    We apply the $(Nd)$-dimensional Bouncy Particle Sampler described in Section~\ref{sec:  Bouncy Particle sampler with teleportation} with no speed-up ($s(x) = 1$). To that end, we define the boundary of the process 
    \[\partial E^+ = \{ (x,v) \in \partial \Omega\times \RR^{dN} \colon  \langle n(x), v\rangle > 0 \}\] 
    where $\partial \Omega  = \bigcup_{i =1}^N \bigcup_{\substack{j = 1\\j \ne i}}^N \partial \Omega_{i,j}$ and
    \[
    \partial \Omega_{i,j} = \{ x \in \RR^{dN} \colon |x^{(i)} - x^{(j)}| = (r_i +r_j)\}.
    \]
   We define the kernel $\cT(x, \cdot) = \delta_{\kappa(x)}(\cdot)$ for a function $\kappa\colon \partial \Omega \to \partial \Omega$. For $x \in \partial \Omega_{i,j}$, we set 
   \[
[\kappa(x)]^{(\ell)} = \begin{cases}
         x^{(j)} + \frac{x^{(i)} - x^{(j)}}{r_i + r_j}(r_i - r_j) &\text{ if } \ell = i\\
    x^{(i)} + \frac{x^{(j)} - x^{(i)}}{r_i+ r_j}(r_j - r_i)  &\text{ if } \ell = j \\
    x^{(\ell)} &\text{ otherwise }\\
\end{cases} 
\]
 which attempt to swap the location of the balls $i$ and $j$ by moving the smaller hard-sphere more than the larger hard-sphere, while preserving the location of the extremities of the two hard-spheres. The teleportation is successful with non-zero probability only if  $\kappa(x) \in A$,  that is, if after teleportation, no hard-spheres overlaps, see Figure~\ref{fig: teleportation 1} for an illustration. Other possible choices of teleportation portals are possible, see Appendix~\ref{app: hard-sphere model} for a discussion.

The Bouncy Particle Sampler with teleportation at the boundary behaves as follows. For any point $(x,v) \in \partial E^+$ and for $\ell = 1,2,\dots,N$:
    \begin{itemize}
    \item propose to teleport to $y = \kappa(x)$; 
    \item if $\kappa(x)\in A$, then with probability $\alpha(x, \kappa(x))$, set the new state equal to $(\kappa(x), w)$ with $w =  R_{n}(\kappa(x), -v)$; 
    \item otherwise reflect the velocity at the boundary and set the new state equal to $(x, w)$ with $w = R_{n}(x,v)$,
\end{itemize}
where $R_n$ is defined in equation~\eqref{eq: bps relfection at the boundary}.
\begin{figure}[h!]
    \centering
    \includegraphics[width = 0.6\textwidth]{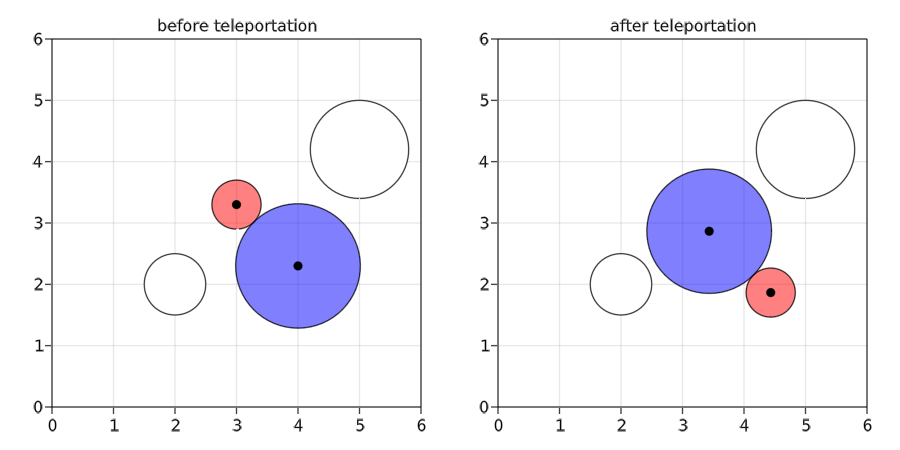}
    \includegraphics[width = 0.6\textwidth]{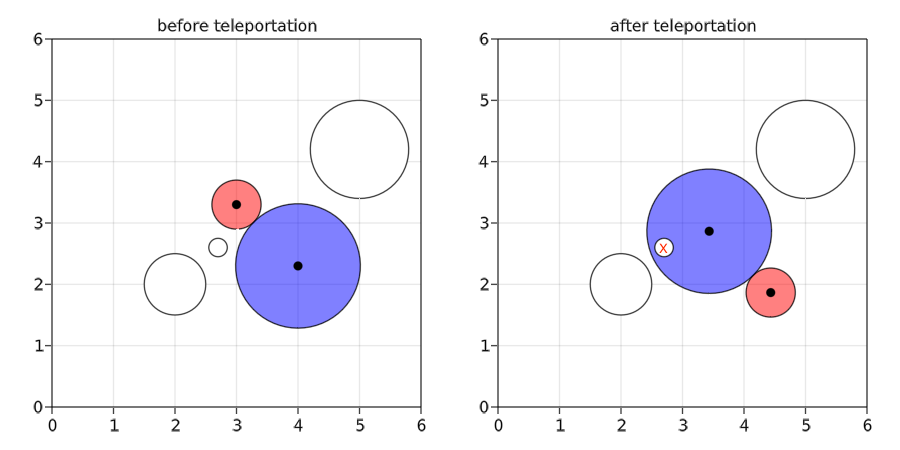}
    \caption{Illustration of the teleportation portals. Left panels: two configurations $x \colon (x,v) \in \partial E^+$. Right panels: $y = \tau(x)$.
     Top panels: the proposed state after teleportation is valid, i.e. $\tau(x) \in  A$. Bottom panels: the proposed state after teleportation is not valid i.e. $\tau(x) \in A^c$.}
    \label{fig: teleportation 1}
\end{figure}

\subsubsection*{Numerical experiment}
We fix $N = 6$ and $d = 2$. We let $r_i \sim 2.0 + 1.5\text{Unif}([0,1])$. We set 
\[\Psi_0(x) = \frac{1}{4}\|x\|^2, \quad x \in \RR^d\] 
 and compare the performance of the standard BPS on constrained space as presented in \textcite{Bierkens_2018} and the BPS with teleportation as presented in Section~\ref{sec:  Bouncy Particle sampler with teleportation}. We initialize both the samplers in a valid configuration, see Figure~\ref{fig: initialization of bps}. Both samplers have refreshment rate $\lambda_{\mathrm{r}, E}(z) = 0.01$.
 
 Figure~\ref{fig: angles} compares the trace of the functional $\langle x_i, x_j \rangle$ where $i,j \in \{1,2,\dots,N\}$ are the indices of the hard-spheres with largest radius: $r_i = \max(r), r_j = \max(r_{-i})$ obtained when running the  two samplers with final clock equal to 2,000. The animations showing the evolution of the hard-spheres according to the dynamics of the standard BPS and BPS with teleportation may be found at \sloppy{\texttt{https://github.com/SebaGraz/hard-sphere-model}}.

\begin{figure}[h!]
    \centering
    \includegraphics[width = 0.6\textwidth]{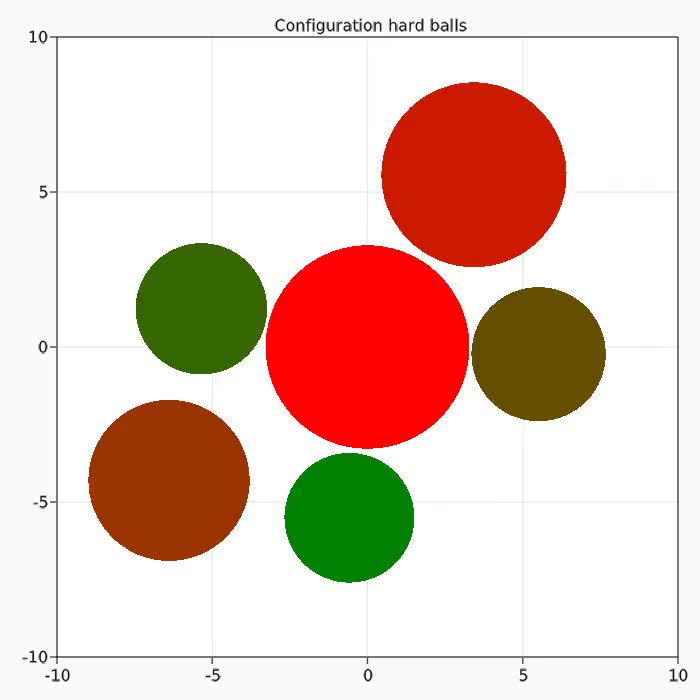}
    \caption{Initial configuration $x(0)$ for the standard BPS and BPS with teleportation. Here $x(0)\in A$.}
    \label{fig: initialization of bps}
\end{figure}

\begin{figure}[h!]
    \centering
    \includegraphics[width = 0.49\textwidth]{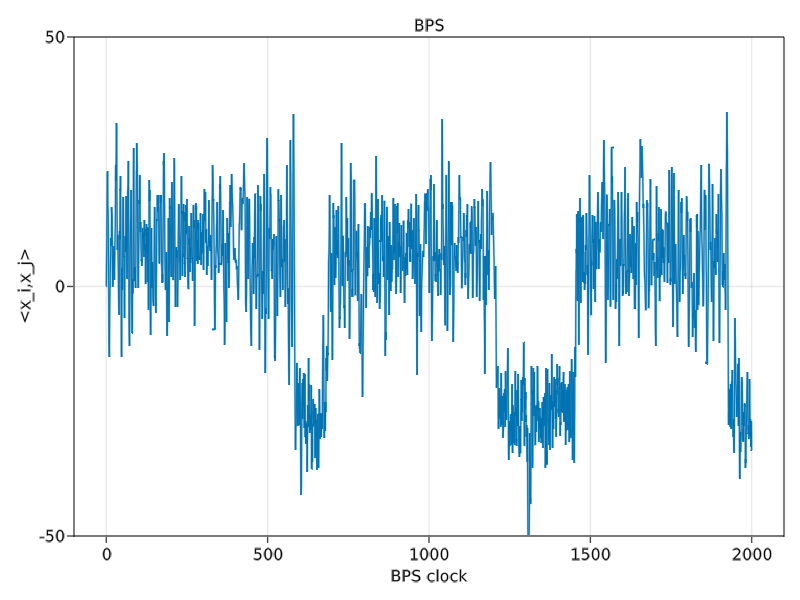}
    \includegraphics[width = 0.49\textwidth]{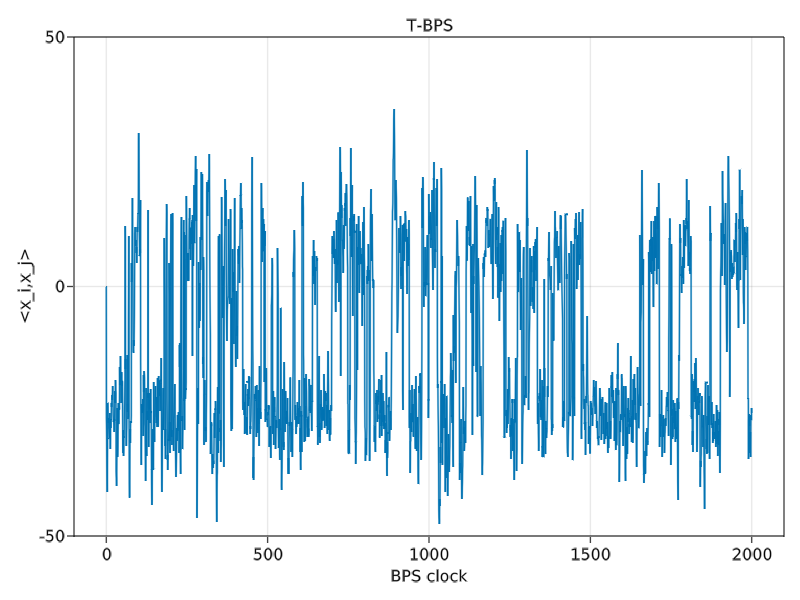}
    \caption{trace $t \mapsto \langle x_i(t), x_j(t) \rangle$ where $t$ is the clock of the Bouncy Particle Sampler (BPS) and $i,j$ are the indices of the two hard-sphere with largest radius. Left: BPS without teleportation.  Right: BPS with teleportation.}
    \label{fig: angles}
\end{figure}

\section{Discussion}\label{sec: discussion}
The material presented in this work suggests the following research directions.

Firstly, the theoretical framework presented in Section~\ref{sec: pdmps samplers with boundaries} introduces a speed-up function $s$ which, as discussed in Remark~\ref{rmk: sppedup} and Remark~\ref{rmk: sppedup 2} can be chosen to be constant in each set $\Omega_i,\, i \in K$ and can be tuned to increase the probability to cross the boundary (provided that Assumption~\ref{ass: speed growth condition} is satisfied) or to approximate with piecewise constant functions a continuous speed-up function for which the deterministic dynamics cannot be integrated analytically. In Section~\ref{sec: applications}, we set $s(x) = 1$ and we have not yet investigated the benefits of tuning the speed-up function. It was proven in \textcite{vasdekis2021speed} that the speed-up function is beneficial for heavy tailed distribution and we expect it can improve the performance of the sampler for multi-modal densities.

Secondly, in Section~\ref{sec: epidemiology}, we use PDMPs for sampling the latent space of infection times in the SINR model. In contrast with the method proposed for example in \cite{jewell2009bayesian}, this framework allows to set the state of each individual in $\cN^c_T$ from susceptible to infected (and vice-versa) continuously in time, without  using the framework of Reversible-Jump MCMC and without defining tailored proposal kernel for the trans-dimensional moves when changing individuals from susceptible to infected and vice-versa. Furthermore the Zig-Zag sampler, due to  1) its non-reversible nature, 2) its ability of sampling the minimum between the observation time and the infection time and 3) its `local' implementation which exploits the sparse conditional independence structure of the target measure to gain computational efficiency (see e.g. \cite[Section~4]{bierkens2020piecewise}), shows promise to scale well with respect to the population size.  
The analysis of mixing times of the sampler and the scaling of the complexity of the algorithm remains an open challenge.

Thirdly, in Appendix~\ref{app: hard-sphere model}, we discuss  some possible choices of teleportation portals for hard-sphere models which can improve the mixing times of the sampler. The list of teleportation portals considered is far from exhaustive and a systematic study on the choice of teleportation portals for hard-sphere models is left for future research. 

Finally, the framework presented in this work can be leveraged to define boundaries, including teleportation portals, also when targeting differentiable densities with respect to the Lebesgue measure. This allows the process  to jump in space and improve its mixing properties. This research direction is not explored here.  

\printbibliography
\newpage
\appendix

\section{Theoretical results}\label{app: theoretical results}
\subsection[Extended generator]{Extended generator of PDMPs with boundaries}\label{app: generator of PDMPs}
The extended generator for PDMPs with boundaries is given in \textcite[Section 26]{Davis1993} and, for the PDMP samplers considered in Section~\ref{sec: pdmps samplers with boundaries}, takes the form 
\[
\begin{split}
\cL f(x,v) = \langle v, \nabla f(x)\rangle s(x) + \lambda_{ \mathrm{b}}(x,v) \int \cQ_{E, \mathrm{b}}((x,v), \dd z) (f(z) - f(x,v)) \\ + \lambda_{\mathrm{r}}(x,v) \int \cQ_{E,\mathrm{r}}((x,v), \dd z) (f(z) - f(x,v)).
\end{split}
\]
Here we let $\cL$ act on functions $f$ in the set
\begin{align*}
    \cA = \{f \in C_c(E); \quad & \, t\mapsto f(\phi(z,t)) \text{ is absolutely continuous } \forall z \in E;  \\
    &   f(z)= \int_{\partial E^-} f(z') Q_{\partial E}(z, \dd z'), \quad \forall z \in \partial E^+\}.
\end{align*}
The set $\cA$ is contained in the domain of the extended generator $\cD(\cL)$ given in \textcite[Section 26]{Davis1993}.
\subsection{Proof Proposition~\ref{prop: integral of generator with standard assumptions}}\label{app: proof proposition condiotions on the interior}
By applying the divergence theorem, we have that, for $f \in \cA$,
\begin{align}
    \int_E \cL f \dd (\mu \otimes \rho) &= \int_{E} \left( \langle v,  s(x) \nabla \Psi(x) - \nabla s(x) \rangle - \lambda_{\mathrm{b}}(x,v) \right) f(x,v) \rho(\dd v) \mu(\dd x)  \label{eq: integral of generator 1}\\
    &\quad + \int_{E} \int_E \lambda_{\mathrm{b}}(z) \cQ_{E,\mathrm{b}}(z, \dd z') f(z') (\rho\otimes\mu)(\dd z)\label{eq: integral of generator 2}\\
    &\quad  +\int_{E} \lambda_{\mathrm{r}}(z)\int_E \cQ_{E,\mathrm{r}}(z, \dd z') (f(z') - f(z)) (\rho\otimes\mu)(\dd z) \label{eq: integral of generator 3}\\
    &\quad + \int_{\partial E} \langle v, n(x)\rangle f(x,v)s(x) \rho(\dd v) \partial \lambda(\dd x) \label{eq: integral of generator 4}.
\end{align}

Then, by Assumption~\ref{ass: conditions on the interior}, the right hand-side of \eqref{eq: integral of generator 1} cancel with the term in \eqref{eq: integral of generator 2}. Furthermore, by Assumption~\ref{ass: conditions on the refreshments}, the refreshment kernel is invariant to $\rho\otimes\mu$ so  the term in~\eqref{eq: integral of generator 3} is also equal to 0. We are left with the term in~\eqref{eq: integral of generator 4},  proving Proposition~\ref{prop: integral of generator with standard assumptions}.
 \subsection{Open set irreducible PDMPs}\label{app: open set irreducible}

 \begin{lemma}\label{prop: connected to path} Let $A \subset \RR^d$. If $A$ is an open and connected set, then for any two points $x,y \in A$, there exists a continuous and piecewise linear path in $A$ connecting $x$ and $y$.
\end{lemma}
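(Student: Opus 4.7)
The plan is to use the classical clopen argument on a connectedness decomposition, exploiting that $A$ is open so that every point has a small ball inside $A$ through which straight segments can be drawn.

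Fix an arbitrary base point $x_0 \in A$ and define
\[
U := \{ y \in A : \text{there exists a continuous piecewise linear path in } A \text{ from } x_0 \text{ to } y \}.
\]
I would first note $U$ is non-empty, since $x_0 \in U$ via the constant path. Then I would show $U$ is open in $A$: given $y \in U$, openness of $A$ yields $r > 0$ with the open ball $B_r(y) \subset A$; any $z \in B_r(y)$ can be joined to $y$ by the straight segment $\{(1-t)y + tz : t \in [0,1]\} \subset B_r(y) \subset A$. Concatenating this segment to the existing piecewise linear path from $x_0$ to $y$ produces a piecewise linear path in $A$ from $x_0$ to $z$, so $B_r(y) \subset U$.

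Next I would show $A \setminus U$ is open in $A$ by the symmetric argument: if $y \in A \setminus U$, pick $B_r(y) \subset A$; were any $z \in B_r(y)$ to lie in $U$, then concatenating a piecewise linear path from $x_0$ to $z$ with the segment from $z$ to $y$ would exhibit $y \in U$, a contradiction. Hence $B_r(y) \subset A \setminus U$. Since $A$ is connected, a non-empty clopen subset of $A$ must be all of $A$, giving $U = A$. Applying this to $y \in A$ gives a piecewise linear path from $x_0$ to $y$; reversing and concatenating two such paths produces a piecewise linear path between any prescribed pair $x, y \in A$.

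There is no real obstacle here beyond setting up the clopen decomposition cleanly; the only subtlety is ensuring the concatenation of two piecewise linear paths is itself piecewise linear (which is immediate, since we are joining finitely many linear segments to one more linear segment). Because $A$ is open we never leave $A$ when drawing the short segments inside the balls $B_r(y)$, which is exactly the input needed to upgrade mere path-connectedness to piecewise linear path-connectedness.
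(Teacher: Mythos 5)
Your proof is correct and follows essentially the same route as the paper: both define the set of points reachable from a base point by piecewise linear paths in $A$, show it is non-empty, open, and has open complement using small balls guaranteed by the openness of $A$, and conclude by connectedness. The only cosmetic difference is that you fix an arbitrary base point $x_0$ and join two paths through it, whereas the paper argues directly from $x$; this changes nothing of substance.
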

\begin{proof}
    Denote the set $P := \{z \in A \colon \text{there exists a continuous and piecewise linear path connecting } x \text{ and } z\}$. In the following, we show that $P$ is both open and closed in the relative topology and is not empty. Since $A$ is connected, the only two sets which are both open and closed must be $A$ and $\emptyset$. Hence we conclude that $P = A$.

    ($P$ is not empty). Since $A$ is open, there exists an $r > 0$ with $B_r(x) \subset A$. Since $B_r(x)$ is convex, we can connect $x$ with any point $y \in B_r(x)$ with a linear segment, so that $P$ is not empty.

    ($P$ is open). For any $z \in P$, every  $y \in B_r(z)$ is path connected to $z$ with a linear segment, hence it is connected to $x$ as well by piecewise linear paths. We conclude that $B_r(z) \subset P$. Therefore $P$ is open.

    ($A\setminus P$ is open. Therefore $P$ is closed).  For every $x \in A \setminus P$,  $\exists \, r>0, \, B_r(x) \subset A$. Furthermore $B_r(x) \cap P = \emptyset$, since otherwise we could connect $x$ with a linear segment to the region $P$. Hence $B_r(x) \subset A\setminus P$. Therefore $A\setminus P$ is open and $P$ is closed. 
\end{proof}
 We present now a straightforward corollary of Lemma~\ref{prop: connected to path}. Below, the term `admissible path' was introduced in Definition~\ref{def: admissible paths} in the manuscript. Recall that $E_i = E_i^\circ \cap \partial E_i^-$, where $ E_i^\circ$ is open and $\partial E_i^-$ contains those points on the boundary obtained by applying the flow $\phi(z, \cdot)$ for some $z \in E^\circ_i$ backward in time.

\begin{corollary}\label{cor: exists a path connecting points} 
For every $i \in K$ and every two points $z, \,z' \in E_i^\circ$, there exists an admissible path $(z(t))_{0\le t\le T}$ with $z(0) = z$ and $Z_T = z'$. 
\end{corollary}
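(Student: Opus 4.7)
The plan is to lift the geometric piecewise linear path in $\Omega_i$ supplied by Lemma~\ref{prop: connected to path} to an admissible path in the augmented space $E_i^\circ$. Write $z = (x,v)$ and $z' = (x',v')$, with $x,x' \in \Omega_i$. Since $\Omega_i$ is an open, connected subset of $\R^d$ (as a path-component of the state space $\Omega$), Lemma~\ref{prop: connected to path} delivers a continuous piecewise linear path $\gamma \colon [0,L] \to \Omega_i$ with $\gamma(0) = x$, $\gamma(L) = x'$, and vertices $x = x^{(0)}, x^{(1)}, \ldots, x^{(M)} = x'$. By compactness of $\gamma([0,L])$ and openness of $\Omega_i$, there exists $\delta > 0$ such that the $\delta$-tube around $\gamma$ lies inside $\Omega_i$.

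Next I would specify the data $(\tau^{(k)}, z^{(k)})_{k=0,\dots,N}$ of Definition~\ref{def: admissible paths}. Two cases arise depending on the velocity space~$\cV$. In the unrestricted case ($\cV = \R^d$, e.g.\ the Bouncy Particle Sampler) I set $v^{(k)}$ to be the unit vector pointing from $x^{(k)}$ to $x^{(k+1)}$; since $s$ is continuous and strictly positive on the compact segment $[x^{(k)},x^{(k+1)}] \subset \Omega_i$, the flow equation~\eqref{eq: deterministic dynamics 1} admits a finite, positive first hitting time $\tau^{(k)}$ of $x^{(k+1)}$, giving the desired segment. In the restricted case ($\cV = \{-1,+1\}^d$, Zig-Zag), I would refine every linear segment into finitely many sub-segments of diameter at most $\delta/\sqrt{d}$ and then replace each sub-segment by a succession of at most $d$ axis-aligned pieces travelling at unit speed along the coordinate directions; the $\delta$-tube condition guarantees that each such rectilinear detour remains in $\Omega_i$, and the speed $s>0$ again determines a positive $\tau^{(k)}$ for each piece.

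Finally, to land exactly at $z' = (x',v')$ I would append a last entry $z^{(N)} = (x', v')$, realising the final change of velocity from the incoming direction to $v'$ at the terminal vertex (the time $\tau^{(N)}$ being irrelevant to the endpoint condition $z(T) = z^{(N)}$). By construction every intermediate $z^{(k)}$ lies in $E_i^\circ \subset E_i$, the total number of indices $N$ is finite, and the relation $x^{(k+1)} = \phi_x(z^{(k)}, \tau^{(k)})$ holds for each $k$, which is exactly the requirement of Definition~\ref{def: admissible paths}.

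The only non-cosmetic step is the axis-aligned construction in the Zig-Zag case, but it is entirely classical once the tube neighbourhood is available: given two points within $\delta/\sqrt{d}$ of each other, the $d$ coordinate-parallel edges of the rectangular box spanned by them all fit inside the enclosing $\delta$-ball, hence inside $\Omega_i$. The rest of the argument is bookkeeping, so the corollary reduces cleanly to Lemma~\ref{prop: connected to path}.
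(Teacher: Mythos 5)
Your overall strategy --- lift the piecewise linear path supplied by Lemma~\ref{prop: connected to path} to the phase space by choosing velocities along each segment and recording the finitely many switching points --- is exactly what the paper intends; the paper gives no written proof, calling the corollary a ``straightforward'' consequence of that lemma, and your bookkeeping of the data $(\tau^{(k)},z^{(k)})$ against Definition~\ref{def: admissible paths}, including the final vertex $(x',v')$ to match the terminal velocity, is correct. The treatment of the unrestricted case and the use of the $\delta$-tube mirror the tubular-neighbourhood construction the paper later builds in equation~\eqref{eq: app tubular neigh}.

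There is one concrete flaw in the Zig-Zag case: the velocity space there is $\{-1,+1\}^d$ (or its normalisation $\{-\tfrac{1}{\sqrt d},\tfrac{1}{\sqrt d}\}^d$), which contains only \emph{diagonal} directions, so your ``axis-aligned pieces travelling along the coordinate directions'' are not admissible velocities and the corresponding vertices $z^{(k)}$ would not lie in $E_i$. The repair is standard: a displacement $2te_j$ is realised by flowing for time $t$ with velocity $w\in\{-1,+1\}^d$ and then for time $t$ with velocity $w'$ equal to $-w$ except in coordinate $j$; the intermediate point sits within distance $t\sqrt d$ of the segment, so choosing the sub-segments small enough relative to $\delta$ keeps the whole detour inside the tube. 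With that substitution (each of your axis-aligned pieces becoming two diagonal flow pieces, still finitely many in total), the argument closes and reduces the corollary to Lemma~\ref{prop: connected to path} as you claim.
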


In the following, we fix $z, z'$ and such admissible path $(z(t))_{0\le t\le T}$ and let $(z^{(j)} = \{x^{(j)}, v^{(j)})\}_{j=0,1,2,\dots,N}$ with $z^{(0)} = z$ and $z^{(N)} = z'$ be the finite collection of points where the path changes its velocity component.

As $E^\circ_i$ is an open set in the product topology of $\Omega \times \cV$, for every $t\in [0,T]$, there exists a radius $r(t)>0$, with $B_{r(t)}(x(t)) \subset \Omega_i$. Let $r_0 = \min_{0\le t\le T}\{r(t)\} > 0$ and define a \emph{tubular neighbourhood} of the admissible path $(z(t))_{0\le t\le T}$ as
\begin{equation}
    \label{eq: app tubular neigh}
    B := \{B_{r_0}(x(t)), \,t\in [0,T] 
    \} \subset \Omega_i.
\end{equation}


For $j=0,1,2,\dots,N$, define the set 
\[
V_j(x) := \begin{cases} 
\{v^{(j)}\} & j=0\\
\{v \in \cV \colon \|v\|\le 1 \text{ and }\, \exists t, \, \phi_x(t, (x, v)) \in B_{r_0}(x^{(j+1))}\} & 1\le j < N \\
\{v \in \cV \colon (x,v) \in B_{r_0}(z^{(j)})\} & j = N
\end{cases}
\]
where $B_r(x,v) := B_r(x)\times B_r(v)$ is the open ball with radius $r$ relative to the product topology of position space and velocity space of the process. Notice that if $\cV$ is a finite set, $B_r(x,v)$ can contain only a singleton $v \in \cV$.  For $j=0,1,\dots,N-1$, define the sets
\[
\Delta_j(z) := \{t \colon \phi_x(t, z) \in B_{r_0}(x^{(j+1)})\},
\]
and 
\[
U_j := \{(y,w) \in E_i \colon y\in B_{r_0}(x^{(j)}), w\in V_j(y)\}
\]
We now present some preliminary results which will be used in Proposition~\ref{prop: probability to move through skeleton points}:
\begin{remark}\label{rmk: non empty set}
    For every $j = 0, 1,\dots, N,$ the set $U_j$ is non-empty. Indeed for every $x \in B_r(x^{(j)})$, we have that $V_j(x)$ is not empty as it contains at least the point $v^{(j)}/\|v^{(j)}\|$.  
 \end{remark}  
 \begin{remark}\label{rmk: set with volume}
    For every $(x,v) \in U_j$, $|\Delta_j| > 0$ since $\{t \ge 0 \colon x + vt \in  B_r(x^{(i+1)})$ is a non-empty open 1-dimensional interval.
\end{remark}
\begin{remark}\label{rmk: inside tubolar neighbourhood}
    For every $(x,v) \in U_j$, the continuous trajectory $t \mapsto \phi((x,v), t)$ does hit a boundary before $\max(\Delta_i(x,v))$. In fact, for $t \le \max(\Delta_i(x,v))$, $\phi_x(t, (x,v)) \subset B \subset \Omega_i$.
\end{remark}
 \begin{lemma}\label{rmk: positive volume of velocity bps}\emph{(Positive probability of the BPS to choose the right direction when refreshing)}
    Let $\cV = \RR^d$ and assume that $\rho(\dd v) = \varrho(v) \dd v$ for a continuous density $\varrho$ supported in $\cV$. For every $j = 1,2,\dots,N$ and for every $x \in B_{r_0}(x^{(j)})$, $\rho(V_j(x)) > 0$. 
\end{lemma}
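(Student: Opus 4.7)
I split along the two non-trivial cases in the definition of $V_j(x)$ (the case $j=0$ is excluded by the hypothesis $j\ge 1$).

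For $j = N$, the product-topology convention $B_r(x,v) = B_r(x) \times B_r(v)$ together with the hypothesis $x \in B_{r_0}(x^{(N)})$ reduces the definition to $V_N(x) = B_{r_0}(v^{(N)})$, a non-empty open ball in $\cV = \RR^d$. It therefore has strictly positive Lebesgue measure, and since $\varrho$ is continuous and strictly positive on $\RR^d$, integrating $\varrho$ over this ball yields $\rho(V_N(x)) > 0$.

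For $1 \le j < N$, the plan is to exhibit a non-empty open subset of $V_j(x)$ in $\RR^d$. First, I observe that the geometric image of $t \mapsto \phi_x(t,(x,v))$ is contained in the ray $\{x + \alpha v : \alpha \ge 0\}$: indeed $\tfrac{d}{dt}\phi_x = v\, s(\phi_x)$ with $s > 0$ forces the tangent vector to be a (positive) scalar multiple of $v$ throughout, so the direction of motion is preserved and $s$ only reparametrises the ray. Hence any $v$ with $\|v\| \le 1$ for which this ray meets the open ball $B_{r_0}(x^{(j+1)})$ belongs to $V_j(x)$. Second, I use the purely geometric fact that, viewed from any point $x \in \RR^d$, the open ball $B_{r_0}(x^{(j+1)})$ subtends a strictly positive solid angle: all of $S^{d-1}$ if $x \in B_{r_0}(x^{(j+1)})$, and an open spherical cap of half-aperture $\arcsin\bigl(r_0/\|x-x^{(j+1)}\|\bigr)$ otherwise. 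Let $C \subset S^{d-1}$ denote this open set of admissible directions. Then $V_j(x)$ contains the open cone $\{\alpha u : \alpha \in (0,1],\ u \in C\}$, which has strictly positive Lebesgue measure, and positivity of $\varrho$ on $\RR^d$ gives $\rho(V_j(x)) > 0$.

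I expect no real obstacle: the lemma is a geometric statement and the two ingredients (that the trajectory traces out a ray, and that a ball subtends positive solid angle from any point) both produce open sets to which positivity of $\varrho$ applies directly. The only point requiring a moment of care is the borderline case where $x$ lies inside $B_{r_0}(x^{(j+1)})$, which is already handled uniformly by the open-cone bound above (every direction works, so $C = S^{d-1}$).
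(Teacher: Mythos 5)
Your $j=N$ case and your overall strategy (exhibit an open cone of velocities inside $V_j(x)$, then integrate the positive density $\varrho$ over it) match the paper's. But for $1\le j<N$ there is a genuine gap in the step ``any $v$ with $\|v\|\le 1$ for which the ray $\{x+\alpha v:\alpha\ge 0\}$ meets $B_{r_0}(x^{(j+1)})$ belongs to $V_j(x)$.'' Membership in $V_j(x)$ requires $\phi_x(t,(x,v))\in B_{r_0}(x^{(j+1)})$ for some $t$, and the flow $\phi$ is only defined as a map into $E$: once the trajectory hits $\partial E^+$ the deterministic motion stops (the process is handed to $\cQ_{\partial E}$). Since $\Omega_i$ need not be convex, a ray from $x$ can exit $\Omega_i$ before reaching the target ball, and for such a direction the flow never gets there — so your cone $C$ of ``all directions subtending the ball'' may contain velocities that are not in $V_j(x)$. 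This is not a cosmetic point: the surrounding construction (the tubular neighbourhood $B$ and Remark~\ref{rmk: inside tubolar neighbourhood}) exists precisely to guarantee that the trajectories used in the irreducibility argument do not meet the boundary prematurely.

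The conclusion survives, and the repair is small: instead of the full solid angle subtended by $B_{r_0}(x^{(j+1)})$, restrict to directions near $v^{(j)}$. This is what the paper does: it picks a point $y=\phi_x((x,v),t)\in B_{r_0}(x^{(j+1)})$ reachable from $x$ while staying in the tubular neighbourhood $B\subset\Omega_i$ (such a $v$ exists by Remark~\ref{rmk: non empty set}, because $x^{(j+1)}-x^{(j)}$ is parallel to $v^{(j)}$ and the translated segment from $x$ stays within $r_0$ of the original one), takes a small ball $B_{r_1}(y)\subset B_{r_0}(x^{(j+1)})$, and uses only the open cone of normalized directions from $x$ into $B_{r_1}(y)$; shrinking $r_1$ if necessary keeps all these segments inside $B$, hence inside $\Omega_i$, so the corresponding velocities genuinely lie in $V_j(x)$ and still form a set of positive Lebesgue measure. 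Your ray-versus-reparametrisation observation (that $s>0$ only rescales time along the ray) is correct and is the right justification for reducing the flow to straight segments; you just need to apply it to this smaller cone.
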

\begin{proof}
    It is clearly enough to show that $V_j(x) \subset \RR^d$  has positive $d$ dimensional Lebesgue measure.  This is trivial for $j = N$ (see definition of $V_j(x)$). Now fix $j \in\{ 1,\dots, N-1\}$ and $(x,v) \in U_j$ (By Remark~\ref{rmk: non empty set}, there is at least one element in this set). Let $y = \phi_x((x,v), t) \in B_{r_0}(x^{(i+1)})$ for some $t>0$. Then there exists a $r_1$, for which $B_{r_1}(y) \subset B_{r_0}(x^{(i+1)})$. For every $x' \in B_{r_1}(y)$, there exists $t' \ge 0$ and $v' \in \mathcal V$ such that $\phi_x((x,v'), t') = x'$ and $\phi_x((x,v'),s) \subset B$ for $0\le s \le t'$. Hence $c v'/\|v'\|  \in V_{j}(x), \, 0 \le c \le 1$ and $V_{j}(x)$ is a open and non-empty set. Therefore $|V_{j}(x)|>0$.  
\end{proof}
For a fixed point $z \in U_j$, denote the random variables $\tau^{(i),\rr} \sim \text{IPP}(s \mapsto \lambda_{\rr}(\phi(z,s)))$ and  $\tau^{(i),\bb} \sim \text{IPP}(s \mapsto \lambda_\bb(\phi(z,s)))$, which correspond to the proposed refreshment and reflection times at iteration $i$ of Algorithm~\ref{alg: algorithm abstract} and, conditional on $\tau^{(i),\rr}$  and $\tau^{(i),\bb}$, the random variable $w^{(i+1)} \sim \cQ_{\rr}(\phi(z, \min(\tau^{(i),\rr},\tau^{(i),\bb})),  \cdot)$ the random variable given by the refreshment kernel of the velocity acting on $z$ as in Assumption~\ref{ass: conditions on the refreshments}. 

For a fixed point $z \in E$, recall that we can simulate the first random event of a PDMP with Poisson rate $\lambda(z) = \lambda_\rr(z) + \lambda_\bb(z)$ and Markov kernel $\cQ(z, \cdot) = \frac{\lambda_\bb(z)}{\lambda(z)} \cQ_\bb(z, \cdot) + \frac{\lambda_\rr(z)}{\lambda(z)} \cQ_\rr(z, \cdot)$ by superimposing the two events (see Section~\ref{subsec: conditions on the interior}).

\begin{proposition}\label{prop: probability to move through skeleton points} \emph{(Probability to move between any two skeleton points)}
Suppose that Assumptions~\ref{ass: conditions on the refreshments}-\ref{ass: conditions on the interior} hold. For any $i=0,1,2,\dots,N-1$ and any point $z\in U_i$, define the event 
\[
E_i = \{ (\tau^{(i),\rr}, \tau^{(i),\bb}, w^{(i+1)}) \in \RR^+\times\RR^+\times\cV \colon \tau^{(i),\rr} \in \Delta_i,\,  \tau^{(i),\bb} > \tau^{(i),\rr},\, w^{(i+1)} \in V_{i+1}(\phi_x(\tau^{(i),\rr}, z))\},
\]
we have that $p(z) = \PP_z(E_i)> 0$, for the Bouncy Particle Sampler as well as Zig-Zag Sampler with velocity $\{-\frac{1}{\sqrt{d}}, \frac{1}{\sqrt{d}}\}^d$.
\end{proposition}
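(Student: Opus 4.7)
The plan is to exploit the superposition structure of the PDMP: since $\tau^{(i),\rr}$ and $\tau^{(i),\bb}$ are first-event times of independent inhomogeneous Poisson processes and $w^{(i+1)}$ is drawn from $\cQ_\rr(\phi(z, \min(\tau^{(i),\rr},\tau^{(i),\bb})),\cdot)$ independently of the two times conditional on that position, I can factor the probability of $E_i$. Writing $f_\rr(t) = \lambda_\rr(\phi(z,t)) \exp(-\int_0^t \lambda_\rr(\phi(z,s))\dd s)$ for the density of $\tau^{(i),\rr}$ and using independence on the event $\tau^{(i),\bb} > \tau^{(i),\rr}$, Fubini gives
\[
p(z) = \int_{\Delta_i} f_\rr(t)\,\exp\!\left(-\int_0^t\lambda_\bb(\phi(z,s))\dd s\right) \cQ_\rr\!\big(\phi(z,t),\,V_{i+1}(\phi_x(z,t))\big)\,\dd t.
\]
My strategy is then to show that each of the three factors in the integrand is strictly positive on a subset of $\Delta_i$ of positive Lebesgue measure.

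For the first factor, Assumption~\ref{ass: conditions on the refreshments} gives $0 < \lambda_\rr < \infty$ (constant in $v$), so $f_\rr(t) > 0$ for every finite $t$. For the second factor, by Remark~\ref{rmk: inside tubolar neighbourhood} the trajectory $\{\phi(z,s): 0 \le s \le \max\Delta_i\}$ is contained in the tubular neighbourhood $B$ defined in~\eqref{eq: app tubular neigh}, which has compact closure in $\Omega_i$; since $\lambda_\bb$ is continuous on $\overline B \times \cV$ (it is built from $\nabla \Psi$ and $\nabla s$ by Assumption~\ref{ass: conditions on the interior}, both of which extend continuously to $\overline\Omega_i$), it is bounded on the relevant trajectory segment, hence the exponential factor is strictly positive and bounded below uniformly on $\Delta_i$. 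By Remark~\ref{rmk: set with volume}, $|\Delta_i| > 0$, so the integral will be positive provided the third factor is positive on a subset of $\Delta_i$ of positive measure.

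For the third factor, I split into the two cases. In the BPS case, $\cQ_\rr(y,\cdot) = \rho(\cdot)$ with $\rho$ admitting a positive continuous density on $\cV = \RR^d$, so by Lemma~\ref{rmk: positive volume of velocity bps} applied to $y = \phi_x(z,t) \in B_{r_0}(x^{(i+1)})$ (which holds for $t \in \Delta_i$ by definition of $\Delta_i$), the set $V_{i+1}(y)$ has positive Lebesgue measure and hence positive $\rho$-mass. In the Zig-Zag case, $\cQ_\rr(y,\cdot)$ places mass $2^{-d}$ on each element of the finite velocity space $\{-1/\sqrt d, 1/\sqrt d\}^d$, so it suffices to exhibit one such velocity lying in $V_{i+1}(y)$. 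The admissible path of Definition~\ref{def: admissible paths} provides $v^{(i+1)} \in \cV$ with $\|v^{(i+1)}\| = 1$ and $\phi_x((x^{(i+1)}, v^{(i+1)}), \tau^{(i+1)}) = x^{(i+2)}$; for any $y \in B_{r_0}(x^{(i+1)})$, translation invariance of the straight-line flow gives $\phi_x((y, v^{(i+1)}), \tau^{(i+1)}) = x^{(i+2)} + (y - x^{(i+1)}) \in B_{r_0}(x^{(i+2)})$, showing $v^{(i+1)} \in V_{i+1}(y)$, so $\cQ_\rr(\phi(z,t), V_{i+1}(\phi_x(z,t))) \ge 2^{-d}$ for all $t \in \Delta_i$.

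The main subtlety, and the step I would verify most carefully, is the translation-invariance argument for Zig-Zag: it hinges on the fact that $\phi_x((y,v),s) = y + sv$ in the interior and on confirming that $\phi_x((y,v^{(i+1)}), s)$ remains inside the admissible region $\Omega_i$ for $0 \le s \le \tau^{(i+1)}$. The latter follows because this segment is a parallel translate of the original admissible segment by $y - x^{(i+1)}$ of norm at most $r_0$, and the tubular neighbourhood $B$ in~\eqref{eq: app tubular neigh} was constructed precisely to contain the $r_0$-neighbourhood of the path in $\Omega_i$; consequently no boundary is hit before the flow enters $B_{r_0}(x^{(i+2)})$, so the membership $v^{(i+1)} \in V_{i+1}(y)$ is genuine. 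Combining the three positivity bounds yields $p(z) > 0$ in both cases.
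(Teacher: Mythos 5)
Your proof is correct and follows essentially the same route as the paper's: it isolates the same three positivity ingredients ($|\Delta_i|>0$ with a strictly positive bounded refreshment rate, a uniform lower bound on the reflection-survival probability via boundedness of $\lambda_\bb$ along the tubular neighbourhood, and positive kernel mass on $V_{i+1}(y)$ via Remark~\ref{rmk: non empty set} for Zig-Zag and Lemma~\ref{rmk: positive volume of velocity bps} for BPS). Your explicit Fubini factorization and the translation argument for the Zig-Zag case are just slightly more detailed renderings of the same steps.
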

\begin{proof}
     By Remark~\ref{rmk: set with volume} $|\Delta_i|>0$. Then  
    $
    P(\tau^{(i), \rr} \in \Delta_i) 
    > 0
    $
    as, by Assumption~\ref{ass: conditions on the refreshments}, $0<\lambda_\rr(z) \le M, \forall z \in E$. Furthermore, by Remark~\ref{rmk: inside tubolar neighbourhood} we have that $\phi((x,v), s) \in E$, $0\le s\le \max(\Delta_i)$. Next,
    \[
    \PP(\tau^{(i), \bb}>\tau^{(i),\rr}) > \PP(\tau^b>\max(\Delta_i)) = \exp(-\int_0^{\max(\Delta_i)} \lambda_\bb(\phi((x,v), s)) \dd s) 
    \]
    which is strictly greater than 0 as $s \mapsto \phi(s, x,v)$ is continuous everywhere in $E_i$, and $\lambda_\bb$ as in Assumption~\ref{ass: conditions on the interior} is therefore bounded in the interval $[0, \max(\Delta_i)]$. Finally, for every $\tau^{(i), \rr}\in \Delta_i$ we have that $y = \phi_x((x,v), \tau^{(i),\rr}) \in B_r(x^{(i+1)})$ and, as noticed in Remark~\ref{rmk: non empty set}, $V^{i+1}(y)$ is non-empty. By Assumption~\ref{ass: conditions on the refreshments}, if $\cV$ is a finite set (as in the case of the Zig-Zag sampler), $\PP(w(y) = v^{(i+1)})  >0 $, if $\cV = \RR^d$ as in the Bouncy Particle sampler, $\PP(w(y) \in V_{i+1}(y)) >0$ as noted in Lemma~\ref{rmk: positive volume of velocity bps}.
\end{proof}

Let $\tau_A = \{\inf_{t\ge0} Z(t) \in A\}$ and for $j=0,1,2,\dots,N-1$.
\begin{corollary}\label{cor: reachability}
    Suppose that Assumptions~\ref{ass: conditions on the refreshments}-\ref{ass: conditions on the interior} hold. Then for $i \in K$, $z \in E_i$ and $A \subset E^\circ_i$ open and non-empty, there exists a $t>0$ such that 
    \[
    \PP_t(z, A)>0
    .\]
\end{corollary}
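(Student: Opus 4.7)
The plan is to drive the PDMP along the admissible path produced by Proposition~\ref{prop: exists a path connecting points}, using Proposition~\ref{prop: probability to move through skeleton points} to traverse each segment, and finally to freeze the process in $A$ by having no further event in a short terminal window.

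First I would pick $z' \in A$; since $A$ is open and $A \subset E_i^\circ$, Proposition~\ref{prop: exists a path connecting points} applied to $(z,z')$ yields an admissible path in $E_i$ with skeleton points $z^{(0)} = z, z^{(1)},\dots, z^{(N)} = z'$. Shrinking the tubular-neighbourhood radius $r_0$ in \eqref{eq: app tubular neigh} if necessary, I may further assume that $U_N \subset A$ and that the whole tube lies inside $\Omega_i$. Iterating Proposition~\ref{prop: probability to move through skeleton points} together with the strong Markov property, starting at $z \in U_0$ the PDMP has positive probability of experiencing a sequence of refreshment events that successively places it inside $U_1, U_2, \dots, U_N$; on this joint event, at the (random) accumulated time $\tau_N := \sum_{k=0}^{N-1} \tau^{(k),\rr}$ the state lies in $U_N \subset A$.

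To produce a deterministic time $t>0$ with $\PP_t(z,A)>0$, I would fix an open interval $I \subset (0,t)$ and enlarge the event above to also impose $\{\tau_N \in I\}$ and no further reflection or refreshment in $[\tau_N, t]$. By Assumption~\ref{ass: conditions on the refreshments} each $\tau^{(k),\rr}$ has a positive density on $(0,\infty)$, so the joint distribution of $(\tau^{(0),\rr},\dots,\tau^{(N-1),\rr})$ puts positive mass on any subset forcing $\tau_N \in I$. Moreover, since $U_N$ lies in a compact subset of $\Omega_i$ where $\lambda_{\mathrm b}+\lambda_{\mathrm r}$ is bounded, the no-event probability on $[\tau_N, t]$ is uniformly bounded below. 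Shrinking $I$ and $t-\sup I$, continuity of the flow together with the openness of $A$ guarantees $\phi\bigl((Z(\tau_N), V(\tau_N)), t-\tau_N\bigr)\in A$ throughout the event, hence $\PP_t(z,A)>0$.

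The main obstacle is the final calibration: the terminal window $t-\sup I$ must be chosen short enough that the straight-line flow starting from $U_N$ cannot exit $A$ and that the no-event probability is bounded below, while $t$ and $I$ must still leave room for the random time $\tau_N$ of $N$ successive refreshments to land inside $I$ with positive probability. Once the path from Proposition~\ref{prop: exists a path connecting points} has been fixed, however, all of the relevant rates, kernel densities, and flow continuity estimates are controlled on the compact closure of the tube, so the calibration can be carried out for any sufficiently small choice.
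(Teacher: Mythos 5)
Your proposal is correct and follows essentially the same route as the paper: choose $z'\in A$, build an admissible path, shrink the tube radius so the terminal set $U_N$ sits inside $A$, and iterate Proposition~\ref{prop: probability to move through skeleton points} along the skeleton. The only difference is that you spell out the final calibration (landing the accumulated random time in a window and suppressing further events) needed to hit a \emph{deterministic} time $t$, a point the paper's proof passes over by writing the chaining event directly at fixed times $t_1<\dots<t_N$.
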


\begin{proof}
     Take an element $z' \in A$ and construct an admissible path connecting $z$ with $z'$, changing direction at the finite collection of points, $(z^{(i)})_{i=1,2,\dots,N}$. We can then use Proposition~\ref{prop: probability to move through skeleton points} iteratively and conclude that 
    \[
    \PP\left( \big(Z(t_1) \in U_1 \mid Z(0) = z \big) \bigcap_{j=2}^N (Z(t_j) \in U_j \mid Z(t_{j-1}) \in U_{j-1})\right)>0,
    \]
    for a sequence $0 = t_0 < t_1 <t_2<\dots<t_N<\infty$.
\end{proof}
The generalization of Proposition~\ref{cor: reachability} for every initial point $z \in E$ and open set $A \subset E$ is straightforward:
\begin{proposition}\label{app:  prop reachability 2}
Suppose Assumptions~\ref{ass: conditions on the refreshments}-\ref{ass: conditions on the interior}-\ref{ass: sdb at the boundary}-\ref{ass: connectivity} hold. Then, for every $z \in E$, open set and non-empty set $A \subset E$, and for every $\Delta>0$, there exist an integer $t>0$ such that
\[
\PP_{t} (z, A) >0. 
\]
\end{proposition}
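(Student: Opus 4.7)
The plan is to lift Corollary \ref{cor: reachability} from within-set reachability to full-state-space reachability by chaining admissible paths across the neighbouring relation of Assumption \ref{ass: connectivity}, and then to upgrade the resulting continuous hitting time into a value of the form $n\Delta$ via a terminal confinement step. First I reduce to a canonical situation: since $A$ is open in $E$ and each entrance boundary $\partial E^-_j$ lies in a countable union of codimension-one Riemannian manifolds and hence has empty $E$-interior, I may replace $A$ by $A \cap E^\circ_j$ for some $j \in K$, and fix $i \in K$ with $z \in E_i$. By Assumption \ref{ass: connectivity} there is a chain $E_i = E_{k_1} \curvearrowright E_{k_2} \curvearrowright \dots \curvearrowright E_{k_n} = E_j$, and Definition \ref{def: direct neighbour} provides, for each $\ell = 1, \dots, n-1$, an open set $A_\ell \subset \partial E^+_{k_\ell}$, a set $B_{\ell} \in \cB(\partial E^-_{k_{\ell+1}})$, and $\varepsilon_\ell > 0$ with $\cQ_{\partial E}(\zeta, B_\ell) \geq \varepsilon_\ell$ for every $\zeta \in A_\ell$. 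Following the outline in the main text, I set $B_0 = \{z\}$ and $A_n = A$.

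Second, I would show inductively that from $z$ the process visits $A_1, B_1, A_2, B_2, \dots, B_{n-1}$ and then lands in $A$, each stage with strictly positive probability. For the transition from $B_{\ell-1}$ (or $\{z\}$ when $\ell = 1$) to $A_\ell$: pick any target $z^A_\ell \in A_\ell \setminus (C\times \cV)$; because $z^A_\ell$ lies on the exit boundary $\partial E^+_{k_\ell}$, the flow emanating from a nearby interior point $\tilde z_\ell \in E^\circ_{k_\ell}$ reaches $A_\ell$ in finite time. Using Corollary \ref{cor: exists a path connecting points}, I build an admissible path inside $E_{k_\ell}$ from the starting point to $\tilde z_\ell$, then append the terminal flow segment that hits $A_\ell$. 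The tubular-neighbourhood and Poisson-thinning argument of Proposition \ref{prop: probability to move through skeleton points}, combined with the positivity of $\lambda_{\mathrm r}$ (Assumption \ref{ass: conditions on the refreshments}), the local upper bound on $\lambda_{\mathrm b}$ on compact subsets avoiding $\partial \Omega \setminus \partial \Omega^\dagger$ (Assumption \ref{ass: conditions on the interior} together with Proposition \ref{rmk: prob to reflect before Psi vanishes}) and openness of $A_\ell$ in the boundary manifold, yields a positive probability that the sample path follows the tube without a premature reflection and hits $A_\ell$. Applying $\cQ_{\partial E}$ at that hitting instant gives a crossing into $B_\ell$ with probability at least $\varepsilon_\ell$. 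Concatenating the $n-1$ boundary crossings with $n$ applications of the within-set construction (the last one being a direct use of Corollary \ref{cor: reachability} inside $E_{k_n}$) produces a deterministic time $t^\star > 0$ with $P_{t^\star}(z, A) > 0$.

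Finally, to realise positivity at a time of the form $n\Delta$, I thicken the endpoint: arrange the last leg of the construction to land in a strictly smaller open $A' \subset A$ whose closure lies in $A \cap E^\circ_{k_n}$. By Assumption \ref{ass: conditions on the refreshments}, immediately after entering $A'$ the process can refresh its velocity, and one or more further refreshments (for the BPS, with small $\|v\|$ drawn from $\rho$; for the Zig-Zag, alternating signs) keep the position confined to a ball around the terminal point that still lies in $A$ for any extra duration in some open time-interval. This makes the map $s \mapsto P_s(z,A)$ strictly positive on an open interval of times containing $t^\star$, and hence strictly positive at $s = n\Delta$ for some integer $n>0$.

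The step I expect to be the main obstacle is the adaptation of Proposition \ref{prop: probability to move through skeleton points} so that the terminal point of the path lies on $\partial E^+_{k_\ell}$ rather than in $E^\circ_{k_\ell}$. One has to verify simultaneously that the admissible path stays away from all other boundary pieces (so that the upper bounds on $\lambda_{\mathrm b}$ remain finite along the whole tube), that the final refreshed velocity steers the flow into the open set $A_\ell$ with positive probability (automatic for the BPS via Lemma \ref{rmk: positive volume of velocity bps}, but requiring a compatibility check between the finite velocity set and the local geometry of $A_\ell$ for the Zig-Zag), and that no unwanted sticky event from Section \ref{sec: sticky extensions} intervenes when atomic reference measures are present. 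Handling these three sources of failure uniformly in $\ell$ and in the starting point $z^B_{\ell-1}$ is the delicate part of the construction.
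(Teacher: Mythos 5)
Your proposal is correct and follows essentially the same route as the paper: chain through the neighbouring sets of Assumption~\ref{ass: connectivity} by concatenating admissible paths inside each $E_{k_\ell}$ (via Corollary~\ref{cor: exists a path connecting points} and Proposition~\ref{prop: probability to move through skeleton points}) with boundary crossings whose probability is bounded below by the $\epsilon_\ell$ of Definition~\ref{def: direct neighbour}. The only difference is that you also fold in the terminal confinement step to hit a time of the form $n\Delta$, which the paper defers to the subsequent proposition via Lemma~\ref{app: lemma: confinement}; the delicate point you flag (admissible paths terminating on $\partial E^+_{k_\ell}$ rather than in the interior) is indeed glossed over in the paper, which handles it only by taking $\lim_{t\uparrow T} z(t) = z^{i+1}_A$.
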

\begin{proof}
By Assumption~\ref{ass: connectivity}, for every $z \in E$, $A \subset E$, there is a sequence of sets $E_{k_1} \curvearrowright E_{k_2}\dots\curvearrowright E_{k_n}$, with $z \in E_{k_1}$, $A \subset E_{k_n}$ and sets $A_i\subset \partial E_{k_i}^+$ and $B_{{i+1}} \in \cB(\partial E_{k_{i+1}}^-)$  as in Definition~\ref{def: direct neighbour} for $i=1,2,\dots,n-1$. Denote the sets $B_{0} = \{z\}, \, A_{n} = A$ and take any points $z^{i}_B \in B_{i}, z^{i+1}_A \in A_{{i+1}}$ for $i = 0,1,\dots,n-1$. For $j \in K$,  by Corollary~\ref{cor: exists a path connecting points} and since the set $\partial E^-_j$ is, by definition, path connected with $E^\circ_j$ (see equation~\ref{eq: non-active boundary}),
    there are admissible paths  $z(t)_{0\le t\le T}$ such that $z(0) = z^{i}_B$ and $z(T) = z^{i+1}_A$, for $i=0,1,\dots,n-1$.

    Let $t- = \lim_{s \uparrow t} s$ and $t+ = \lim_{s \downarrow t} s$.
    By Proposition~\ref{prop: probability to move through skeleton points} and by Assumption~\ref{ass: connectivity} we have that the set 
    \begin{align*}
    O =  (Z(t_1-) \in  A_{1}  &\mid Z(0) = B_{0})\\
    &\cap_{i=2}^n ( 
    (Z(t_{i-1}+)\in B_{i-1} \mid Z(t_{i-1}-) \in A_{i-1}) \cap(Z(t_i-) \in A_{{i}} \mid Z(t_{i-1} +) \in B_{{i-1}})) 
    \end{align*}
    for a sequence $0< t_1 <t_2<\dots<t_N<\infty$ has positive probability, $\PP(O)>0$.
\end{proof}
With the next lemma, we show that a process started in $(x,v)\in A_x\times \cV$, where $A_x \subset \Omega$ is a open set, is confined in the set $A_x$ for any length $\Delta>0$ with positive probability. This lemma will be used to prove Proposition~\ref{prop: accessibility} in the manuscirpt.
\begin{lemma}\label{app: lemma: confinement}\emph{(Confining the process in a open set)} Let $A_x \subset \Omega$ be an open and non-empty set. For any $T>0$ and $X(0) \in A_x\times \cV$ we have that 
\[
\PP(X(t) \in A_x, \, t = [0,T] \mid Z(0) = z)>0.
\]
\end{lemma}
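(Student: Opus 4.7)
The plan is to engineer a specific admissible path whose position coordinate oscillates inside a small ball contained in $A_x$ and whose total flow--time exceeds $T$, and then apply the one--step probability argument of Proposition~\ref{prop: probability to move through skeleton points} iteratively along this path.

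Since $A_x$ is open and $x(0)\in A_x$, I can fix $r>0$ with $\bar B_{2r}(x(0)) \subset A_x$. On this compact ball, continuity of $s$ together with Assumptions~\ref{ass: conditions on the refreshments}--\ref{ass: conditions on the interior} give bounds $M_s,M_\rr,M_\bb<\infty$ for $s$, $\lambda_\rr$ and $\lambda_\bb$, respectively. Set $v_0 = v(0)$, choose $\delta = r/(8 M_s\|v_0\|)$, and set $N = \lceil T/\delta\rceil$. I build an admissible path by alternating between velocities $v_0$ and $-v_0$ at times $k\delta$ for $k = 1,\dots,N$; within each segment the position moves at speed at most $M_s\|v_0\|$, hence by at most $r/8$ in Euclidean norm, so an easy induction shows that the path stays inside $B_{r/2}(x(0)) \subset A_x$ throughout $[0,N\delta] \supset [0,T]$.

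I then follow the template of the proofs of Proposition~\ref{prop: probability to move through skeleton points} and Corollary~\ref{cor: reachability}, using the tubular neighbourhood construction of \eqref{eq: app tubular neigh} with a radius $r_0 < r/2$ so that the entire tube is still contained in $\bar B_r(x(0)) \subset A_x$. Each of the $N$ segments contributes, by that proposition, a positive probability that (i) a refreshment fires in a short window around $k\delta$, (ii) no reflection has fired before it -- this uses that $\lambda_\bb$ is bounded on the tube, so $\PP(\tau_\bb > \delta')>0$ for short $\delta'$ -- and (iii) the refreshed velocity lies in a small neighbourhood of the prescribed $\pm v_0$ that keeps the next segment inside the tube. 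The Markov property stitches these $N$ positive one--step probabilities into a single positive probability for the event $\{X(t)\in A_x \text{ for all } t\in[0,T]\}$.

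The principal obstacle is handling the Zig--Zag case, where the velocity set $\cV$ is discrete: I need the refreshment kernel $\cQ_{E,\rr}$ to assign strictly positive mass to the atoms $\pm v_0$, which holds under the standard choice $\cQ_{E,\rr} = \rho = \mathrm{Unif}(\{-1,+1\}^d)$ and, more generally, whenever $\rho(\{v_0\})>0$ (covered by Assumption~\ref{ass: conditions on the refreshments}). A minor bookkeeping issue is to check that the tubular neighbourhood never intersects $\partial\Omega$; this is automatic since $\bar B_{2r}(x(0)) \subset A_x \subset \Omega$, so no boundary event is triggered during the constructed path and neither Algorithm~\ref{alg: algorithm boundary} nor Assumption~\ref{ass: sdb at the boundary} enters the argument.
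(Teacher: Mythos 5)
Your proof is correct and follows essentially the same route as the paper's: construct an admissible path confined to a ball inside $A_x$ whose total traversal time is at least $T$, then iterate Proposition~\ref{prop: probability to move through skeleton points} along its segments and stitch the positive one-step probabilities together by the Markov property. Your version is in fact somewhat more careful than the paper's, which equates total path length with elapsed time, whereas you explicitly account for the speed bound $M_s\|v_0\|$ when choosing the segment duration $\delta$.
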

\begin{proof}
    For a fixed initial point $z = (x,v)\in A_x\times \cV$  there is $r$ such that $B_r(x) \subset A_x$. Define a sequence of points $x_0, x_1,\dots$ such that $x_i \in B_r(x)$ and $x_i \ne x_{i+1}$ for $i=0,1,\dots$. Since $B_r(x)$ is connected and open, we can construct an admissible path $z(t)_{t\ge 0}$ that connects each point $(x_i)_{i=0,1,\dots,M}$ and such that $sum_{i=1}^M \|x_i-x_{i-1}\| = T$. Then, by iterating Proposition~\ref{prop: probability to move through skeleton points} to establish that $P(X(t) \in A, \, t \in [0,T])>0$, for any $T>0$.
\end{proof}

\begin{proposition}\label{app: prop: }\emph{(Proposition~\ref{prop: accessibility} in the manuscript)}
Suppose Assumptions~\ref{ass: conditions on the refreshments}-\ref{ass: conditions on the interior}-\ref{ass: sdb at the boundary}-\ref{ass: connectivity} hold. Then, for every $z \in E$, open set and non-empty set $A \subset E$, and for every $\Delta>0$, there exist an integer $n>0$ such that
\[
\PP_{\Delta n} (z, A) >0. 
\]
\end{proposition}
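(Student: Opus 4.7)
The plan is to bootstrap Proposition~\ref{app:  prop reachability 2}, which delivers \emph{some} $t > 0$ with $P_t(z, A) > 0$ (without any control on $t$), to the conclusion $P_{n\Delta}(z, A) > 0$ for some integer $n$. The mechanism is to first reach a small product-open subset of $A$ in some time $t_0$, then stall inside $A$ for the remaining duration $n\Delta - t_0$ where $n$ is chosen large enough.

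Since $A \subset E$ is open in the product topology on $\Omega \times \cV$, fix $z^\star = (x^\star, v^\star) \in A$ and pick $r > 0$ with $U := B_r(x^\star) \times B_r(v^\star) \subset A$ and $B_r(x^\star) \subset \Omega_i$ for a single $i \in K$. Applying Proposition~\ref{app:  prop reachability 2} to the non-empty open set $U' := B_{r/2}(x^\star) \times B_{r/2}(v^\star)$ produces some $t_0 > 0$ with $P_{t_0}(z, U') > 0$. I then choose any integer $n$ with $n\Delta > t_0$, set $T := n\Delta - t_0 > 0$, and apply the Chapman--Kolmogorov equation:
\begin{equation*}
P_{n\Delta}(z, A) \;\geq\; \int_{U'} P_{t_0}(z, d\tilde z)\, P_T(\tilde z, U).
\end{equation*}
It therefore suffices to prove $P_T(\tilde z, U) > 0$ for every $\tilde z \in U'$.

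Establishing this last claim is the main obstacle, because Lemma~\ref{app: lemma: confinement} only controls the position coordinate of the trajectory, whereas here I need the entire state to lie in the product set $U$ at the precise time $T$. I handle it by constructing an explicit sequence of PDMP events with positive joint probability: starting from $\tilde z \in U'$, the process undergoes finitely many reflections/refreshments at random times $0 < t_1 < \cdots < t_k < T$, with the position remaining in $B_{r/2}(x^\star)$ throughout (controlled iteratively via Proposition~\ref{prop: probability to move through skeleton points}, which gives simultaneous position-and-velocity control near each skeleton point); the last of these events is a refreshment that installs a new velocity lying in $B_{r/2}(v^\star)$, which has positive probability by Assumption~\ref{ass: conditions on the refreshments} since $\rho$ charges every neighbourhood of $v^\star$ in both the Zig-Zag and BPS cases; and no further event occurs during the short coasting interval $[t_k, T]$ of length at most $\epsilon$, which has positive probability because the total event rate is bounded on the compact set $\overline{B_{r/2}(x^\star)} \times \overline{B_{r/2}(v^\star)}$. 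Choosing $\epsilon$ small enough that the deterministic drift on $[t_k, T]$ moves the position by less than $r/2$ guarantees that the final state at time $T$ lies in $B_r(x^\star) \times B_{r/2}(v^\star) \subset U$.

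Finally, I note that the entire stalling phase takes place within the single component $E_i$ (since $B_r(x^\star) \subset \Omega_i$), so no boundary kernel is triggered during the stall and Assumptions~\ref{ass: sdb at the boundary}--\ref{ass: connectivity} enter only through the prior application of Proposition~\ref{app:  prop reachability 2}.
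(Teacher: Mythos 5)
Your proposal is correct and follows essentially the same route as the paper: first use the reachability result (Proposition~\ref{app:  prop reachability 2}) to reach a subset of $A$ at some time $t_0$, then ``stall'' the process for the remaining $n\Delta - t_0$ so that the total time is an exact multiple of $\Delta$, exactly as the paper does via Lemma~\ref{app: lemma: confinement}. In fact your treatment of the stalling phase is slightly more careful than the paper's, since Lemma~\ref{app: lemma: confinement} only confines the position component while $A$ is open in the product (position, velocity) topology; your final refreshment-then-quiet-coasting step explicitly lands the velocity in the required neighbourhood of $v^\star$ at the terminal time.
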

\begin{proof} 
Fix an initial position $z \in E$ and a open set and non-empty set $A$. By Proposition~\ref{app:  prop reachability 2}, we know that there exists a $t$ with $\PP_t(z, A) > 0$. By Lemma~\ref{app: lemma: confinement}, we know that $\PP_{t + \Delta_0}(z, A) > 0$, for any $\Delta_0> 0$ as we can confine the process in any open ball of its position along its trajectory for any time interval $\Delta_0 \ge 0$. This implies that, for every $\Delta>0$, there exists an integer $n$, such that $\PP_{n \Delta}(z, A) > 0$.
\end{proof}

\subsection{Minorization condition of PDMP samplers}\label{app: minorization}

For the Bouncy particle sampler, the minorization condition was already proven in \textcite{2015arXiv151002451B} in the $d$-dimensional Euclidean space without boundary. Below, we show that the same strategy of \textcite{2015arXiv151002451B} can be easily extended for our setting for the Bouncy particle sampler.
\begin{proposition}
    \label{app: prop: minorization condition}
    \emph{(Proposition~\ref{prop: minorization condition} in the main manuscript)} Suppose Assumptions~\ref{ass: conditions on the refreshments}-\ref{ass: conditions on the interior} hold. There exists $r, \epsilon >0$ such that, for all $\epsilon<t< r$,  there exists a non-empty and open set $C \subset E$ and $\delta>0$ such that for all $z \in C$, 
    \[
    \PP_t(z, A) \ge \delta \Vol(A\cap C), \qquad \forall A \in \cB(E).
    \]
\end{proposition}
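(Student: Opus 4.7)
The plan is to reduce to the boundary-free setting by localizing to a small open ball contained in the interior of some $E_i^\circ$, and then invoke the BPS minorization argument of \textcite[Lemma~3]{2015arXiv151002451B} verbatim. For concreteness I assume the standard BPS refreshment kernel $\cQ_{E,\mathrm{r}}((x,v),\cdot) = \rho(\cdot)$ with $\rho$ absolutely continuous on $\RR^d$ with positive density, as is the case in Section~\ref{sec:  Bouncy Particle sampler with teleportation}.

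First, I would fix any $i \in K$ and a point $(x_0, v_0) \in E_i^\circ$ with $x_0$ in the interior of $\Omega_i$, and then choose radii $r_x, r_v > 0$ and a horizon $r > 0$ small enough that
\[
C := B_{r_x}(x_0) \times B_{r_v}(v_0) \subset E_i^\circ
\qquad\text{and}\qquad
\phi(z, u) \in E_i^\circ \text{ for all } z \in C,\, u \in [0, r].
\]
Such a choice is possible by continuity of $\phi$ and openness of $E_i^\circ$. It guarantees that no trajectory starting in $C$ can hit $\partial E$ within time $r$; consequently the boundary kernel $\cQ_{\partial E}$ plays no role over $[0, r]$, and the dynamics of $P_t$ restricted to $C$ coincide with those of a standard boundary-free BPS on $\Omega_i \times \cV$.

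Second, for $\epsilon < t < r$ and $z = (x, v) \in C$, I would lower-bound $P_t(z, A)$ by the probability of the event that exactly two refreshments occur at times $0 < s_1 < s_2 < t$ with post-refreshment velocities $v_1, v_2 \sim \rho$, and that no reflection occurs on $[0, t]$. On this event
\[
X(t) = x + s_1 v + (s_2 - s_1) v_1 + (t - s_2) v_2, \qquad V(t) = v_2,
\]
and the probability of no reflection is bounded below by $\exp(-\bar\lambda_\bb r)$, where $\bar\lambda_\bb$ is a uniform upper bound of $\lambda_\bb$ on the compact trajectory set $\{\phi(z, u) : z \in \overline C,\, u \in [0, r]\}$. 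Such a bound exists by continuity of $\Psi$ and the speed-up function on $\overline{\Omega}_i$ together with Assumption~\ref{ass: conditions on the interior}; the refreshment rate is uniformly positive and finite on $\overline C$ by Assumption~\ref{ass: conditions on the refreshments}.

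The crux is then the change of variables $(s_1, s_2, v_1) \mapsto (X(t), V(t))$ with $v_2$ playing the role of $V(t)$. Restricting the integration to $s_2 - s_1 \in [t/4, 3t/4]$, the Jacobian $(s_2 - s_1)^d$ is bounded above and below, and integrating the refreshment densities, $\rho(v_1)\rho(v_2)$, and the no-reflection factor yields a lower bound of the form $\delta \cdot \mathbf{1}_{C}(x', v')$ on the joint density of $(X(t), V(t))$, which gives $P_t(z, A) \geq \delta\, \Vol(A \cap C)$ for all $A \in \cB(E)$. This is exactly the calculation of \cite[Lemma~3]{2015arXiv151002451B}; the only new ingredient is the localization to $C$, which is where the main work lies in the present setting and which eliminates boundary-induced complications altogether.
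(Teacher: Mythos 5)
Your proposal is correct and follows essentially the same route as the paper: localize to a small ball $C$ contained in the interior of a single $\Omega_i\times\cV$ component, choose the time horizon $r$ small enough (using the boundedness of the velocities involved) that no trajectory from $C$ can reach $\partial E$ before time $r$, and then invoke the boundary-free BPS minorization argument of \textcite[Lemma 3]{2015arXiv151002451B} verbatim. The paper's proof is even terser — it simply fixes $C=\cB_{r_0}(0)\times\cB_1(0)$ with $B_r(0)\subset\Omega$ and notes that $\|v(t')\|\le 1$ prevents boundary contact before time $r$ — whereas you additionally sketch the two-refreshment change-of-variables calculation, but the substance is identical.
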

For the Bouncy Particle Sampler, without loss of generality, we let $0 \in \Omega$. Then, we choose  $r = \max(r'>0 \colon  B_{r'}(0) \in \Omega)$ (this always exists since $\Omega$ is a open set) and $r_0 < r$. With this choice,  \textcite[proof of Lemma 3 part 1]{2015arXiv151002451B} follows with $\epsilon = r_0/6$ and $C = \cB_{r_0}(0)\times\cB_1(0)$. This is because in \textcite[proof of Lemma 3 part 1]{2015arXiv151002451B}, $\|v(t')\|\le 1, \, t' \in [0,t]$, hence the process cannot hit the boundary before time $r$.
\section{SIR with notifications}\label{app: example in epidemiology}
\subsection{Derivation of the measure in Section~\ref{sec: epidemiology}}\label{app: derivation of the measure in epidemiology}

We now derive the terms $\rho_i(x) \mu_i(\dd x_i)$ (equation~\eqref{eq: target measure not notified} and equation~\eqref{eq: target measure notified}), which can be heuristically interpreted as the distribution of the infection times $X_i$ relative to those individuals $i \in \{1,2,\dots, d\}$ which have not been notified up to time $T$  and those which have been notified before time $T$. To ease the notation, we drop the index $i$ and consider random times $X \in \RR^+$, $\tau^{\star} \in \RR^+$, where $\tau^{\star} = x + \sigma$, with $\sigma \ge 0$ a random variable independent of $x$. Suppose $X$ has density $g(t) = \beta(t) \exp(-B(t) )$, where $B(t) = \int_0^t \beta(s) \, d s$, and $\sigma$ has cdf $F$ and pdf $f$, both with support on $[0,\infty)$.

We wish to determine, for $  t\geq0,\,  T \geq 0$,
\[ \P( X \wedge T \le t \mid \tau^{\star} \ge T).\]
Clearly if $t \ge T$ then this conditional probability is equal to one. It remains to compute, for $t < T$
\[ \P( X \le t \mid \tau^{\star} \ge T).\]

We compute
\begin{align*}
    \P(X\le t \mid \tau^{\star} \ge T) & = \frac{\P(X < t, \tau^{\star} \ge T)}{\P(\tau^{\star} \ge T)} \\
    & = \frac{\P(X < t, X + \sigma \ge T)}{\P(X + \sigma \ge T)} \\
    & = \frac{\int_0^t g(r)  \P(\sigma > T-r) \, d r}{\int_0^T g(r) \P(\sigma > T-r) \, dr + \int_T^{\infty} g(r) \, d r} \\
    & = \frac{\int_0^t g(r) (1 - F(T-r)) \, d r}{\int_0^T g(r)(1-F(T-r)) \, d r + \int_T^{\infty} g(r) \, dr} \\
    & = \frac{\int_0^t \beta(r) \exp(-B(r)) (1-F(T-r)) \, d r}{ \int_0^T \beta(r) \exp(-B(r)) (1-F(T-r)) \, d r + \exp(-B(T))}.
\end{align*}
We see that the random variable $x \wedge T$, conditional on $\tau^{\star} \ge T$, has a Lebesgue density
\[ h(t) = \frac{\beta(t) \exp(-B(t)) (1 - F(T-t))}{ \int_0^T \beta(r) \exp(-B(r)) (1-F(T-r)) \, d r + \exp(-B(T))} \]
on $[0,T]$, and an atomic component of mass
\[ 1 - \int_0^T h(t) \, d t  = \frac{\exp(-B(T))}{\int_0^T \beta(r) \exp(-B(r)) (1-F(T-r)) \, d r + \exp(-B(T))} \]
at $T$. This measure is equal to the measure in  equation~\eqref{eq: target measure not notified}.

The measure in equation~\eqref{eq: target measure notified} of the infection times for notified individuals is easy to derive as
\[
P(X < t \, | \, \tau^\star = c) =  \begin{cases} 0 & \quad t< 0 ,\\
C \int_0^t g(s) f(c - s) \dd s & \quad  0 \le t < c, \\
1 & \quad c \le t, \end{cases}
\]
for some constant $C$, giving the desired result.

\subsection{Computing reflections times}\label{app: Computing reflections times and probabilities to cross discontinuities}

The target measure in equation~\ref{eq: target sinr} can be rewritten as 
\[
\mu(\dd x) \propto L(x) \mu_0(\dd x)
\]
where
\begin{equation*}
  L(x) \propto \exp\left (-\sum_{i= 1}^N B_i(x) \right) \left(\prod_{i\in\cN_T} f(\tau_i^\star - x_i)\beta_i(x)\right)\left(\prod_{i\in\cN_T^c} (1-F(T - x_i))\beta_i(x)\right)
\end{equation*}
and
\begin{equation*}
    \label{eq: reference measure}
     \mu_0(\dd x) = \left(\prod_{i \in \cN_T}\left( \ind_{(0\le x_i\le T)} \dd x_i + \kappa_i\delta_T(\dd x_i)\right) \right) \left(\prod_{i \in \cN^c_T}\ind_{(0\le x_i\le \tau_i^\star)} \dd x_i \right).
\end{equation*}

In the experiments in Section~\ref{sec: epidemiology} we assumed that $f$ and $F$ are respectively the density and the distribution of an exponential r.v. with parameter $\beta$. Then $\partial_{x_i}\log(f(\tau^\star_i - x_i)) = \partial_{x_i}\log(1 - F(T - x_i)) = \beta$. 

The first event time of the $i$th clock of the Zig-Zag process is a Poisson clock with rate $\beta_{i, \mathrm{b}} = (v_i \partial_{x_i}(-\log L(x)))^+$ where
\begin{align*}
\partial_{x_k} (-\log L(x)) &= \sum_j \partial_{\tau_k} B_j(x)  - \sum_{i \in \cN_T} \partial_{x_k} \log f(\tau^\star_i - x_i) - \sum_{i \in \cN_T^c} \partial_{x_k} \log ( 1-  F(T - x_i)) \\
&= \sum_j \partial_{x_k} B_j(x) - \beta
\end{align*}
where
\[
B_j(x) = \sum_{i \ne j} \int_0^T \beta_{i,j}(x[j; t]) \dd t =  \sum_{i \ne j} C_{i,j}((\tau_i^\star \wedge x_j - x_i \wedge x_j ) + \gamma(\tau_i^\circ \wedge x_j - \tau_i^\star \wedge x_j).
\]
Hence the partial derivative of the negative log-likleihood is
\[
\partial_{x_k} (-\log L (x)) = 
\sum_i \partial_{x_k}  B_i(x) - \beta = \sum_{i \ne k} \partial_{x_k} \left(B_{k,i}(x) + B_{i,k}(x)\right) - \beta = \sum_{i \ne k} \cG_{i,k}(x) - \beta
\]
with
\[
\cG_{i,k}(x) = \begin{cases}
    - C_{k,i} & \tau_k < x_i,\\
    C_{i,k}  & x_i < x_k < \tau_i^\star,\\
    \gamma C_{i,k} & \tau_i^\star < x_k < \tau_i^\circ, \\
    0 & \tau_i^\circ < x_k. 
\end{cases}
\]

 Conditioned on the process not hitting a boundary (a discontinuity), the rates $t \mapsto \lambda_{i,\rm{b}}(\phi(t, z)), \, i=1,2,\dots,d$ (equation~\ref{eq: rates zz}) of the Zig-Zag sampler are constant. Hence, if the process is at $z \in E$, we can efficiently simulate the first reflection time simply as $\tau = \min(\tau_1,\tau_2,\dots,\tau_d)$, where  $\tau_i \sim \text{IPP}(t \mapsto \lambda_{i, \rm{b}}(\phi(z, t))$, for $i = 1,2,\dots,d$.

\section[Teleportation rules]{Teleportation rules for hard-sphere models}\label{app: hard-sphere model}
Recall that we aim to sample from the measure 
\[
\mu(\dd x) \propto \exp(-\sum_{i=1}^N\Psi_0(x^{(i)})) \ind_{A}(x) \dd x, \quad \Psi \in \cC^1(\RR^d)
\]
 with  $A = \bigcap_{i =1}^N \bigcap_{\substack{j = 1,2,\dots,N,\\j \ne i}} A_{i,j}$ and 
    \[
    A_{i,j} = \{ x \in \RR^{dN} \colon \|x^{(i)} - x^{(j)}\| \ge (r_i +r_j)\}.
\]
We define the boundary for the process as
\[\partial E^+ = \{ (x,v) \in \partial \Omega\times \RR^{dN} \colon  \langle n(x), v\rangle > 0 \},\] 
where $\partial \Omega  = \bigcup_{i =1}^N \bigcup_{\substack{j = 1\\j \ne i}}^N \partial \Omega_{i,j}$ and
\[
\partial \Omega_{i,j} = \{ x \in \RR^{dN} \colon |x^{(i)} - x^{(j)}| = (r_i +r_j)\}.
\]
We consider only deterministic teleportation rules of the form $\cT(x, \cdot) = \delta_{\kappa(x)}(\cdot)$ for $(x,v) \in \partial E^+$ and informally discuss 3 different choices for $\kappa$, each one being efficient in different scenarios. Here for efficient teleportation rules we mean those that accept the new proposed points with high probability. To that end, we want the teleportation rule 
\begin{itemize}
    \item to swap the centers of two colliding hard-spheres and, after that, to move their center as little as possible to have high acceptance probability, 
    \item to move the total volume of hard-spheres as little as possible in order to reduce the probability that some hard-spheres overlap after teleportation.
\end{itemize}
In the following, we let $(x,v) \in \partial E^+$ with $|x^{(i)} - x^{(j)}| = r_i + r_j$ for some  $i,j$.
\subsection{Swap the centers}\label{app: Swap the centers}
Set, for $\ell = 1,2,\dots, N$
\[
[\kappa(x)]^{(\ell)} = \begin{cases}
    x^{(j)} &\text{ if } \ell = i\\
    x^{(i)} &\text{ if } \ell = j\\
    x^{(\ell)} &\text{ otherwise.}\\
    \end{cases} 
\]
This teleportation swaps $x^{(i)}$ with $x^{(j)}$ (see Figure~\ref{fig: some teleportations}, top panels) and has the advantage that if $\kappa(x) \in A$, then the new point is accepted with probability 1 as $\alpha(x,\kappa(x)) = 1$. 

When the size of two hard-spheres substantially differs, as it is the case when  $r_i \approx 0$ and $r_j$ is large, the large hard-sphere has to be moved as much as the small hard-sphere so that the total volume of hard-spheres which is moved when applying $\kappa$ is large. In the scenario of many hard-spheres surrounding these two hard-spheres, this teleportation rule leads to  invalid proposed configurations with $\kappa(y) \notin A$ with high probability. 

We describe another teleportation rule which is effective when $r_i \approx 0$ and $r_j$ is large.
\subsection{Move only the smaller hard-sphere}\label{app: Move only the smaller ball}
When $r_i\ll r_j$, we propose to move only the smaller hard-sphere by setting, for $\ell = 1,2,\dots, N$
\[
[\kappa(x)]^{(\ell)} = \begin{cases}
    2x^{(j)} - x^{(i)} &\text{ if } \ell = i  \\
    x^{(\ell)} &\text{ otherwise.}
\end{cases} 
\]
see Figure~\ref{fig: some teleportations}, middle panels for an illustration.
The aim here is to not move the larger hard-sphere and therefore maximize the probability that $\kappa(x) \in A$ when the two hard-spheres are surrounded by other hard-spheres. This comes at the cost of having possibly low probability to accept the new points (since $x \mapsto \Psi(x)$ is not constant).

Finally we introduce a third teleportation rule, which is the one used in Section~\ref{sec: teleportation} and can bee seen as a compromise between the teleportation rules described in Section~\ref{app: Swap the centers} and Section~\ref{app: Move only the smaller ball}.
\subsection{Move the smaller hard-sphere more than the larger one}
For $\ell = 1,2,\dots, N$, we set 
\[
[\kappa(x)]^{(\ell)} = \begin{cases}
     x^{(j)} + \frac{x^{(i)} - x^{(j)}}{r_i + r_j}(r_i - r_j) &\text{ if } \ell = i\\
    x^{(i)} + \frac{x^{(j)} - x^{(i)}}{r_i+ r_j}(r_j - r_i)  &\text{ if } \ell = j \\
    x^{(\ell)} &\text{ otherwise,}\\
\end{cases} 
\]
for $\alpha \in [0,1]$.
This teleportation rule moves the smaller hard-sphere more and the larger hard-sphere in order to increase the probability that $\kappa(x) \in A$ while  trying to move as little as possible the center of the hard-spheres, see Figure~\ref{fig: some teleportations}, bottom panel. Notice that, if $r_i = r_j$, this teleportation coincides with the teleportation in Section~\ref{app: Swap the centers}.
\begin{figure}
    \centering
    \includegraphics[width = 0.9\textwidth]{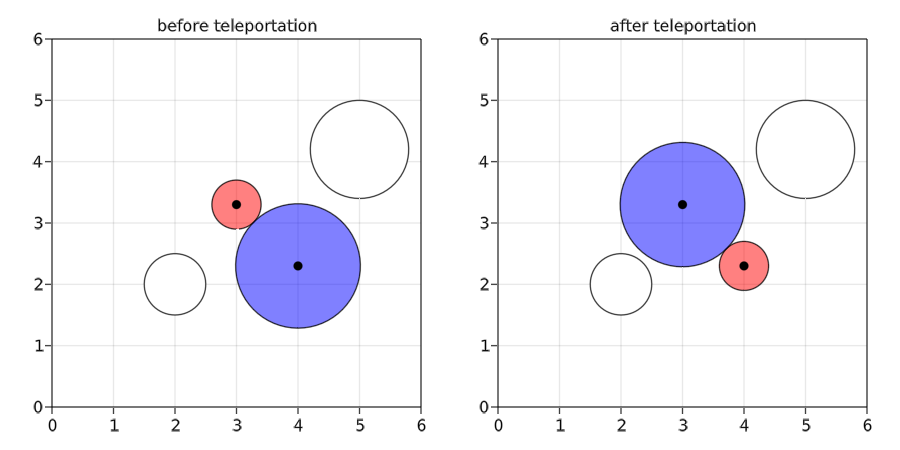}
    \includegraphics[width = 0.9\textwidth]{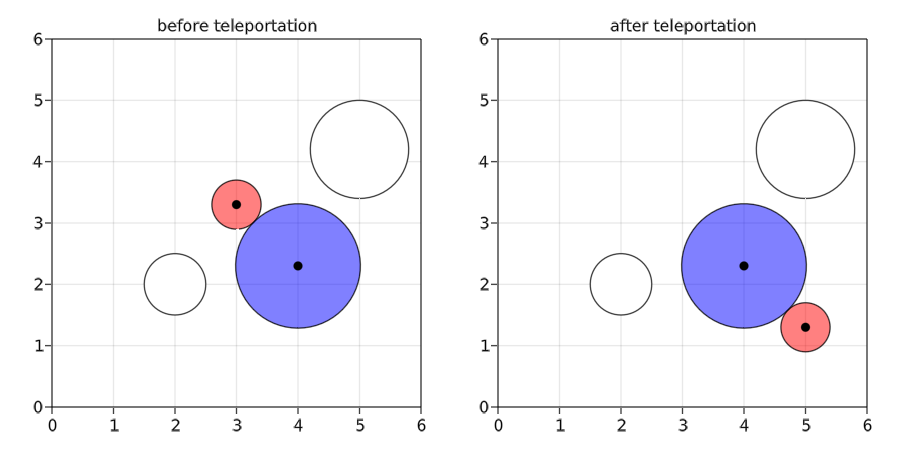}
    \includegraphics[width = 0.9\textwidth]{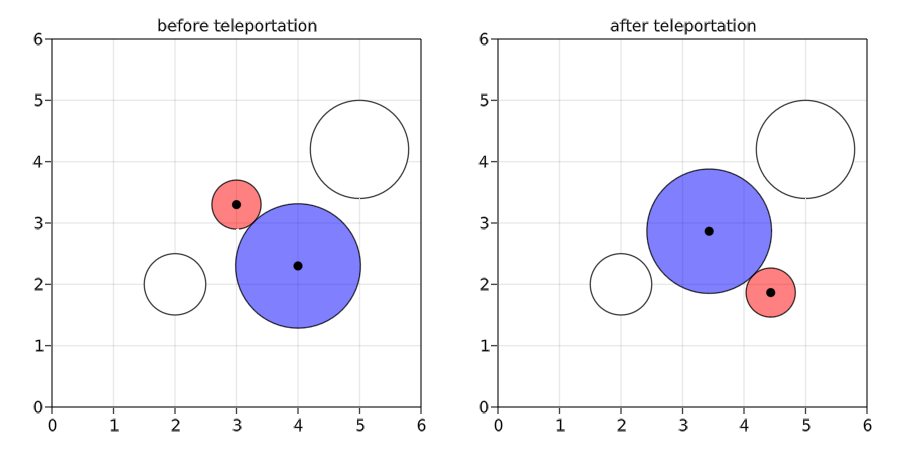}
    \caption{Illustration of 3 different teleportation rules (from Top to Bottom panel). Left panels: configuration before teleportation; right panels: configuration after teleportation. Top panels: teleportation by swapping the centers of 2 hard-spheres. Middle panels: teleportation by moving only the smaller hard-sphere. Bottom-panel: teleportation which moves more the smaller hard-sphere than the bigger one.}
    \label{fig: some teleportations}
\end{figure}

\end{document}